\newcommand{\beq}{\begin{eqnarray} }
\newcommand{\eeq}{\end{eqnarray} }
\newcommand{\Beq}{\begin{eqnarray*} }
\newcommand{\Eeq}{\end{eqnarray*} }
\newcommand{\Bmat}{\left(\begin{matrix}}
\newcommand{\Emat}{\end{matrix}\right)}
\newcommand{\up}{\uparrow}
\newcommand{\dn}{\downarrow}
\newcommand{\bit}{\begin{itemize} }
\newcommand{\eit}{\end{itemize} }
\newcommand{\ben}{\begin{enumerate} }
\newcommand{\een}{\end{enumerate} }
\newcommand{\veps}{\varepsilon}
\useunder{\uline}{\ul}{}
\newcommand{\mi}{\mathrm{i}}
\newtheorem{theorem}{Theorem}
\newtheorem{lemma}{Lemma}
\newtheorem*{proof}{Proof}
\newsavebox{\@brx}
\newcommand{\llangle}[1][]{\savebox{\@brx}{\(\m@th{#1\langle}\)}%
  \mathopen{\copy\@brx\kern-0.5\wd\@brx\usebox{\@brx}}}
\newcommand{\rrangle}[1][]{\savebox{\@brx}{\(\m@th{#1\rangle}\)}%
  \mathclose{\copy\@brx\kern-0.5\wd\@brx\usebox{\@brx}}}
\newcommand{\RNum}[1]{\uppercase\expandafter{\romannumeral #1\relax}}
\newcommand{\B}{\color{blue}}
\newcommand{\C}{\mathcal{C}}
\newcommand{\cP}{\mathcal P}
\newcommand{\T}{\mathcal T}
\newcommand{\m}{D'}
\newcommand{\tp}{\tilde{\mathcal P}}
\newcommand{\p}{\tilde{\mathcal P}}
\begin{document}

\title{\large \bf Representation Theory for Massless Quasiparticles in Bogoliubov-de Gennes Systems}

\author{Arist Zhenyuan Yang}
\affiliation{Department of Physics and Beijing Key Laboratory of Opto-electronic Functional Materials and Micro-nano Devices, Renmin University of China, Beijing, 100872, China}
\affiliation{Key Laboratory of Quantum State Construction and Manipulation (Ministry of Education), Renmin University of China, Beijing, 100872, China}
\author{Zheng-Xin Liu}
\email{liuzxphys@ruc.edu.cn}
\affiliation{Department of Physics and Beijing Key Laboratory of Opto-electronic Functional Materials and Micro-nano Devices, Renmin University of China, Beijing, 100872, China}
\affiliation{Key Laboratory of Quantum State Construction and Manipulation (Ministry of Education), Renmin University of China, Beijing, 100872, China}

\date{\today}

\begin{abstract}

Linearly dispersive gapless quasiparticles can appear at general momentum points or on high symmetry momentum lines of superconductors due to topological reasons such as K theory or symmetry indicators theory. However, the zero modes associated with these quasiparticles are generally `accidental’ from symmetry point of view. In this work, we apply projective representation (rep) theory to analyze the bulk gapless quasiparticles in BdG systems. Different from the description of semimetals, the particle-hole `symmetry’ requires special treatment since it anti-commutes with the BdG Hamiltonian. Accordingly the notion of `simple irreducible reps (irreps)’ and `composite irreps’ are introduced to label the energy modes.  
Without charge conjugation symmetry (unitary symmetry that commutes with the Hamiltonian), no robust zero modes exist at any fixed momentum point in the bulk. However, robust zero modes at certain high-symmetry momentum points can be protected by (effective) charge conjugation symmetries, resulting in gapless quasiparticles with linear, quadratic, or higher-order dispersions. The 
low-energy properties of the quasiparticles,  including the dispersions and responses to external probe fields, are dictated by the reps carried by these 
zero modes
from the effective \( k \cdot p \) theory. Our theory provides a framework to classify nodal superconductors/superfluids/quantum spin liquids with specified (projective) symmetry group, and sheds light on the realization of Majorana-type massless quasiparticles in condensed matter physics.

\end{abstract}

\maketitle

\section{Introduction}\label{sec:intro}

In condensed matter physics, quasiparticle excitations emergent in the long-wavelength limit behave like elementary particles in high-energy physics. For instance, massless Dirac fermions or Weyl fermions can appear in semimetals at specific points of the first Brillouin zone (BZ) if the crystal has certain space group symmetries \cite{WanTurVis11Topologica, FanGilDai12Multi-Weyl, WanWenWu13Three-dime, YanNag14Classifica, WenFanFan15Weyl-semim, RuaJiaYao16Symmetry-p, WuYuZho20Higher-ord}. These gapless quasiparticles lead to observable physical phenomena, such as negative magnetic resistance and Fermi arc spectra in surface excitations \cite{NohHuaLey17Experiment,GuoYouYan19Observatio,YanXiaSun19Observatio}. Furthermore, new types of quasiparticles without counterparts in high-energy physics can exist in lattice systems \cite{ZyuBur12Topologica,HosQi13Recent-dev,XuZhaZha15Structured}. These gapless quasiparticles are characterized by multiple degeneracies in the energy spectrum, carrying projective representations of their symmetry groups \cite{YanLiu17Irreducibl}. At high-symmetry points in the BZ, these degeneracies are ensured by nontrivial irreducible (projective) representations of the little co-group, forming massless quasiparticles \cite{BraDav68Magnetic-g, BraCra10The-mathem, Bir12Theory-of-, Ham12Group-theo}. For example, at the $K$ and $K'$ points in graphene \cite{graphene}, two-fold degeneracy is protected by a two-dimensional (2-D) representation of the group $\mathscr{C}_{3v}$, resulting in Dirac-like quasiparticles with linear dispersion. Gapless quasiparticles can also arise from symmetry-protected level crossings of two bands along certain symmetric lines in the BZ, where they carry reducible (projective) representations of the little co-group \cite{YouKan15Dirac-semi, WenFanFan16Topologica, BouBla17Global-ban, ChaXuWie17Unconventi, YuWuZha19Circumvent, XiaYeQiu20Experiment}. A comprehensive description and classification of quasiparticles require both the projective representations of symmetry groups and the symmetry invariants that label the classes of these representations \cite{CheGuLiu13Symmetry-p,yang2020unlocking, YanFanLiu21Symmetry-p, CheZhaYan23Classifica}. With the use of symmetry invariants, many new types of quasiparticles have been discovered in magnetically ordered systems with weak spin-orbit coupling \cite{WuTanWan20Exhaustive,YanLiuFan21Symmetry-i,GuoPenLiu22Symmetry-e,TanWan22Complete-c}, whose symmetry groups are known as spin space groups \cite{SSG1,SSG2,SSG3}. Besides isolated momentum points, nodal line and surface structures can form if multiple degeneracies are ensured along a line or surface in the BZ \cite{Vol93Supercondu,BurHooBal11Topologica, PhiAji14Tunable-li, FanCheKee15Topologica, KimWieKan15Dirac-line, MulUchGla15Line-of-Di, BzdWuRue16Nodal-chai, Eza16Loop-nodal, ZhaYuWen16Topologica, ChaYee17Weyl-link-, YanBiShe17Nodal-link}. The dispersion of quasiparticles around these nodal points or lines is described by the effective Hamiltonian known as the $k\cdot p$ model \cite{Kan66The-k-p-Me, ChuCha96k-p-method,Gre18Identifyin, JiaFanFan21A-k-p-effe, YanYanFan21A-Hamilton}. This Hamiltonian, along with the response theory of quasiparticles to external probe fields such as electric fields, magnetic fields, and strain, can be derived from the projective representations associated with the quasiparticles.

The quasiparticles in semimetals discussed above have $U(1)$ charge (or particle number) conservation symmetry. Another class of quasiparticles are of Bogoliubov type, where fermions pair to form Cooper pairs, breaking the $U(1)$ symmetry down to $Z_2$\cite{RMPsc1997}. In the BdG Hamiltonian of SCs/SFs or QSLs, fermion pairing introduces a particle-hole symmetry $\mathcal{P}$, ensuring that the energy spectrum is symmetric around zero energy. Gapless Bogoliubov quasiparticles near the Fermi level (zero energy) are particularly interesting as they determine the system's low-energy physical responses. Due to $\mathcal{P}$, the degenerate points of gapless quasiparticles are exactly at zero energy, termed zero modes. For instance, $d$-wave SCs on a square lattice contain zero modes at four points along the diagonal lines of the BZ, around which linearly dispersive quasiparticles arise. Notably, if charge conjugation symmetry $\mathcal{C}$ is preserved, Bogoliubov quasiparticles become Majorana types, which are their own antiparticles. The `antiparticle' and particle are related by charge conjugation symmetry $\mathcal{C}$ (or `particle-antiparticle exchange symmetry') \cite{ColmanBook}. Unlike in high-energy physics, where Majorana fermions are massive, in condensed matter physics, Majorana excitations at high-symmetry points of the BZ can be massless (gapless) in $\mathcal{C}$-symmetric SCs or SFs. Although Majorana zero modes (gapless chiral Majorana edge states) can appear on the boundary of topological SCs/SFs\cite{Topofirstprinciple2023}, in the present work, we focus on the bulk spectrum.

Previous studies of bulk zero modes have mainly focused on topological aspects, such as $\mathbb{Z}_2$ or $\mathbb{Z}$ invariants or symmetry indicators \cite{Slager2013,Slager2017,PRL2017Timm-pf,Po2017Sym-BasedIndicators,Bradlyn2017TopoQuanChemistry,Ono2018SIs,sumita2019classification,PRB2020_indicator,PRX2022-AZclass}. A comprehensive theory of gapless quasiparticles in BdG systems, including their physical response to external probe fields, is still lacking. In this work, starting from symmetry groups and their representation theory, we systematically study the mechanisms for the appearance of zero modes in the bulk spectrum of SCs/SFs or QSLs, and provide methods to determine the physical properties of gapless quasiparticles. We prove that without (effective) charge conjugation symmetry, zero modes at a given momentum point can be adiabatically removed. With (effective) charge conjugation symmetry, zero modes can stably exist at high-symmetry points in a band representation, forming gapless quasiparticles with linear, quadratic, or higher-order dispersions. We define two types of zero modes—irreducible and reducible—based on the projective representations of the little co-group. This allows us to determine the degeneracy of zero modes and systematically construct the $k\cdot p$ model and response matrix of gapless quasiparticles to probe fields. Finally, we propose a classification scheme for point-nodal SCs/SFs/QSLs for a given symmetry group. Our conclusions apply to all Altland–Zirnbauer symmetry classes of BdG systems.\cite{Zir96Riemannian, AltZir97Nonstandar}. Furthermore, the application of representations differs from conventional symmetry groups due to $\mathcal{P}$ anti-commuting with the Hamiltonian.

The rest of the paper is organized as follows: In Section \ref{sec:zeromodes}, after a brief introduction to symmetry groups for fermionic BdG Hamiltonians, we derive the conditions for the appearance of zero modes in the bulk spectrum of SCs/SFs or QSLs. In Section \ref{sec:kdotpmodel}, we provide an efficient method for obtaining the effective $k\cdot p$ model and the physical response to external probe fields for gapless quasiparticles. Concrete lattice models are provided in Section \ref{Sec:ModelSection} to illustrate the results from previous sections. Section \ref{sec:conc} discusses the classification scheme for point-node SCs/SFs/QSLs and summarizes the main findings.

\section{Zero modes in the Bulk spectrum of general $\text{BdG}$ systems} \label{sec:zeromodes}

In metals and insulators, the symmetry group is commuting with the Hamiltonian. 
%For instance, at momentum point $k$ the Hamiltonian $\mathscr H_k$ is commuting with the little co-group $G_k$, namely $\mathscr H_k G_k=G_k\mathscr H_k$, such that the eigenspace of each energy level $\veps_k$ generally carry an irrep of $G_k$.  
However, BdG systems like SCs/SFs are characterized by their particle-hole symmetry ${\mathcal P}$ which is anti-commuting with the Hamiltonian $\mathcal P\mathscr H=-\mathscr H\mathcal P$ and maps the eigenspace of $\veps$ to that of $-\veps$, 
$
\mathscr H\mathcal P|\veps\rangle = -\mathcal P\mathscr H|\veps\rangle = -\veps\mathcal P|\veps\rangle,
$
namely $\mathcal P|\veps\rangle \propto |-\veps\rangle$. Therefore, generally we need to treat the direct sum of the eigenspaces of $\{\veps, -\veps\}$.

\subsection{Symmetry operations}

At the mean field level, the Hamiltonian of a SC/SF/QSL reads
\beq \label{HSC}
H = \sum_{i\neq j} \Big( t_{ij} c_i^\dag c_j + \Delta_{ij} c_i^\dag c_j^\dag + {\rm h.c}\Big) + \sum_i \lambda_i c_i^\dag c_i,
\eeq
which can be written in matrix form in the Nambu bases $\Psi^\dag = [C^\dag, C^T]$ with $C^\dag=[c_1^\dag, c_2^\dag, ..., c_N^\dag]$ and $C^T = [c_1, c_2, ..., c_N]$ ($N$ is the total number of the degrees of freedom), namely 
$$H = \Psi^\dag \mathscr H \Psi + h_0,$$ 
with $h_0 = {1\over2} \sum_i \lambda_i$, $\mathscr H^\dag =\mathscr H$ and ${\rm Tr}\ \mathscr H=0$. In the following, we will briefly introduce the symmetry operations of the Hamiltonian (\ref{HSC}) and categorize them into three distinct types.

(1) {\it Space-time symmetry} described by space groups or magnetic space groups. \\
If $g$ is a spatial symmetry operation, then $$\hat g C^\dag \hat g^{-1} = C^\dag u(g),$$ namely, $\hat g c_i^\dag \hat g^{-1} = \sum_{j=1}^N u_{ji}(g) c_j^\dag$, where operations with $\hat{} $ acts on the Hilbert space. The Hamiltonian is invariant under the action of $g$,
$
\hat g H \hat g^{-1} =H, 
$
or equivalently $U(g)\mathscr H U^{-1}(g) = \mathscr H$ with $$U(g)=\Bmat u(g)&\\& u^*(g)\Emat.$$
For a SC without magnetic order, the symmetry group is a direct product of space group and the time reversal group $Z_2^{\T}=\{E, \T\}$, namely the type-II magnetic space group. Here $\T$ acts as 
$\hat \T C^\dag \hat \T^{-1} = C^\dag u(\T) K$
and $U(\T) K\mathscr H KU^{-1}(\T)= U(\T) \mathscr H^* U^{-1}(\T) = \mathscr H$ with $$U(\T)K =\Bmat u(\T)&\\& u^*(\T)\Emat K.$$ If the system contains magnetic order, then the symmetry group is generally a magnetic space group of type-I, type-III or type-IV. 

(2) Internal symmetries in the spin and charge sectors. \\
For spin-1/2 fermions, the set of spin operations form a group $SU(2)_s$, and the set of charge operations form another group $SU(2)_c$. The action of the two $SU(2)$ groups can be clearly seen by performing an unitary transformation to the complete Nambu bases $\Psi^\dag =[C_\up^\dag, C_\dn^\dag, C_\up^T, C_\dn^T]$ into $\bar\Psi^\dag = [C_\up^\dag, C_\dn^\dag, C_\dn^T, -C_\up^T]$ where $C_{\sigma}^\dag =[c_{1\sigma}^\dag, ..., c_{L\sigma}^\dag]$ with $\sigma=\up,\dn$ and $L$ the number of sites. 

In the new bases $\bar\Psi^\dag$, a spin rotation in the $SU(2)_{\rm s}$ group acts as
\beq\label{SU2s}
\bar\Psi^\dag \to \bar\Psi^\dag I_2\otimes \exp\{-\mi{\pmb \sigma\over 2}\cdot \pmb n\theta\}\otimes I_{L},
\eeq
where the three Pauli matrices $\sigma_{x,y,z}$ are the generators of the spin transformations and $\pmb n$ is the axis of the spin rotation and $\theta\in[0,2\pi]$ is the rotation angle.  If the system is free of spin-orbit coupling(SOC), then the symmetry group of the system is then a direct product group of a space group (or a magnetic space group) and $SU(2)_{\rm s}$. For systems with non-negligible SOC, the spatial point group operations also act on the spin degrees of freedom.  The resulting space groups are known as double groups. 

On the other hand, a charge operation in the $SU(2)_{\rm c}$ group transforms as
\beq\label{SU2c}
\bar\Psi^\dag \to \bar\Psi^\dag \exp\{-\mi{\pmb \tau\over 2}\cdot \pmb m\varphi\} \otimes I_{2L},
\eeq
where the three Pauli matrices $\tau_{x,y,z}$ are the generators of charge transformations, and $\pmb m$ is the axis of the charge rotation and $\varphi\in[0,2\pi]$ is the rotation angle. In electronic systems the $SU(2)_{\rm c}$ charge group generally breaks down to its subgroup. 

Notice that the $SU(2)_s$ and $SU(2)_c$ groups are not completely independent, they share the same center $Z_2^f=\{E,P_f\}$ which is generated by the fermion parity $P_f=e^{-\mi2\pi {{\pmb \tau}\over 2}\cdot \pmb m} \otimes I_{2}=I_2\otimes e^{-\mi2\pi {{\pmb \sigma}\over 2}\cdot \pmb n}$. As a consequence, the internal symmetry operations form a $SO(4)$ group.

%Sometimes it is convenient to go back to 
In the present work we will adopt the original Nambu bases $\Psi^\dag =[C_\up^\dag, C_\dn^\dag, C_\up^T, C_\dn^T]$, where a spin $SU(2)_{\rm s}$ rotation (\ref{SU2s}) is given by
\beq\label{SU2s0}
\Psi^\dag \to \Psi^\dag \exp\left(-\mi{ \tilde{\bm\sigma}\over 2}\cdot \bm{n}\theta\right)\otimes I_{L},
\eeq
with the three generators
$$
\tilde\sigma_x =\tau_z\otimes\sigma_x,\ \tilde\sigma_y =I_2\otimes\sigma_y,\ \tilde\sigma_z =\tau_z\otimes\sigma_z.
$$
A charge $SU(2)_{\rm c}$ transformation (\ref{SU2c}) then takes the following form
\beq\label{SU2c0}
\Psi^\dag \to \Psi^\dag \exp\{-\mi{\tilde{\pmb \tau}\over 2}\cdot \pmb m\varphi\} \otimes I_{L},
\eeq
with
$$
\tilde\tau_x=-\tau_y\otimes\sigma_y, \ \tilde\tau_y=\tau_x\otimes\sigma_y,\ \tilde\tau_z=\tau_z\otimes I_2.
$$

An important charge operation is the {\it charge conjugation} $\C$ that exchanges the creation operators with the annihilation operators, such as $\bar\C=e^{-\mi{ {\tau}_x\over2}\pi}$ or $\bar\C=e^{-\mi{ {\tau}_y\over2}\pi}$ in the new bases $\bar\Psi^\dag$. Generally a charge conjugation is of order 4 with $\hat \C^2=\hat P_f$. In the original Nambu bases $\Psi^\dag$ the charge conjugation takes the form of $\C=e^{-\mi{ \tilde{\tau}_x\over2}\pi}$ or $\C=e^{-\mi{ \tilde{\tau}_y\over2}\pi}$. A Hamiltonian is said to have charge conjugation symmetry if it is invariant under the charge conjugation transformation, namely
$
\hat {\mathcal C}H \hat{\mathcal C}^{-1} = \Psi^\dag  \C \mathscr H  \C^\dag \Psi =H,
$
which requires
\beq\label{CCS}
 \C \mathscr H  \C ^\dag =\mathscr H. 
\eeq
Since the charge conjugation exchanges creation operators with annihilation operators, to preserve the $\C$ symmetry we must have $\lambda_i=0$ for all sites. Therefore, charge conjugation symmetry is a very stringent constraint in condensed matter systems since it requires the chemical potential to be zero everywhere. If a charge conjugation is combined with operations in the spin and/or lattice sectors to form a symmetry operation, then we call it an effective charge conjugation symmetry.

For fermions carrying integer spin (such as the spinless fermions discussed in section \ref{sec:spinless}) the charge group is simply $O(2)\cong U(1)\rtimes Z_2^{\C}$, where $U(1)=\{e^{-\mi\tau_z\theta}; \theta\in[0,2\pi)\}$ and $Z_2^{\C}=\{E,\tau_x\}$ is generated by the charge conjugation $\C=\tau_x$ \cite{LiuZhouNg10}.

(3) The {\it particle-hole `symmetry'}. \\
Notice that in the complete Nambu bases one has $\Psi^\dag=\Psi^T \Gamma_x$ with $\Gamma_x = \tau_x \otimes I_N$. Furthermore, the transpose of a fermion bilinear Hamiltonian gives rise to a minus sign (if $h_0=0$), so any BdG system has a particle-hole `symmetry'
\beq\label{PHS}
\Gamma_x K \mathscr H K \Gamma_x =- \mathscr H.
\eeq
Owing to (\ref{PHS}), the energy spectrum has a reflection symmetry centered at $E=0$ and the eigenvalues $\pm \varepsilon$ appear in pairs. The particle-hole `symmetry' operation $\hat{\mathcal P} = \Gamma_x K$ in (\ref{PHS}) is {\it anti-unitary} in the first quantization formalism. Since $\hat\cP^2=1$, the corresponding symmetry class is called the D-class. 

For spin-1/2 systems with spin rotational symmetry,  we adopt the reduced Nambu bases $\psi^\dag =[C_\up^\dag, C_\dn^T]$ in which the particle-hole symmetry is redefined as $\hat {\mathcal P} =\mi\Gamma_yK$ with $\Gamma_y = \mi\tau_y \otimes I_L$ and $\hat {\mathcal P}^2=-1$. This symmetry class is called the C-class. In the reduced Nambu bases, the $SU(2)_s$ group acts trivially and the $SU(2)_c$ operations act as 
\beq\label{SU2cr}
\psi^\dag \to \psi^\dag \exp\{-\mi{\pmb \tau\over 2}\cdot \pmb m\varphi\} \otimes I_{L}.
\eeq

Since $\hat{\mathcal P}$ is anti-unitary, it transforms momentum $\pmb k$ to $-\pmb k$. Hence in momentum space, $\hat{\mathcal P}$ is not necessarily a symmetry operation. However, the combination of $\hat{\mathcal P}$ and spatial inversion (or time reversal, et al) operation form an effective particle-hole symmetry at momentum point  $\pmb k$ which anti-commute with the Hamiltonian $\mathscr H_k$. In late discussion, we denote the effective particle-hole symmetry as $\p$ with $\p=\mathcal I\hat{\mathcal P}$ (or $\p=\mathcal T\hat{\mathcal P}$, et al).  Since $\{\tp, \mathscr H_k\}=0$, generally $\tp$ and $\mathscr H_k$ cannot be diagonalized simultaneously.\\

A symmetry operation that commutes with the BdG Hamiltonian can be either the first kind or the second kind, or a combination of them. We write such an operation in a general form as $\alpha = \{s_\alpha, c_\alpha || p_\alpha |t_\alpha\}$, where $s_\alpha, c_\alpha, p_\alpha$ respectively denote the spin, charge and lattice point group operations and $t_\alpha$ is a fractional translation.  If the system contains spin-orbit coupling, then the spin rotations $s_\alpha$ are locked with the corresponding lattice point operations $p_\alpha$. %is known as the effect of
%Magnetic space groups with spin-orbit coupling, $i.e.$ spin rotation and the space-time operations are locked, (or more broadly, spin-space groups with unlocked spin and space-time operations) also serve as examples of general fermionic groups.
In later discussion, we will denote the group formed by the combined operations $\alpha$ as $G_f=\{\alpha\}$, dubbed the `fermionic' group\cite{GuPhysRevX2018,yang2023pairingsymmetry}.    Double magnetic space groups (MSGs)\cite{BraCra10The-mathem} and double spin space groups  (SSGs)\cite{brinkman1966theory, SSG1, SSG2, SSG3} are examples of $G_f$ which  only contain one nontrivial charge operation $P_f$. 
 
When including the particle-hole `symmetry', the full group will be denoted as 
$$F_f=G_f+\mathcal P G_f $$ 
with $\mathcal P^2=E$ or $\mathcal P^2=P_f$. Notice that the fermionic group $G_f$ is generally an extension of the space group of the lattice which is denoted as the `bosonic' group $G_b$. For instance, in section \ref{sec:p4gmz2T}, we will study the wallpaper group $G_b=p4gm\times Z_2^{\mathcal{T}}$ with generators $C_4, \T, \{\mathcal M_x|{1\over2},{1\over2}\}$. When ignoring the spin rotation symmetry and introducing the reduced Nambu bases, the corresponding fermionic group $G_f$ is generated by $C_4'=(E||C_4), \T'=(I K ||\T)$ and $\mathcal M_x'=\{e^{-\mi{\tau_z\over2}\pi}||\mathcal M_x|{1\over2},{1\over2}\}$ satisfying ${\T'}^{2}=1, {\mathcal M_x'}^{2}=\{P_f||1,0\}$ and $G_f/Z_2^f = G_b$ (here the $'$ stands for lattice operations dressed by spin or/and charge operations).

In the present work, we will not dwell on the structures of all possible fermionic groups $G_f$ for a given $G_b$, and we just distinguish two types of them: in the first type, $G_f$ is `diagonal' in the particle-hole sector (such as the double MSGs/SSGs); 
in the second type, $G_f$ contains symmetry operations that are non-diagonal in the particle-hole sector (such as projective symmetry groups of $Z_2$ QSLs).

In the following discussion, we will focus on the spectrum at a momentum $\bm{k}$ that is invariant under an %certain 
effective particle-hole symmetry $\tilde{\mathcal{P}}$,  since zero modes at momentum $\bm{k}$ without this symmetry are unstable when varying the chemical potential. 
Denoting the group of symmetry operations that commute with $\mathscr{H}_k$ as $G_{f}(k)$, then the full little co-group $F_f(k)$ is given by $F_f(k)=G_{f}(k) + \tp G_{f}(k)$. Since $[\mathscr{H}_k, G_{f}(k)]=0$, generally the eigenstates of $\mathscr H_k$ with eigvalue $\veps_k$ carry an irreducible projective representation of $G_f(k)$ (the factor systems $\omega_2(g_1, g_2)$ are given in App.\ref{sec:BandReps}). Noticing that $\tp$ maps $|\veps_k\rangle$ to $|-\veps_k\rangle$, in order to obtain more information of the energy spectrum especially the existence or nonexistence of zero energy modes, we need to consider the full group $F_f(k)$ and its irreps. 

%We will distinguish 
$F_f(k)$ has two kinds of projective irreps (see App.\ref{App.:AA}): \\
(1){\it simple irreps}, which are irreducible when restricted to the subgroup $G_f(k)$; \\
(2){\it composite irreps}, the restricted rep of $G_f(k)$ for a composite irreps is a direct sum of two irreps of $G_f(k)$. 

%From the band rep, one can obtain a projective rep of $F_f(k)$ with factor system $\omega_2(g_1,g_2), g_1,g_2\in F_f(k)$ (see App.\ref{sec:BandReps}). After reduction, 
%Usually, an eigen space of a Hamiltonian carries an irrep of the symmetry group. However, 
%In BdG systems since the `particle-hole symmetry' $\mathcal P$ is anti-commuting with the Hamiltonian, the relationship between the energy eigenspaces and the irreps of the group $F_f$ is subtle. 

%; (3) reducible rep of $F_f(k)$. The energy spectrum is closely related to the irreps of $F_f(k)$. There are
%reducible. %and is

%In the following subsections, we will discuss the relationship between the energy modes and the reps of $F_f(k)$.

%In the following subsections, we will discuss zero energy modes and nonzero energy modes separately. 

\subsection{Robust and accidental Zero modes}\label{SubSec:IrrZeromodes}  %Irreducible

Here we systematically study the zero modes in BdG Hamiltonians. Suppose that $\mathscr H_k$ contain zero modes and assume that the zero modes span a Hilbert space $\mathcal L_0$.  
%We will focus on the spectrum at given momentum $\bm k$. We denote the effective particle-hole operation as $\tilde{\mathcal P}$ and write the `true symmetry group' as $G_{f}(k)$ which is formed by operations that are commuting with $\mathscr{H}_k$.
%Then the full little co-group $F_f(k)$ is formed, with $F_f(k)=G_{f}(k) + \tp G_{f}(k)$.
%Suppose that $\mathscr H_k$ contain zero modes and assume that the zero modes span a Hilbert space $\mathcal L_0$.  
%We first consider the `minimal' set of zero modes, where $\mathcal L_0$ carries projective irrep $D(F_f(k))$ of $F_f(k)$ with factor system $\omega_2(g_1,g_2), g_1,g_2\in F_f(k)$ (see App.\ref{sec:BandReps}).
%
%The restriction of $D(F_f(k))$ to the subgroup $G_f(k)$ can only have two possibilities (see App.\ref{App.:AA}): it is either an irrep or a direct sum of two irreps of $G_f(k)$. If $\mathcal L_0$ also carries an irrep of $G_f(k)$(namely if the restricted rep $D(G_f(k))$ is irreducible), 

(i) If $\mathcal L_0$ carries a simple irrep of $F_f(k)$, namely if $\mathcal L_0$ also carries an irrep of $G_f(k)$, then the zero modes are stable in the sense that they cannot be lifted when keeping all of the symmetries. We call this set of zero modes as `irreducible zero modes'.

%The obtained eigen modes in $\mathcal L_0$ have nonzero energies and are called a set of `irreducible nonzero modes' in later discussion. 
%On the other hand, if $\mathcal L_0$ is a direct sum of two subspaces each carrying an irrep of $G_f(k)$, then the zero modes in $\mathcal L_0$ are accidental and are not symmetry protected. In this case, one can add a small perturbation to lift the zero modes without breaking any symmetry. The obtained eigen modes in $\mathcal L_0$ have nonzero energies and are called a set of `irreducible nonzero modes' in later discussion. 

%However, different irreducible zero modes can couple to each other and form nonzero energy modes, which is described by the following theorem. \\

Although a set of irreducible zero modes carry a simple irrep of $F_f(k)$, the inverse is not necessarily true. Namely, a simple irrep of $F_f(k)$ may not contribute any zero modes to the spectrum because two simple irreps may couple with each other to form nonzero energy modes. So we have the following theorem.\\
% we can see that t. 

\begin{theorem}\label{thm:Allhavepartner}
Every simple projective irrep $D(F_f(k))$ %of a set of zero modes %the zero modes can
has a partner $D'(F_f(k))$ (carrying the same factor system of $D(F_f(k))$) with whom it can couple to form nonzero energy modes if the bases of $D(F_f(k))$ and $D'(F_f(k))$ are both included in the system.
\end{theorem}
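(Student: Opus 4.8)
\emph{Proof proposal.} The plan is to turn the statement into a short algebra problem and then exhibit the coupling explicitly. Since $D$ is \emph{simple}, $\rho:=D|_{G_f(k)}$ is an (ordinary) irrep of $G_f(k)$ on the multiplet space $\mathcal L_D$, while the coset representative $\tp$ acts on $\mathcal L_D$ as an antiunitary operator $\Theta$, with $\Theta\rho(g)\Theta^{-1}$ fixed by the group law and the factor system of $F_f(k)$ and with $\Theta^{2}=\pm\mathbf{1}$ (the sign being $+$ in class D and $-$ in class C). I would show that a partner $D'$ able to couple to $D$ is forced to be equivalent to $D$ itself, and then construct on the two-copy space $\mathcal L_D\oplus\mathcal L_{D'}$ a Hermitian operator commuting with $G_f(k)$, anticommuting with $\tp$, and free of zero eigenvalues; its existence is exactly the assertion that a lone $D$-multiplet of zero modes is not symmetry-protected once its partner is also present.

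First I would pin down the partner. In any Hermitian $\mathscr H$ on $\mathcal L_D\oplus\mathcal L_{D'}$ commuting with $G_f(k)$, the off-diagonal block $B:\mathcal L_{D'}\to\mathcal L_D$ is a $G_f(k)$-intertwiner between $D'|_{G_f(k)}$ and $\rho$, so by Schur's lemma $B=0$ unless $D'|_{G_f(k)}\cong\rho$; thus only a partner restricting to the \emph{same} $G_f(k)$-irrep can couple to $D$ at all. Moreover, for a fixed $\rho$ the extension to $F_f(k)$ with the prescribed factor system --- which exists because $D$ does --- is unique up to equivalence: $\Theta$ is determined by $\rho$ up to a scalar (Schur again), and the residual $U(1)$ phase is removed by conjugation. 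Hence $D'\cong D$, and it suffices to analyze $V=\mathbb{C}^{2}\otimes\mathcal L_D$, on which $G_f(k)$ acts as $I_2\otimes\rho(g)$ and $\tp$ acts as $I_2\otimes\Theta$.

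Next I would enumerate the symmetry-allowed Hamiltonians on $V$. By Schur, $\mathrm{End}_{G_f(k)}(V)=\mathrm{Mat}_2(\mathbb{C})\otimes I_{\mathcal L_D}$, so every $G_f(k)$-invariant Hermitian operator has the form $\mathscr H=M\otimes I_{\mathcal L_D}$ with $M=M^{\dagger}$ a $2\times2$ matrix. Imposing $\{\tp,\mathscr H\}=0$ and using the antilinearity of $\Theta$, which gives $(I_2\otimes\Theta)(M\otimes I)=(\overline M\otimes I)(I_2\otimes\Theta)$ with $\overline M$ the entrywise complex conjugate, reduces the condition to $\overline M=-M$; i.e.\ $M$ is a purely imaginary Hermitian $2\times2$ matrix, $M\in\mathbb{R}\,\sigma_y$. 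Taking $M=t\,\sigma_y$ with $t\neq0$ yields $\mathscr H=t\,\sigma_y\otimes I_{\mathcal L_D}$, which commutes with $I_2\otimes\rho(g)$, anticommutes with $I_2\otimes\Theta$, and has eigenvalues $\pm t$ only (no zero modes). Since $t$ can be tuned continuously from $0$, this is precisely the coupling that lifts the would-be zero modes, and the theorem follows.

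The hard part will be the bookkeeping of the antiunitary $\tp$ together with the $F_f(k)$ factor system, at two points: (i) verifying that the extension of a given $G_f(k)$-irrep to $F_f(k)$ with a fixed factor system is unique up to equivalence, so that ``the partner'' is well defined; and (ii) checking that the final constraint $\overline M=-M$ is insensitive to the sign of $\Theta^{2}$, so that the construction works uniformly in classes C and D. When $G_f(k)$ itself contains antiunitary elements, $\rho$ is a co-representation and one must invoke the real/complex/quaternionic (Dyson three-fold) form of Schur's lemma, with $\sigma_y$ replaced by the appropriate gapping generator; packaging this into a single statement valid in every Altland--Zirnbauer class is the remaining bit of care. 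Everything else is routine.
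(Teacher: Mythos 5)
There is a genuine gap: your argument only covers the case in which $\tp$ is \emph{anti-unitary}, whereas the theorem (and the paper's proof, and Table~\ref{tab:RMNZM}) must also handle unitary $\tp$ (chiral-like, e.g.\ $\tp=\mathcal T\mathcal P$, which is a product of two anti-unitary operations and hence unitary). You model $\tp$ from the outset as an antiunitary $\Theta$ and then derive the constraint $\overline M=-M$ from the antilinearity; for unitary $\tp$ that computation gives instead $(I_2\otimes\Theta)(M\otimes I)=(M\otimes I)(I_2\otimes\Theta)$, so anticommutation forces $M=0$ and your construction produces no gap at all. The deeper error is the claim that ``a partner able to couple to $D$ is forced to be equivalent to $D$ itself.'' For unitary $\tp$ with $G_f(k)$ unitary, the constraint $\tp^2\in G_f(k)$ fixes $D(\tp)$ only up to a sign ($(\lambda D(\tp))^2=\lambda^2 D(\tp)^2$ forces $\lambda=\pm1$, and scalar conjugation cannot change the sign), so there are exactly \emph{two} inequivalent extensions of $\rho$ with the given factor system. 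It is the \emph{inequivalent} one, $D'(\tp)=-D(\tp)$, that couples: on the doubled space $\tp$ is represented by $\mu_z\otimes D(\tp)$ and the gapping Hamiltonian is $\veps_k\,\mu_x\otimes I$, which anticommutes with $\mu_z\otimes D(\tp)$ precisely because the two copies carry opposite signs of $D(\tp)$. The equivalent partner ($\eta_{\tp}=+1$), which your uniqueness argument would select, gives two \emph{uncoupled} irreps, i.e.\ zero modes — the opposite of what the theorem asserts. So your ``uniqueness of the extension'' lemma is false in exactly the case where it is used.

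Where your argument does apply — anti-unitary $\tp$ with $G_f(k)$ unitary — it coincides with the paper's: the partner is the identical copy, the allowed couplings are $M\otimes I$ with $\overline M=-M$, and $\veps_k\,\mu_y\otimes I$ gaps the pair. Your Schur-type observation that a couplable partner must restrict to the same $G_f(k)$-irrep is also correct and is used in the paper. To repair the proposal you need only add the unitary-$\tp$ branch: take $D'(G_f(k))=D(G_f(k))$ and $D'(\tp)=-D(\tp)$, and exhibit $h=\veps_k\,\mu_x\otimes I$ (this choice is real, so it also commutes with $I_2\otimes D(g)K_{s(g)}$ when $G_f(k)$ contains anti-unitary elements, which sidesteps the Dyson-class bookkeeping you were worried about; the remaining mixed case, anti-unitary $\tp$ with anti-unitary $G_f(k)$, reduces to unitary $\tp$ by multiplying $\tp$ with an anti-unitary element of $G_f(k)$). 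Note also that the theorem is purely an existence statement, so an explicit partner and Hamiltonian suffice; the classification of \emph{all} couplable partners that you attempt is the harder content of Sec.~\ref{subsubSec:coupledZMs} and App.~\ref{app:RMNZMs}, not of this theorem.
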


The proof is simple. If $\tp$ is unitary, noticing $\tp^2\in G_f(k)$, one can set $D'(G_f(k))=D(G_f(k))$ and $D'(\tp)=-D(\tp)$. Then the Hamiltonian coupling the two reps $D(F_f(k))$ and $D'(F_f(k))$ reads $h(k) = \veps_k\mu_x \otimes I,$ where $\mu_m, m=x,y,z$ are the Pauli matrices acting on the $D(F_f(k))\oplus D'(F_f(k))$ direct sum sector. If $\tp$ is anti-unitary, one can choose $D'(F_f(k))=D(F_f(k))$, then a possible Hamiltonian that couples the two reps $D(F_f(k))$ and $D'(F_f(k))$ is $h(k) = \veps_k\mu_y\otimes I.$\\

According to theorem \ref{thm:Allhavepartner}, the irrep $D(G_f(k))$ gives rise to `irreducible zero modes' only when its partner $D'(F_f(k))$ is absent in the bases of the system. Later we will show that this is a stringent condition and it happens only when the system contains special symmetries. 
%at momentum point $\bm k$ can be lifted if it is allowed to add more orbitals such that the bases of $D'(F_f(k))$ appear at $\bm k$. %to the system to enlarge the Hilbert space. 
%Consequently, the zero modes are either completely gapped out or moved to a different momentum point. 

%According to the  theorem, the %`protected' 
%irreducible zero modes at momentum point $\bm k$ can be lifted if it is allowed to add more orbitals such that the bases of $D'(F_f(k))$ appear at $\bm k$. %to the system to enlarge the Hilbert space. 
%Consequently, the zero modes are either completely gapped out or moved to a different momentum point. 

%or a reducible rep of $F_f(k)$,
(ii) If $\mathcal L_0$ carries a composite irrep of $F_f(k)$ (namely if $\mathcal L_0$ carries a reducible rep of $G_f(k)$),  then the zero modes in $\mathcal L_0$ are accidental and can be lifted by small perturbations without breaking any symmetry.
%and are not symmetry protected. In this case, one can add a small perturbation to lift the zero modes without breaking any symmetry. 

%Suppose that the system contains more than one sets of irreducible zero modes at $\bm k$. % and can be lifted
(iii) If $\mathcal L_0$ carries a reducible rep of $F_f(k)$, then the zero modes contributed by composite irreps or couples of simple irreps (the simple irreps in each couple are partners of each other) are accidental, while the ones contributed by uncoupled simple irreps are robust. 

%If $\mathcal L_0$ is a direct some of simple irreps whose coupable partner are not included in $\mathcal L_0$, then  
%
%If they are not coupleable partners of each other, then these zero modes are symmetry protected and stable. Otherwise, if two sets of irreducible zero modes are coupleable partners of each other, then they can be lifted to finite energy by adiabatically adjusting the Hamiltonian without breaking any symmetry. In the next subsection, we will analyze the precise relation between $D(F_f(k))$ and its coupleable partner $D'(F_f(k))$.

\subsection{Minimal nonzero modes: $\{\veps_k, -\veps_k\}$ with $\veps_k\neq0$}\label{subsec:MNonZM}

Now we discuss the modes having nonzero energy. Since $\tilde{\mathcal P}$ maps the $n$-dimensional ($n$-D) eigenspace of $\mathscr H_k$ with eigenvalue $\veps_k$ to the one with eigenvalue $-\veps_k$, we need to investigate the $2n$-D subspace $\mathcal L_{\pm\veps_k}$ formed by the direct sum of the eigenspaces $\{\veps_k, -\veps_k\}$ for $\veps_k\neq0$. This subspace $\mathcal L_{\pm\veps_k}$ is called minimal if it cannot be divided into smaller sets of nonzero modes without breaking any symmetry.  There are two possible types of minimal nonzero modes: (1) $\mathcal L_{\pm\veps_k}$ carries a composite irreducible rep of $F_f(k)$ and is called a set of `irreducible nonzero modes'; (2) $\mathcal L_{\pm\veps_k}$ carries a direct sum of two simple irreps of $F_f(k)$, and $\mathcal L_{\pm\veps_k}$ will be called a set of `reducible minimal nonzero modes' (RMNZM).

\begin{table*}[t]
\caption{Reducible minimal nonzero modes for the space $D(F_f(k))\oplus D'(F_f(k))$ of the group $F_f(k)=G_f(k)+\tp G_f(k)$. `U' stands for unitary and `A' denotes anti-unitary. The situations for unitary $\tp$ with $\eta_{\tp}=1$ correspond to two sets of uncoupled irreducible zero modes.}
\small
\centering
\begin{tabular}{c|c|c|c|c}
\hline\hline
$\tp$    &   { $G_f(k)$}   &    {$D(G_f(k))=D'(G_f(k))$} &  { $D(\tp)[D'(\tp)]^{-1}$}   & condition for RMNZM \\
\hline
\multirow{6}{*}{U}    &     U               &        irreducible         &     $\eta_{\tp} I$     &\!\!\!\! \!\!\!\!  \!\!\!\! \!\!\!\! \!\!\!\!\!\!\!\! \!\!\!\!  \!\!\!\! \!\!\!\! \!\!\!\!   (1) $\eta_{\tp}=-1$,\ \ \ $D(F_f(k))\ncong D'(F_f(k))$\\
\cline{2-5}
&     \multirow{5}{*}{A}          &       $\mathbb R$-class         &     $\eta_{\tp} I$     &\!\!\!\! \!\!\!\!  \!\!\!\! \!\!\!\! \!\!\!\!\!\!\!\! \!\!\!\!  \!\!\!\! \!\!\!\! \!\!\!\!    (2) $\eta_{\tp}=-1$,\ \ \  $D(F_f(k))\ncong D'(F_f(k))$\\
\cline{3-5}  
&& \multirow{2}{*}{$\mathbb C$-class}         &     $\eta_{\tp} I$,\ \ if $\{D(\tp), \Omega_z\}\neq0$     &\!\!\!\! \!\!\!\!  \!\!\!\! \!\!\!\! \!\!\!\!\!\!\!\! \!\!\!\!  \!\!\!\! \!\!\!\! \!\!\!\!    (3) $\eta_{\tp}=-1$,\ \ \  $D(F_f(k))\ncong D'(F_f(k))$\\
\cline{4-5}%
&&&  $e^{\mi\theta\Omega_z}$, if $\{D(\tp), \Omega_z\}=0$    &\!\!\!\! \!\!\!\!  \!\!\!\! \!\!\!\! \!\!\!\!\!\!\!\! \!\!\!\!  \!\!\!\! \!\!\!\! \!\!\!\!    (4) $\theta\in[0,2\pi)$,\ \ \!\!$D(F_f(k))\cong D'(F_f(k))$\\
\cline{3-5}
&&\multirow{2}{*}{$\mathbb H$-class}        & \!\!\!\!   $\eta_{\tp} I$,\ \  { if $\{D(\tp), \Omega_n\}\neq0$ for any $\pmb n$}   &\!\!\!\! \!\!\!\!  \!\!\!\! \!\!\!\! \!\!\!\!\!\!\!\! \!\!\!\!  \!\!\!\! \!\!\!\! \!\!\!\!    (5) $\eta_{\tp}=-1$,\ \ \  $D(F_f(k))\ncong D'(F_f(k))$\\
\cline{4-5}%
&&  & $e^{\mi\theta\Omega_{\pmb n}}$, { if $\{D(\tp), \Omega_n\}=0$  for some $\pmb n$}    &(6) {$\theta\in[0,2\pi)$, ${\pmb n}\in S^2$,} \!$D(F_f(k))\cong D'(F_f(k))$\ \ \\
\hline  
A &  U              &   irreducible & $e^{\mi\theta} I$  & \!\!\!\! \!\!\!\!  \!\!\!\! \!\!\!\! \!\!\!\!\!\!\!\! \!\!\!\!  \!\!\!\! \!\!\!\! \!\!\!\!     (7) $\theta\in[0,2\pi)$,\ \ \!\!$D(F_f(k))\cong D'(F_f(k))$ \\
\hline\hline                      
\end{tabular}\label{tab:RMNZM}
\end{table*}

\subsubsection{Irreducible nonzero modes}\label{subsubsec:irrnonzero}

By definition, the supporting space $\mathcal L_{\pm\veps_k}$ of irreducible nonzero modes forms an irrep of $D(F_f(k))$. This irrep must be a composite irrep, namely, the restricted rep on $G_f(k)$ must be reducible, otherwise it would gives rise to a set of irreducible zero modes. We adopt the eigenbases of the Hamiltonian for the irreducible nonzero modes which takes the form $h(k)=\veps_k\mu_z\otimes I_n$, where $\mu_z$ is the third Pauli matrix whose eigenvalues $\pm1$ respectively label the eigen spaces with positive and negative energy. 
%In the induced bases, 
Defining $K_{-1}=K, K_{1}=I$, and $s(g)=-1$ for anti-unitary elements $g\in G_f(k)$ and $s(g)=1$ for unitary ones, then
according to App.\ref{App.:AA} the element $g\in G_f(k)$ is represented as 
$$
D(g)K_{s(g)}=\begin{pmatrix} d(g) & 0\\ 0 & d'(g)\end{pmatrix}K_{s(g)},
$$ %with  $s(g)=1$, $K_{s(g)}=K$ if $g$ is anti-unitary and $s(g)=-1$, $K_{s(g)}=1$ otherwise 
where $d(g)K_{s(g)}$ and $d'(g)K_{s(g)}={\omega_2(g,\tp)\over\omega_2(\tp,\tp^{-1}g\tp)} K_{s(\tp)} d(\tp^{-1}g\tp) K_{s(\tp)}$ are two $n$-D irreps of $G_f(k)$, and $\tp$ is represented as
$$
D(\tp)K_{s(\tp)}=\begin{pmatrix} 0&  \omega_2(\tp,\tp)d(\tp^2)\\ I_n & 0\end{pmatrix}K_{s(\tilde{\mathcal{ P})}}
$$ 
with $\tp^2\in G_f(k)$.

\subsubsection{Reducible minimal nonzero modes}\label{subsubSec:coupledZMs}

We now consider the case in which $\mathcal L_{\pm\veps_k}$ contains two simple irreps of $F_f(k)$ %having the the same factor system, and we 
noted as $D(F_f(k))$ and $D'(F_f(k))$ respectively (they are partner of each other). By definition, the `minimal nonzero modes' requires that {\it the restricted Reps $D(G_f(k))$ and $D'(G_f(k))$ are irreducible}. 
From Schur's lemma, the restricted reps $D(G_f(k))$ and $D'(G_f(k))$ must be equivalent such that they can couple to each other. That is, $D(G_f(k)) \cong D'(G_f(k))$. We thus assume that the bases have been adjusted such that 
\beq\label{DQpg}
D(G_f(k)) = D'(G_f(k)).
\eeq

Furthermore, as summarized in table~\ref{tab:RMNZM}, the ability of coupling between the two simple irreps
%sets of irreducible zero modes 
further restricts the relation between $D(\tp)K_{s(\tp)}$ and $D'(\tp)K_{s(\tp)}$. In later discussion, we define the unitary quantity 
%$$M=D(\tp)[D'(\tp)]^{-1}$$
$$X=D(\tp)[D'(\tp)]^{-1}$$
to denote the difference between $D(\tp)$ and $D'(\tp)$. As shown in App.\ref{app:RMNZMs}, $X$ is commuting with the restricted rep $D(G_f(k))$. 

\noindent{\bf (I) $\tp$ is unitary} {\it (chiral-like, {\it e.g.} $\tp=\mathcal{TP}$)}

(a) If $G_f(k)$ is unitary, then from Schur's lemma one has $X= \eta_{\tp}I_n$ with $\eta_{\tp}$ a constant. Since $\tp^2\in G_f(k)$, hence $[D(\tp)]^2=[D'(\tp)]^2$, namely $\eta_{\tp}^2=1$, $\eta_{\tp}=\pm1$, see App.\ref{Appsubsubsection:Gfuni}. When $\eta_{\tp}=-1$, then $\mathcal L_{\pm\veps_k}$ is indeed a set of RMNZM. When $\eta_{\tp}=1$, $\mathcal L_{\pm\veps_k}$ contains two simple irreps
%sets of irreducible zero modes 
uncoupled to each other, hence $\mathcal L_{\pm\veps_k}$ corresponds to zero modes. 

(b) If $G_f(k)$ is anti-unitary, denote the maximum unitary subgroup of $G_f(k)$ as $H_f(k)$ with $G_f(k)=H_f(k)+T_0H_f(k),\ T_0^2\in H_f(k)$.   The projective irreps of $G_f(k)$ are classified into $\mathbb{R}$, $\mathbb{C}$, and $\mathbb{H}$  classes \footnote{ The class is determined by the Wigner indicator of an irrep $D(G_f(k))$ of $G_f(k)=H_f(k)+T_0H_f(k)$, namely $W_D={1\over |H_f(k)|}\sum_{h\in H_f(k)} \omega_2(T_0h, T_0h) {\rm Tr} \left[D(T_0h)^2\right]$. The values of the Wigner indicator in the $\mathbb{R}$, $\mathbb{C}$ and $\mathbb{H}$ classes are $W_D^{\mathbb R}=1, W_D^{\mathbb C}=0, W_D^{\mathbb H}=-2$ respectively.
} according to their centralizers\cite{CMP}.

%{\B Then there are three different situations based on the three types of projective irreps of $ G_f(k) $. These three types of projective irreps of $ G_f(k) $ are detected by their centralizers\cite{CMP}or Wigner criteria\cite{sumita2019classification,PRB2020_indicator,PRX2022-AZclass} and are classified into $\mathbb{R}$, $\mathbb{H}$, and $\mathbb{C}$ classes .}  

b1), When the restricted rep $D(G_f(k))$ is of the real class $\mathbb R$, then $X = \eta_{\tp}I_n$ still holds with $\eta_{\tp}=\pm1$. The case with $\eta_{\tp}=-1$ stands for RMNZM. 

b2), When the $D(G_f(k))$ is of the complex class $\mathbb C$, then the restricted rep $D(H_f(k))$ is a direct sum of two non-equivalent irreps with dimension ${n\over 2}$. Suppose $\Omega_z$ is diagonal in this direct sum space and is represented as $\Omega_z=\omega_z\otimes  I_{n\over2}$ with $\omega_z$ the third pauli matrix. 
Then the unitary elements in the centralizer of the rep $D(G_f(k))$ can be written as $e^{\mi\theta\Omega_z}=\cos\theta + \mi\sin\theta\Omega_z$, see App.\ref{Appsubsubsec:Gfanti}, where $\theta\in[0,2\pi)$. So we have
$ X = e^{\mi\theta_0\Omega_z} $
for some $\theta_0$ with a constraint, 
\beq\label{thetaGmz}
e^{-\mi\theta_0\Omega_z}D(\tp)=D(\tp)e^{\mi\theta_0\Omega_z}.
\eeq
If $\{D(\tp), \Omega_z\}\neq0$, then $\sin\theta_0=0$, so $\theta_0=0,\pi$. The case $\theta_0=\pi$ (yielding $X=-I_n$) stands for the RMNZM.  If $\{D(\tp), \Omega_z\}=0$, then $\theta_0\in[0,2\pi)$ and  $D(F_f(k))$ and $D'(F_f(k))$ are equivalent since they are related by a $U(1)$ transformation $\Psi^\dag \to \Psi'^\dag=\Psi^\dag e^{-\mi\Omega_z {\theta_0\over 2}}$, $D(\tp)\to D'(\tp) = e^{-\mi\Omega_z {\theta_0\over 2}}D(\tp)e^{\mi\Omega_z {\theta_0\over 2}} =e^{-\mi\Omega_z {\theta_0}}D(\tp)$. The two irreps can be coupled to form nonzero modes. For instance, when $\theta_0=0$ the Hamiltonian can be chosen as $h_k =\veps_k\mu_y \otimes \Omega_z.$

b3), When $D(G_f(k))$ is of the quaternionic class $\mathbb H$, then the unitary elements in the centralizer of the $D(G_f(k))$ can be written as  $e^{\mi\theta \Omega_n}$ (with $\Omega_n=\pmb \omega \cdot\pmb n\otimes I_{n\over2}$, $\pmb \omega$ the three pauli matrices) for any $\pmb n\in S^2, \theta\in[0,2\pi)$, see App.\ref{Appsubsubsec:Gfanti}. So we have
\beq\label{Gammaz}
X= e^{\mi\theta_0\Omega_{\pmb{n}_0}}.
\eeq
for some $\theta_0,\pmb{n}_0$ with the following constraint
\begin{align}
e^{-\mi\theta_0\Omega_{\pmb{n}_0}}D(\tp)=D(\tp)e^{\mi\theta_0\Omega_{\pmb{n}_0}}.
\end{align}
If $\{D(\tp), \Omega_{\pmb{n}_0}\}\neq0$ for any ${\pmb{n}_0}\in S^2$ (for instance, when ${\rm Tr\ D(\tp)}\neq0$), then only $\theta_0=0,\pi$ satisfies the relation (\ref{thetaGmz}). Then (\ref{Gammaz}) reduces to $X = \eta_{\tp}I_n$ with $\eta_{\tp}=\pm1$ and the case $\eta_{\tp}=-1$ stands for RMNZM. 
If there exist some ${\pmb{n}_0}$ such that $\{D(\tp), \Omega_{\pmb{n}_0}\}=0$, then 
 $D(\tp)$ and $D'(\tp)$ are related by an SU(2) transformation $\Psi^\dag \to\Psi'^\dag = \Psi^\dag e^{-\mi\Omega_{\pmb{n}_0}{\theta_0\over2}}$, $D(\tp)\to D'(\tp) = e^{-\mi\Omega_{\pmb{n}_0}{\theta_0\over2}}D(\tp)e^{\mi\Omega_{\pmb{n}_0}{\theta_0\over2}}=e^{-\mi\Omega_{\pmb{n}_0}{\theta_0}}D(\tp)$. 
Hence, the two irreps $D(F_f(k)$ and $D'(F_f(k))$ are equivalent, they can be coupled to form nonzero modes. For instance, when $\theta_0=0$ the Hamiltonian can be chosen as 
$h_k =\veps_k\mu_y \otimes \Omega_n.$ 

\noindent{\bf (II) $\tp$ is antiunitary}{\it (particle-hole like, {\it e.g.} $\tp=\mathcal I\mathcal P$)}\\
\indent If $G_f(k)$ is anti-unitary,  we can multiply $\tp$ with an anti-unitary element $T_0\in G_f(k)$ to obtain a chiral-like symmetry operator $\tp'=\tp T_0$, which has been discussed in case ({\bf I}). Therefore, we only need to consider the case in which $G_f(k)$ is unitary. Eq. (\ref{DQpg}) indicates that the two irreps $D(F_f(k))$ and $D'(F_f(k))$ are equivalent, and consequently $D'(\tp)K= e^{-\mi\theta}D(\tp)K$ with $\theta\in[0,2\pi)$, namely $X=e^{\mi\theta}I_n$, see App.\ref{Appsubsec:tpanti}. Now $\mathcal L_{\pm\veps}$ is a set of RMNZM since there exist a Hamiltonian matrix $h_k=\veps_k(\sin{\theta\over2}\mu_x+\cos{\theta\over2}\mu_y)\otimes I_n$ which commutes with $\mathscr D(G_f(k))=D(G_f(k))\oplus D(G_f(k))$ and anti-commutes with $\mathscr D(\tp)K = D(\tp)\oplus e^{-\mi\theta}D(\tp)K$.

\subsection{Conditions for the existence of Zero modes}\label{Sec:ConditionZeroCriterion}

Now we consider BdG systems on the lattice having a symmetry group $G_f$, and assume that a fixed number of complex fermion bases (namely local Wannier orbitals) are placed in certain Wyckoff positions in the unit cell. According to App.\ref{sec:BandReps}, for any given momentum $\bm k$, the rep of $F_f(k)$  can be read out from the generalized band representation, and then we can know whether there exist zero modes at the $\bm{k}$. If zero modes do exist, the degeneracy must be even because the dimension of the Hamiltonian $\mathscr H_k$ is even and the nonzero energy levels appear in pairs $\{\veps_k,-\veps_k\}$. In the following theorem~\ref{NoCNoZM} (see App.\ref{App.:thrm2} for proof), however, we show that if the $G_f(k)$ is `diagonal' in the particle-hole sector,  then no stable zero modes can be found  at a given momentum $\bm k$. \\
%only minimal nonzero modes are stable at a given momentum $\bm k$. \\

\begin{theorem}\label{NoCNoZM}
For any BdG system whose symmetry group $G_f(k)$ is `diagonal' in the particle-hole sector, all zero modes at the momentum point $\bm k$ are accidental and can be adiabatically lifted without breaking any symmetry. \\
\end{theorem}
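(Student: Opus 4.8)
The plan is to exploit the ``diagonal'' structure of $G_f(k)$ in the particle-hole sector, which means that every $g\in G_f(k)$ is represented by a matrix of the block form $\mathrm{diag}(v(g), v^*(g))$ (for unitary $g$, and similarly with a $K$ for anti-unitary $g$) with respect to the Nambu grading, while the effective particle-hole operator $\tp$ (being built from $\hat{\mathcal P}=\Gamma_x K$ or $\mathrm{i}\Gamma_y K$ together with a spatial element) is \emph{off-diagonal} in that same grading. The key observation is that in the diagonal case, $\tp$ itself plays the role of a universal partner-generating operator: starting from any irrep $D$ of $F_f(k)$ appearing in $\mathcal L_0$, I want to produce its partner $D'$ in the sense of Theorem~\ref{thm:Allhavepartner} and show that the bases of $D'$ are \emph{automatically} present in the band representation whenever those of $D$ are, so that the hypothesis of Theorem~\ref{thm:Allhavepartner} (``both included in the system'') is always met. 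Once that is established, every simple irrep in $\mathcal L_0$ has its partner in the system and can be gapped out, and the composite-irrep and coupled-couple cases are already known (from items (ii), (iii) of Section~\ref{SubSec:IrrZeromodes}) to be accidental; hence no robust zero modes survive.

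First I would make precise the ``diagonal'' condition: $G_f(k)$ diagonal means the particle sector $\mathcal L_P$ (spanned by the $c^\dag$ components) and the hole sector $\mathcal L_H$ (spanned by the $c^T$ components) are each $G_f(k)$-invariant, and moreover the action on $\mathcal L_H$ is the complex conjugate (contragredient) of the action on $\mathcal L_P$, because of the structure $U(g)=\mathrm{diag}(u(g),u^*(g))$ forced by the Nambu construction. Concretely, the band representation of $G_f(k)$ on the full Nambu space decomposes as $\rho_P \oplus \rho_P^*$ where $\rho_P$ is the orbital representation carried by the $c^\dag$'s at the chosen Wyckoff positions. Second I would observe that $\hat{\mathcal P}$ (and hence $\tp=\mathcal I\hat{\mathcal P}$ or $\mathcal T\hat{\mathcal P}$, etc.) maps $\mathcal L_P$ isomorphically onto $\mathcal L_H$ and intertwines $\rho_P$ with $\rho_P^*$ appropriately. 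Third, given a simple irrep $D$ of $F_f(k)$ occurring in $\mathcal L_0$, its restriction to $G_f(k)$ is an irrep $d$ of $G_f(k)$; but because the Nambu space carries $\rho_P\oplus\rho_P^*$ and $d$ is a subrep, \emph{both} $d$ and (via the diagonal-plus-$\tp$ structure) a second copy realizing the ``opposite $\tp$-sign'' representation $D'$ with $D'(\tp)=-D(\tp)$ (unitary $\tp$ case) or $D'=D$ (anti-unitary $\tp$ case) are forced to appear — essentially because the hole partner of any particle-sector state is independently present in the band representation. This is exactly the content needed to invoke Theorem~\ref{thm:Allhavepartner}, yielding a symmetry-allowed coupling $h(k)=\veps_k\,\mu_x\otimes I$ or $\veps_k\,\mu_y\otimes I$ that lifts those zero modes.

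Finally I would assemble the cases. If $\mathcal L_0$ carries a composite irrep of $F_f(k)$, item (ii) of Section~\ref{SubSec:IrrZeromodes} already shows the zero modes are accidental. If $\mathcal L_0$ carries a reducible rep of $F_f(k)$, by item (iii) only the uncoupled simple irreps could be robust; but the argument above shows that in the diagonal case no simple irrep is uncoupled, since its partner always sits in the band representation. Hence all zero modes are accidental and can be moved to nonzero energy by a symmetry-preserving perturbation (physically, turning on a uniform chemical potential $\lambda_i=\lambda$, which is the perturbation proportional to $\mu_x\otimes I$ in the particle-hole-diagonal basis and is itself $G_f(k)$-symmetric whenever $G_f(k)$ is diagonal). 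I expect the main obstacle to be the careful bookkeeping in the third step: verifying that the \emph{partner} irrep $D'$, with the correct relation between $D'(\tp)$ and $D(\tp)$ dictated by Table~\ref{tab:RMNZM} (in particular the sign $\eta_{\tp}=-1$ for unitary $\tp$, or the phase for anti-unitary $\tp$), is genuinely realized inside $\rho_P\oplus\rho_P^*$ rather than merely being abstractly available — i.e. ruling out the accidental possibility that the band representation contains $D$ but only the ``wrong-sign'' copy that would again pair into zero modes. This is where the diagonal hypothesis does the real work, and I would handle it by tracking how $\hat{\mathcal P}$ acts as an explicit off-diagonal intertwiner $\Gamma_x K$ between the two summands, which pins down $D(\tp^2)$ and the phase/sign of $D(\tp)$ uniquely in terms of $\rho_P$ data.
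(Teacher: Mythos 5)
Your proposal is correct and follows essentially the same route as the paper's Appendix~\ref{App.:thrm2} proof: both rest on the observation that a diagonal $G_f(k)$ splits the Nambu space into $G_f(k)$-invariant particle and hole blocks $d_a\oplus\tilde d_a$ exchanged by $\tp$, so that the particle-hole-diagonal (chemical-potential-like) Hamiltonian $\varepsilon_k\,\mu_z\otimes I_{n_a}$ is symmetric, anti-commutes with $\tp$, and is fully gapped --- the paper simply writes this down block by block, while you recover the same coupling by locating each simple irrep's partner inside $\rho_P\oplus\rho_P^*$, which is precisely the statement the paper records as a corollary of its construction. The only slip is notational: the gapping perturbation is $\mu_z\otimes I$ (diagonal in the particle-hole grading), and it appears as the off-diagonal $\mu_x\otimes I$ of Theorem~\ref{thm:Allhavepartner} only after passing to the basis adapted to the $D\oplus D'$ decomposition.
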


%Hence, for non-interacting SC/SF whose symmetry group $G_f$ are magnetic space groups, zero modes at high symmetry points/lines are not symmetry-protected. as shown in App.\ref{App:consequenceofC},%the energy spectrum. 
%Theorem \ref{NoCNoZM} indicates that generally no robust zero modes exist in usual SCs/SFs at fixed momentum $\bm k$. %$\mathcal{C}$ 
%{\B define $F^c(k)$, define irreducible C-zero modes and reducible C-zero modes}
However,  situations will be changed if the system contains `non-diagonal' charge symmetries like the (effective) charge conjugation symmetry. In App.\ref{App.:thrm3}, it is proved that at a given $\tp$ symmetric momentum point $\bm{k}$, one can always find (effective) charge-conjugation symmetries to turn a set of minimal nonzero modes into zero modes -- the minimal `$\mathcal{C}$-zero modes'. If the $\mathcal{C}$-zero modes originate from irreducible nonzero modes of $F_f(k)$, then they form {\it irreducible zero modes} of the group $F^c_f(k)=F_f(k)+\C F_f(k)$; otherwise if the $\mathcal{C}$-zero modes come from RMNZM of $F_f(k)$, then they are called {\it reducible minimal $\mathcal{C}$-zero modes}. Then we have the following theorem~\ref{Thm:Czeromode}. 
 \\
\begin{theorem}\label{Thm:Czeromode}
At a given $\p$-symmetric momentum $\pmb k$ ({\it i.e.} $\p\in F_f(k)$) of an arbitrary BdG system, any minimal nonzero modes (except for one special case) located at the momentum $\bm k$  can be turned into zero modes by adding a single charge-conjugation symmetry to the original BdG system. The exception is when $\tp$ is anti-unitary while $G_f(k)$ is unitary, and the irrep $D(F_f(k))$ carried by a set of irreducible nonzero modes is of $\mathbb{H}$ class, %. In that case, 
then one needs two charge-conjugation operations anti-commuting with each other to gain zero modes. \\
\end{theorem}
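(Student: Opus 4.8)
The plan is to exhaust the seven rows of Table~\ref{tab:RMNZM} together with the two types of minimal nonzero modes (irreducible nonzero modes and RMNZM), and in each case to exhibit explicitly a charge-conjugation operator $\C$ whose adjunction to $F_f(k)$ forces the $2n$-D space $\mathcal L_{\pm\veps_k}$ to sit at zero energy. The strategy rests on two facts established earlier: (a) by the minimal-nonzero-modes analysis of Sec.~\ref{subsec:MNonZM}, the Hamiltonian restricted to $\mathcal L_{\pm\veps_k}$ is, in suitable bases, one of the standard $2\times2\otimes I_n$ forms ($\veps_k\mu_z\otimes I_n$ for irreducible nonzero modes, or $\veps_k\mu_x\otimes I$, $\veps_k\mu_y\otimes\Omega$, etc.\ for RMNZM); and (b) a charge-conjugation $\C$ satisfies $\C\mathscr H_k\C^\dagger=\mathscr H_k$ with $\C$ unitary, $\C^2=E$ or $P_f$, and $\C$ commuting with $G_f(k)$ up to the prescribed factor system. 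So the mechanism is: find a unitary $\C$ that (i) \emph{anticommutes} with the relevant $\mu$-matrix direction carrying $\veps_k$, thereby projecting out the nonzero part, (ii) commutes appropriately with $D(G_f(k))$, and (iii) has consistent commutation with $\tp$ so that $F^c_f(k)=F_f(k)+\C F_f(k)$ closes as a group with a legitimate factor system. Concretely, for the irreducible-nonzero-mode form $h(k)=\veps_k\mu_z\otimes I_n$ the natural choice is $\C\propto\mu_x\otimes I_n$ or $\mu_y\otimes I_n$ (dressed by whatever internal phase makes $\C^2$ land on $E$ or $P_f$), which anticommutes with $\mu_z$ and hence kills $\veps_k$; one then checks, using the block form of $D(g)K_{s(g)}$ and $D(\tp)K_{s(\tp)}$ written out in Sec.~\ref{subsubsec:irrnonzero}, that this $\C$ commutes with $D(G_f(k))$ (it must, since the two blocks $d(g)$ and $d'(g)$ are related by conjugation with $\tp$, and $\C$ is built from the off-diagonal $\mu$-matrices intertwining those blocks) and that $\C\tp\C^{-1}$ differs from $\tp$ only by an element of $\langle\tp^2,P_f\rangle$, so the extended group is consistent. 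For the RMNZM cases one instead uses the partner structure $D(G_f(k))=D'(G_f(k))$, $D(\tp)[D'(\tp)]^{-1}=X$ from Table~\ref{tab:RMNZM}, and chooses $\C$ acting on the $D\oplus D'$ label space as the Pauli matrix conjugate to the one carrying $\veps_k$ in $h_k$ (e.g.\ $\C\propto\mu_z\otimes I$ when $h_k\propto\veps_k\mu_x$ or $\veps_k\mu_y$), again dressed to fix $\C^2$ and to commute with $D(G_f(k))$; the constraint relating $X$ to $\{D(\tp),\Omega\}$ guarantees there is enough room to do this.

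The only place the argument genuinely fails with a single $\C$ is the exceptional case flagged in the statement: $\tp$ anti-unitary, $G_f(k)$ unitary, and $D(F_f(k))$ a composite irrep of $\mathbb H$ class carried by irreducible nonzero modes (row~7 of Table~\ref{tab:RMNZM} read in the irreducible-nonzero-mode register, equivalently case (II) with a quaternionic restricted rep). Here the key obstruction is that the centralizer of $D(G_f(k))$ inside the unitaries on $\mathcal L_{\pm\veps_k}$ together with the anti-unitary $\tp$ already uses up a quaternionic algebra's worth of structure: any single unitary $\C$ that anticommutes with the $\veps_k$-carrying matrix and commutes with $D(G_f(k))$ turns out to also commute (rather than anticommute) with $\tp$ in a way that makes $\C$ effectively an element one could absorb, so $h_k$ is not forced to vanish — one can always add back a $\tp$-symmetric, $G_f(k)$-symmetric, $\C$-symmetric mass term of the form $\veps_k\,\mu\otimes\Omega_{\pmb n}$ for a suitable $\pmb n\in S^2$. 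Adding instead \emph{two} anticommuting charge-conjugations $\C_1,\C_2$ with $\{\C_1,\C_2\}=0$ exhausts the remaining quaternionic directions $\Omega_{\pmb n}$, so that no $\tp$-, $G_f(k)$-, $\C_1$- and $\C_2$-symmetric mass term survives and the modes are pinned to zero. The plan for this part is to isolate the algebra generated by $\{D(g):g\in G_f(k)\}$, $D(\tp)K$ and the candidate $\C$'s acting on the $4$-dimensional-over-$\mathbb H$ (i.e.\ the $\mu\otimes\omega$) space, identify it with a Clifford-type algebra, and count which anticommuting generators are needed to leave the identity as the only symmetric Hermitian operator — this Clifford/centralizer bookkeeping is the main obstacle and the crux of the proof.

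For the remaining (non-exceptional) rows of Table~\ref{tab:RMNZM} and for all RMNZM cases I would organize the verification as a short finite checklist rather than a case-by-case grind: in each case record the standard form of $h_k$, write down the candidate $\C$, verify the three conditions (anticommutation with the $\veps_k$-matrix; commutation with $D(G_f(k))$ up to factor system; closure of $F^c_f(k)$ with $\C^2\in\{E,P_f\}$ consistent), and note that zero modes result because $h_k$ then anticommutes with a symmetry, hence must be absent. The irreducible-vs-reducible dichotomy in the conclusion (``irreducible zero modes of $F^c_f(k)$'' vs.\ ``reducible minimal $\C$-zero modes'') follows automatically from whether the $\mathcal C$-enlarged rep restricts irreducibly to $F_f(k)$ or splits — this is read off from the block structure already in hand and needs no extra work. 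I expect the checklist to be routine except for the $\mathbb H$-class/anti-unitary-$\tp$ entry, which is precisely where the ``two anticommuting $\C$'s'' exception lives and where the Clifford-algebra counting in the previous paragraph does the real work.
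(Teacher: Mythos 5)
Your construction is essentially the paper's proof, just phrased operationally rather than representation-theoretically. Your diagonal $\C\propto\mu_z$ acting with opposite signs on the two partners of a RMNZM is exactly the paper's \emph{extension} of $\mathcal D_{\pm}(G_f(k))$ by the two one-dimensional reps of $Z_2^{\C}$ (the partners acquire different $\C$ quantum numbers, so Schur's lemma forbids any coupling), and your off-diagonal $\C\propto\mu_{x,y}$ intertwining $d$ and $d'$ for irreducible nonzero modes is exactly the paper's \emph{induction}, which makes $D(G_f^c(k))$ irreducible and hence pins the modes to zero. The paper additionally records the cheap observation that for unitary $\tp$ one may simply take $\C=\tp$ or $\C=\tp g$. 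One advantage of the paper's framing is that ``no symmetric mass term survives'' follows directly from the irreducibility or inequivalence of the new reps of $F_f^c(k)$, whereas your route also relies on the completeness of the standard forms of $h_k$ classified in Table~\ref{tab:RMNZM} and App.~\ref{app:RMNZMs}; that classification is available, so this is a presentational rather than a logical difference. (A small caveat: when $d\ncong d'$ the off-diagonal $\C$ cannot literally be $\mu_{x,y}\otimes I_n$ commuting with $G_f(k)$ elementwise; the off-diagonal blocks must be the intertwiners $L(\C)$ of Eq.~(\ref{App.D:theinducedCC}), which exist because $d'$ is the $\tp$-conjugate of $d$ — you gesture at this but it deserves the explicit check the paper performs.)

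The one place your proposal stops short of a proof is precisely the step you flag as the crux: the $\mathbb{H}$-class exception with anti-unitary $\tp$ and unitary $G_f(k)$. The paper resolves it concretely in two moves: an extension by $D(\C_1)=\bigl(\begin{smallmatrix}\mi I&0\\0&-\mi I\end{smallmatrix}\bigr)$ splits the two equivalent blocks of $D(H_f(k))$ and demotes the irrep from the $\mathbb{H}$ class to the $\mathbb{C}$ class, after which a single induction by the off-diagonal $D(\C_2)=\bigl(\begin{smallmatrix}0&\mi I\\ \mi I&0\end{smallmatrix}\bigr)$ yields irreducible $\C$-zero modes, the relation $D(\C_1)D(\C_2)=-D(\C_2)D(\C_1)$ being dictated by Clifford theory. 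Your ``Clifford/centralizer bookkeeping'' would land on the same matrices, but as written the claim that any single $\C$ necessarily leaves a residual mass term $\veps_k\,\mu\otimes\Omega_{\pmb n}$ is asserted rather than derived, and the statement that the exception is ``case (II) with a quaternionic restricted rep'' misplaces it: the exception lives in case (I) (irreducible nonzero modes), the $\mathbb{H}$ label referring to the class of the composite irrep $D(F_f(k))$ itself.
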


Charge conjugations are the simplest charge operations that are non-diagonal in the particle-hole sector. %The existence of zero modes in 
The above theorem can be generalized to cases where the charge conjugation operation $\C$ is replaced by general non-diagonal charge operations $\mathcal D$ in $SU(2)_c$, {\it e.g.}  
%. For instance, such charge operation can be an element $\mathcal D \in SU(2)_c$ 
with $\mathcal D^3=E$ or $\mathcal D^6=E$. 

On the other hand,  zero modes can also appear at unfixed momenta. For instance, a level crossing of irreducible nonzero modes carrying two nonequivalent reps of $G_f(k)$ can give rise to zero modes on certain high symmetry line. So we have the following theorem:\\
%We summarize this mechanism in the following theorem~\ref{Thrm:crossing} without proof. \\

\begin{theorem}\label{Thrm:crossing}
On a high symmetry line of the BZ having a unitary symmetry $\mathcal{M}$, if the eigenspaces $\{\veps_k, -\veps_k\}$ of a minimal nonzero modes carry different quantum numbers ({\it i.e.} characters of irreps) of $\mathcal{M}$, then the level crossing of $\veps_k, -\veps_k$ gives rise to a set of stable zero modes whose momentum is not fixed.\\
\end{theorem}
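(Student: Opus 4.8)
The plan is to reduce the statement to a two-band analysis along the high-symmetry line and use the particle-hole constraint together with the conservation of the $\mathcal{M}$ quantum number to forbid any gap-opening term. First I would set up coordinates: parametrize the high-symmetry line by a momentum parameter $k$, and at each $k$ on the line restrict attention to the minimal nonzero-mode subspace $\mathcal{L}_{\pm\veps_k}$, which by hypothesis is two-dimensional (the eigenspaces of $\veps_k$ and $-\veps_k$, each one-dimensional, since we take a minimal set with one-dimensional irreps of the little group on the line — more precisely $\mathcal{L}_{\pm\veps_k}$ carries an irreducible nonzero-mode rep or an RMNZM of $F_f(k)$). Write the effective Hamiltonian on this subspace in the eigenbasis of $\mathscr{H}_k$ as $h(k)=\veps_k\,\mu_z$ where $\mu_{x,y,z}$ are Pauli matrices in the $\{|\veps_k\rangle,|-\veps_k\rangle\}$ sector, and let $m_1(k)$ and $m_2(k)$ denote the characters (quantum numbers) of $\mathcal{M}$ carried by $|\veps_k\rangle$ and $|-\veps_k\rangle$ respectively, with $m_1\neq m_2$ by assumption.

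The second step is to enumerate which perturbations can couple the two levels. A generic Hermitian perturbation is $\delta h=a(k)\mu_x+b(k)\mu_y+c(k)\mu_z$. The term $c(k)\mu_z$ merely shifts $\veps_k$ and cannot open a gap on its own. The off-diagonal terms $a\mu_x+b\mu_y$ connect the $m_1$ sector to the $m_2$ sector; since $\mathcal{M}$ commutes with the Hamiltonian on the line and acts as distinct scalars $e^{\mi\phi_1}$ and $e^{\mi\phi_2}$ on the two basis states (distinct because $m_1\neq m_2$), invariance under $\mathcal{M}$ forces $a(k)=b(k)=0$ identically along the line. Hence the only symmetry-allowed deformation keeps $h(k)$ diagonal, and the two branches $\pm\veps_k$ remain decoupled. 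The particle-hole symmetry $\tp$ (or the global $\hat{\mathcal{P}}$) enforces that the spectrum stays symmetric about zero, so the two branches are $+\veps_k$ and $-\veps_k$ for a single real function $\veps_k$; I should check here that $\mathcal{M}$ and $\tp$ can be chosen to act compatibly, i.e.\ that $\tp$ either preserves or exchanges the two $\mathcal{M}$-sectors in a way consistent with the relabeling — this is the one bookkeeping point that needs care, but generically $\mathcal{M}$ commutes or anticommutes with $\tp$ and the argument goes through in both cases.

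The third step is the intermediate-value / transversality argument: because $\veps_k$ is a continuous real function of $k$ and the two eigenspaces carry fixed, distinct $\mathcal{M}$-labels that cannot hybridize, the levels $+\veps_k$ and $-\veps_k$ must actually cross — i.e.\ there is some $k_0$ on the line where $\veps_{k_0}=0$ — whenever the branch $\veps_k$ changes sign across the line (which is the generic situation that makes these ``nonzero modes'' relevant near the Fermi level), and such a crossing is stable because no symmetric term lifts it. The momentum $k_0$ is not pinned to a high-symmetry point: a symmetry-preserving perturbation can only move $k_0$ along the line (by altering $c(k)$), not gap it out, which is precisely the statement that the zero modes exist at unfixed momentum. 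I would also note that these crossing zero modes carry a reducible rep of $F_f(k_0)$ (the direct sum of the $m_1$ and $m_2$ one-dimensional reps), consistent with item (iii) of Section~\ref{SubSec:IrrZeromodes}: they are ``accidental'' in the sense of not being forced at a fixed $\bm{k}$, yet robust against perturbations once the sign change of $\veps_k$ is given.

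The main obstacle I anticipate is not the no-hybridization argument (which is elementary once the $\mathcal{M}$-grading is in place) but rather stating the hypothesis cleanly enough that the conclusion is genuinely forced: one must assume that the branch $\veps_k$ does change sign along the line (otherwise there need be no crossing), and one must handle the compatibility of $\mathcal{M}$ with $\tp$ — in particular the case where $\tp$ maps the $m_1$-eigenstate to the $m_2$-eigenstate, so that the labels ``$\veps_k$ carries $m_1$'' and ``$-\veps_k$ carries $m_2$'' are tied together by $\tp$ rather than independent. Verifying that in that case a gap term is still forbidden (it would have to be $\tp$-odd and $\mathcal{M}$-invariant simultaneously, which is impossible for distinct characters) is the crux, and I would treat it by writing out the $2\times2$ constraints from both $\mathcal{M}$ and $\tp$ explicitly and observing the allowed Hamiltonian is one-parameter, $h(k)=\veps_k\mu_z$, with no mass term available.
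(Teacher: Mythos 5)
The paper never writes down a proof of Theorem~\ref{Thrm:crossing}: unlike Theorems~\ref{NoCNoZM} and~\ref{Thm:Czeromode}, which get dedicated appendices, this statement is asserted and then only \emph{illustrated} through the lattice models of Sec.~\ref{Sec:ModelSection} (e.g.\ the split zero modes along $\hat{x}$, $\hat{y}$ protected by $\mathcal{M}'_{x,y}$ quantum numbers after adding $H_1$ in the $p4gm$ model). Your write-up therefore supplies an argument where the paper supplies none, and it is the correct one --- the standard von Neumann--Wigner protected-crossing argument adapted to the BdG setting: $\mathcal{M}$-invariance forces the off-diagonal block between the $+\veps_k$ and $-\veps_k$ eigenspaces to vanish (for eigenspaces of dimension $n>1$ this is Schur's lemma applied to the inequivalent restricted reps of $\mathcal{M}$, which your character hypothesis guarantees), the effective particle-hole symmetry $\tp$ pins the two branches to $\pm\veps_k$ so the unavoidable crossing sits exactly at zero energy, and the surviving diagonal freedom $c(k)\mu_z$ can only translate the crossing point along the line, which is precisely the ``momentum not fixed'' clause. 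Two small remarks: your worry about whether $\veps_k$ actually changes sign is not a gap in the proof but a reading of the hypothesis --- the theorem presupposes that ``the level crossing of $\veps_k,-\veps_k$'' occurs and only claims its stability and unpinned location, so you may simply take the sign change as given; and your $\tp$-compatibility check is handled correctly, since the $\mathcal{M}$ constraint alone already kills the mass term irrespective of whether $\tp$ preserves or exchanges the two $\mathcal{M}$-sectors. Your closing observation that the resulting zero modes carry a reducible rep of the little co-group at the crossing point is also consistent with item~(iii) of Sec.~\ref{SubSec:IrrZeromodes}.
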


According to the above theorems, gapless quasiparticles can be found under one of the following three situations:

(1) presence of one or two (effective) charge-conjugation %symmetry or 
symmetries %such that the Hilbert space contains 
giving rise to one or more sets of irreducible zero modes 
%or sets of uncoupled irreducible zero modes 
at a high symmetry point $\pmb k$. 

(2) presence of level crossing of bands carrying different quantum numbers of certain unitary symmetry operation of a high symmetry line.

(3) occurrence of $\pi$-quantized Berry phase around generic $\bm k$ point due to the ($\mathcal I'\mathcal T'$) symmetry with $(\mathcal I'\T')^2=1$. Here $\mathcal I'$ and $\mathcal T'$ respectively stand for the inversion and the time-reversal symmetries dressed by spin or/and charge operations.

The condition (1) can be verified using the band representation theory for SCs/SFs/QSLs  provided in App.\ref{sec:BandReps}. The last two conditions are illustrated via concrete lattice models in Sec.\ref{Sec:ModelSection}.  Especially, the $\mathcal I'\T'$ symmetry enforced $\pi$-quantized Berry phase can give rise to nodal-point like quasiparticles in 2-dimensions and nodal-line structures in 3-dimensions. This is a topological origin of the zero modes (usually linear dispersive), and is closely related to the zero modes at generic $\bm{k}$ points enforced by symmetry indicators \cite{Slager2013,Slager2017,Po2017Sym-BasedIndicators, Bradlyn2017TopoQuanChemistry, Ono2018SIs, PRB2020_indicator, Ono2021Z2SIs, PRX2022-AZclass}. The (projective) representation theory of symmetry groups discussed in the present work provides a different and complementary mechanism to obtain zero modes and gapless quasiparticles. \\

\section{$k\cdot p$ theory and physical response}\label{sec:kdotpmodel}

The existence of degenerate modes (zero modes) in the bulk energy spectrum can give rise to nodal point or nodal line structure. The dispersion around the nodal points or nodal lines are reflected in the effective Hamiltonian called $k\cdot p$ theory.  In this section, we will adopt the Hamiltonian approach \cite{YanYanFan21A-Hamilton} to obtain the $k\cdot p$ theory for zero modes of general BdG Hamiltonians. 

Above all, we define the `particle-hole rep' of $F_f(k)$
\beq\label{ZgRep}
D^{(\rm ph)}(g)=+1, \ \ D^{(\rm ph)}(g\tilde{\mathcal P}) =-1,
\eeq
with $g\in G_f(k)$. Here the complex conjugation $K_{s(g)}$ is hidden since the particle-hole rep is essentially a real rep.

\subsection{$k\cdot p$ Hamiltonian}\label{sec:k.p}
When the full symmetry group $F_f$ and the electron bases in the unit cell are given, the band representation $D(F_f)$ of general BdG systems can be obtained (see App.\ref{sec:BandReps}), from which the stable zero modes are known.  Suppose the stable zero modes at momentum $\bm k$ span a Hilbert space $\mathcal L_0$. When projected onto $\mathcal L_0$, the effective low-energy BdG Hamiltonian reads
$$
H_{\rm eff}=\sum_{\delta k} \Phi_{\pmb k+\delta \pmb k}^\dag\Lambda ({\delta\bm k})\Phi_{\pmb k+\delta \pmb k}=\sum_{\delta k}H_{\pmb k+\delta \pmb k},
$$ 
where the $\Phi_{\pmb k}$ is the basis of the zero modes, and $\Lambda(\delta \pmb k)$ is a Hermitian matrix $\Lambda^\dag(\delta \pmb k)=\Lambda(\delta \pmb k)$. When summing over all the momentum variations, the total Hamiltonian should preserve the $F_f(k)$ symmetry, namely
$\hat g\left(\sum_{\delta \pmb k}H_{\pmb k+\delta\pmb  k}\right)\hat g^{-1}=\left(\sum_{\delta\pmb  k}H_{\pmb k+\delta \pmb k}\right),$ for all $ g\in F_f(k)$.  Assuming $\hat g\Phi^\dag (\pmb k+\delta \pmb k)\hat g^{-1} =  \Phi^\dag (\pmb k+g\delta \pmb k) D(g)K_{s(g)} $, then generally for any $g\in F_f(k)$ one has
\beq\label{gsrh0}
D(g) K_{s(g)} \Lambda \left( \delta \pmb k\right) K_{s(g)} D^{\dagger}(g)=D^{(\rm ph)}(g)\Lambda(g \delta \pmb k),
\eeq
where $D^{(\rm ph)}(g)$ is the 1-D particle-hole Rep of $F_f(k)$ defined in (\ref{ZgRep}).

Starting from (\ref{gsrh0}), we derive the formula to judge the dispersion relation of the BdG system in momentum space around the zero modes.  For instance, we consider linear dispersion around $\pmb k$, namely
\beq
\Lambda(\delta \pmb k)=\sum_{m=1}^d \delta k_m\Lambda^m+O(\delta k^2).
\eeq
Here $\delta\pmb  k$ is a dual vector under the action of group $F_f(k)$, namely
$
\hat g\delta k_m=\sum_n D_{mn}^{(\bar {\rm v})}(h)\delta k_n,
$
where $(\bar {\rm v})$ is the dual Rep of the vector Rep $(\rm v)$ of the group $F_f(k)$. Notice that $\delta \pmb k$ reverses its sign under the following actions: the time reversal $\mathcal T'$ (anti-unitary), the spatial inversion $\mathcal I'$ (unitary), the particle-hole transformation $\mathcal P$ (anti-unitary), hence,
\Beq
D^{(\bar {\rm v})}(\T')K=-1K, D^{(\bar {\rm v})}(\mathcal I')=-1, D^{(\bar {\rm v})}(\mathcal P)K=-1K.
\Eeq
(recall that the $'$ stands for lattice operations dressed by spin or/and charge operations.)

{\it (I) $F_f(k)$ is unitary.} If $F_f(k)$ is a unitary group, then for any $g\in F_f(k)$ one has
\begin{align}\label{varvZUni}
D(g) \Lambda^n  D(g)^\dagger  = \sum_{m} {D}_{mn}^{(\bar {\rm v}\times {\rm ph})}(g)\Lambda^m
\end{align}
with ${D}^{(\bar {\rm v}\times {\rm ph})}(g) \equiv D^{(\rm ph)}(g)D^{(\bar {\rm v})}(g)$. 
Thus, the existence of linear dispersion is determined by whether the product Rep $D(F_f(k))\otimes D^*(F_f(k))$ contains the linear Rep $D^{(\bar {\rm v}\times {\rm ph})}(F_f(k))$ or not. It can be judged from the quantity
\begin{align}\label{Sec:kp20}
a_{(\rm \bar v\times{\rm ph})}={1\over |F_f(k)|}\sum_{g\in F_f(k)} |\chi(g)|^2\cdot\chi^{(\rm v\times{\rm {\overline {ph}}})}(g) 
\end{align}
where $\chi^{({\rm v}\times {\rm {\overline {ph}}})}(g)\equiv\chi^{({\rm v}\times {\rm ph})}(g) =\operatorname{Tr}[ {D}^{( \bar{\rm v}\times {\rm ph})}(g)]^*$. If $a_{({\rm \bar v}\times {\rm ph})} =0$, then the dispersion must be of order higher than 1. If $a_{({\rm \bar v}\times {\rm ph})} \neq 0$, then the dispersion is linear, and one can always find  Hermitian matrices $\Lambda^m$ satisfying the equation (\ref{varvZUni}) noticing that the Rep $({\rm \bar v}\times {\rm ph})$ is real.

{\it (II) $F_f(k)$ is anti-unitary.} The matrices $\Lambda^m$ also carry dual vector rep of the group $F_f(k)$. 

For any $g\in F_f(k)$, one has
\beq\label{varvZ}
D(g) K_{s(g)}\Lambda^n K_{s(g)} D(g)^\dagger  = \sum_{m} \Lambda^m {D}_{mn}^{(\bar {\rm v}\times {\rm ph})}(g),
\eeq
where ${D}^{(\bar {\rm v}\times {\rm ph})}(g) K_{s(g)}\equiv D^{(\rm ph)}(g)D^{(\bar {\rm v})}(g) K_{s(g)}$ is a real Rep of $F_f(k)$ hence the operator $K_{s(g)}$ can be hidden. Hence the existence of linear dispersion also depends on whether the product Rep $D(F_f(k))\otimes D^*(F_f(k))$ contains the Rep $D^{(\bar {\rm v}\times {\rm ph})}(F_f(k))$ or not, under the condition that the CG coefficients can be reshaped into hermitian matrices.

Select an anti-unitary operation $T_0\in F_f(k)$ and introduce the following matrices
$$
\tilde\Lambda^m=\Lambda^m\cdot D^{\rm T}(T_0).
$$ 
According to the Hamiltonian approach\cite{YanYanFan21A-Hamilton}, the symmetry condition (\ref{varvZ}) with $g=T_0$ and the hermitian condition $(\Lambda^m)^\dag =\Lambda^m$ are combined into a single symmetry condition of $\tilde\Lambda^m$ (called $\eta_0$-symmetry condition): 
\begin{align}\label{eta0-condition}
\big(\tilde\Lambda^m\big)^{\rm T}=\eta_0 D(T_0^2)\sum_n D^{(\bar {\rm v}\times {\rm ph})}_{nm}(T_0)\tilde\Lambda^n.
\end{align}
with $\eta_0=\omega_2(T_0,T_0)$.

Therefore, when restricted to the $\eta_0$-symmetric subspace, one only needs to consider the unitary symmetries of $\tilde\Lambda^m$ as discussed in case (I). This is a relatively simpler since the unitary elements are represented in the $\mathbb{C}$ field. On the other hand, $\Lambda^m$ carry real representations of $G_f(k)$ so one need to treat them in the $\mathbb{R}$ field which is more complicated. This subtlety makes the Hamiltonian approach a method with a high efficiency.

Similar to Eq.(\ref{Sec:kp20}), the existence of linear dispersion can be judged by the following quantity
\begin{align}\label{DispersionCriterion}
a_{({\rm \bar v}\times {\rm ph})} =&\frac{1}{|F_f(k)|} \sum_{u }\Big[|\chi(u)|^{2} \chi^{({\rm v}\times {\rm {\overline{ph}}})}(u) + \notag\\
	   &\omega_2\left(u T_0, uT_0\right) \chi\big(\left(u T_0\right)^{2}\big) \chi^{({\rm v}\times {\rm{\overline{ph}}})}\left(u T_0\right) \Big],
\end{align} 
where the sum $\Sigma_u$ runs over all unitary elements $u\in F_f(k)$, namely if $a_{({\rm \bar v}\times {\rm ph})} \neq 0$, the dispersion is linear, otherwise the dispersion is of order higher than 1. Notice that the character $\chi^{({\rm v}\times {\rm{\overline{ph}}})}\left(u T_0\right)$ in (\ref{DispersionCriterion}) is well-defined although $uT_0$ is anti-unitary, because the rep $({\rm v}\times {\rm{\overline{ph}}})$ is a real rep such that the allowed bases transformations can only be real orthogonal matrices which keep $\chi^{({\rm v}\times {\rm{\overline{ph}}})}\left(u T_0\right)$ invariant. 
%defined over the filed $\mathbb{R}$

Now we discuss how to obtain the matrices $\Lambda^m$ or $\tilde \Lambda^m$ in the case $a_{({\rm \bar v}\times {\rm ph})} \neq 0$.  Besides the $\eta_0$-symmetry condition (\ref{eta0-condition}), the symmetry constraints for unitary elements $u\in F_f(k)$ reads
\begin{align}
D(u)\tilde\Lambda^mR^{\rm T}(u)&=\sum_n D^{(\bar {\rm v}\times {\rm ph})}_{nm}(u)\tilde\Lambda^n,\label{F-condition}
\end{align}
with $R(u) = D(T_0) D^*(u) D^\dag(T_0).$
If we consider the set of matrices $\tilde\Lambda^m$ as a single column vector $\tilde{\pmb\Lambda}$ with
$(\tilde{\pmb\Lambda})_{(n-1) \times d^2+(\alpha-1)\times d+\beta}=(\tilde\Lambda^n)_{\alpha\beta},$
then the above constrains (\ref{F-condition}) indicates that $\tilde{\pmb\Lambda}$ carries identity rep of any unitary element $u\in F_f(k)$, namely 
\beq\label{WGm}
W(u)\tilde{\pmb\Lambda}=\tilde{\pmb\Lambda},
\eeq
with 
$
W(u)=D^{({\rm v}\times {\rm{\overline{ph}}})}(u)\otimes D(u)\otimes R(u).
$
Furthermore, $W(T_0)K \tilde{\pmb\Lambda}=\tilde{\pmb\Lambda}$ also holds where $T_0$ is represented as $W(T_0)K = D^{({\rm v}\times {\rm{\overline{ph}}})}(T_0)\otimes D(T_0)\otimes D(T_0)K$.

When projected to the $\eta_0$-symmetric space, the rep $W(u)$ becomes
$W_{\eta_0}(u)\equiv {\rm{Proj}_{\eta_0}}W(u) $ with the matrix entries given by
\Beq
\!\!\! &&[W_{\eta_0}(u)]_{m\gamma\rho,n\alpha\beta}={1\over2}\Big[ D_{mn}^{({\rm v}\times {\rm{\overline{ph}}})}(u) D_{\gamma\alpha}(u) R_{\rho\beta}(u) + 
\\&&\ \ \ \ \eta_0 \sum_\lambda D_{mn}^{({\rm v}\times {\rm{\overline{ph}}})}(uT_0) D_{\gamma\beta}(u) R_{\rho \lambda}(u)D_{\lambda\alpha}\left((T_0)^2\right) \Big]. 
\Eeq
The above matrix $W_{\eta_0}(u)$ is not a rep for the group formed by all unitary elements $u\in F_f(k)$, but it does contain the identity rep for that group, which shows that the dispersion contains linear term. Then the $\Lambda^m$ can be obtained via the following procedure: \\
(1) Obtain the common eigenvectors of $W_{\eta_0}(u)$ with eigenvalue 1 for all unitary elements $u\in F_f(k)$. These eigenvectors span a Hilbert subspace $\mathcal L^{(I)}$. \\
(2) Project $W(T_0)K$ into $\mathcal L^{(I)}$, and perform bases transformation such that $T_0$ is represented as $IK$ in $\mathcal L^{(I)}$. Then the new bases are all of the allowed $\tilde{\pmb \Lambda}$. \\
(3) Reshape each of the new bases into three matrices $\tilde \Lambda^x, \tilde \Lambda^y, \tilde \Lambda^z$. \\
(4) Finally one has $\Lambda^m=\tilde \Lambda^m [D(T_0)^{\rm T}]^{-1}$.\\

The above procedure can be applied to the 0-th order $k\cdot p$ terms, {\it i.e.}, to judge if an irrep %gives rise to a set of zero modes and nonzero modes. 
is simple or composite. 
This can be done by replacing the rep $({\rm v}\times {\rm{\overline{ph}}})$ in eq.(\ref{DispersionCriterion}) with the identity rep or the particle-hole rep. If $a_I=1, a_{\rm ph}=0$, then the irrep is simple and cannot form nonzero modes by itself; 
%corresponds to a set of irreducible zero modes; 
if $a_I=1, a_{\rm ph}\neq 0$, then the irrep is composite which corresponds to a set of irreducible nonzero modes. 
%and the obtained $\Lambda$ matrices are independent symmetric perturbation Hamiltonians. 

Similarly, one can obtain higher order $k\cdot p$ terms. For instance, the linear rep carried by the quadratic dispersion $k_\alpha k_\beta$ is shown in Tab.\ref{tab:prob}, where the $l=0$ rep is the identity rep and the $l=2$ rep can be reduced to direct sum of lower dimensional irreps $(\mu_1)+(\mu_2)+...$. From the irreps $(\mu_1),(\mu_2),...$ one can construct the corresponding matrices $\Lambda^{(\mu_1)m_1}, ...$ that constitute the effective Hamiltonian 
\Beq
H = \sum_{\mu, m} f^{(\mu)_m}(\delta k)\Lambda^{(\mu)m} + \delta {\bm k}\cdot\delta {\bm k}\ \Lambda^{(0)},
\Eeq
where $f^{(\mu)_m}(\delta k)$ is the $m$-th basis (a quadratic polynomial of $\delta k$) of the irrep $\mu$ and $\Lambda^{(0)}$ is the $l=0$ term.

\begin{table}[t]
\caption{Linear Reps of the lattice rotation ($l$ stands for angular momentum), time reversal $\T'$, spatial inversion $\mathcal I'$, charge conjugation $\C$ and particle-hole transformation $\mathcal P$ carried by probe fields: chemical potential $\lambda$, momentum $\pmb k$, %electric field $\pmb E$ 
  gradience of temperature $\pmb\nabla T$, magnetic field $\pmb B$, quadratic dispersion $k_{\alpha}k_\beta$, strain field $\epsilon_{\alpha\beta}$, and so on. The last column is the `particle-hole rep' of operations in $F_f(k)$. }
\small
\centering
\begin{tabular}{c|c|c|c|c|c|c|c}
\hline\hline
prob fields    & $\lambda$ &  $\pmb\nabla T$ &  $\pmb B$  & $\pmb k$ & $k_\alpha k_\beta$ & $\epsilon_{\alpha\beta}$ & $D^{\rm{(ph)}}$\\ %$\pmb E$
\hline 
rotation           &    $l=0$      &  $l=1$    &  $l=1$      &   $l=1$       &  $l=0,2$   &  $l=0,2$ & $+1$\\
$\T'$               &    $+1$       &    $+1$   &    $-1$     &  $-1$        &    $+1$   &  $+1$ & $+1$\\
$\mathcal I'$    &    $+1$       &    $-1$   &    $+1$     &    $-1$      &    $+1$   &  $+1$ & $+1$\\
$\C$               &    $-1$       &    $+1$   &    $+1$     &   $+1$       &    $+1$   &  $+1$ & $+1$\\
$\mathcal P$   &    $+1$       &    $+1$    &   $+1$      &   $-1$       &    $+1$   &  $+1$ & $-1$\\
\hline\hline                      
\end{tabular}\label{tab:prob}
\end{table}

\subsection{Physical properties}
The above method can also be applied to study the physical properties of the quasiparticles, namely, the response to external fields $\pmb f$, such as the gradience of temperature $\pmb\nabla T$, 
%electric field $\pmb E$, 
magnetic field $\pmb B$, strain field $\epsilon_{\alpha\beta}$, and so on. As summarized in Tab.\ref{tab:prob}, the prob fields carry linear reps of $g\in F_f(k)$,  namely
\Beq
g f_\alpha = \sum_\beta D_{\beta\alpha}^{(\mu)}(g) f_\beta,
\Eeq
where $(\mu)$ is a linear rep of $F_f(k)$. Then in the low energy effective Hamiltonian, fermion bilinear terms should be added to the $k\cdot p$ model,
\Beq
H' = \sum_\alpha f_\alpha \Lambda_\alpha.
\Eeq
Similar to (\ref{varvZ}), one has $D(g)K_{s(g)}\Lambda_\alpha K_{s(g)} D(g)^{-1} = \sum_\beta D^{{\rm (ph)}\times (\mu)}_{\beta\alpha}(g) \Lambda_\beta$. The matrices $\Lambda_\alpha$ can be obtained using the method introduced in the previous subsection.
Notice that electric fields $\pmb E$ are not an applicable probe fields, since electric fields are screened in SCs and do not couple to the charge neutral SFs/QSLs. Instead,  temperature gradience $\pmb\nabla T$ is an ideal probe field for SCs/SFs/QSLs, and $\pmb\nabla T$ varies under symmetry operations in the same way as $\pmb E$. Magnetic fields can be applied as Zeeman fields coupling to SCs/SFs/QSLs. Despite the Meisner effect of SCs, magnetic fields  can be applied along in-plane directions for thin films of SCs. 
%Notice that electric field and magnetic field cannot enter the body of SCs, temperature gradience? Charge neutral SFs and QSLs have no response to electric field (temperature gradience?), but response to magnetic field due to the magnetic momentum carried by the spin. All of them correspond to temperature gradience??

Due to the `particle-hole symmetry' $\mathcal P$ and the effective charge conjugation symmetry $\C$, the response of the Bogoliubov quasiparticles to external field is different from quasiparticles in semimetals. For instance, the chemical potential $\lambda$ is a scaler field under space-time operations, but it does not carry an identity rep of the group $F_f^c(k)=F_f(k)+\C F_f(k)$, hence the response matrix $\Lambda_0$ is not simply an identity matrix. 
%If $\mathcal P$ is represented as $\tau_x\otimes I$, then the response matrix $\Lambda_0$ for $\lambda$ takes the form $\tau_z\otimes I$. 
Consequently changing the chemical potential may either shift the position of the nodal point or fully gap out the quasiparticle.

Generally, the external fields may give mass to the gapless quasiparticles. The resulting gapped state may have nontrivial topological properties, such as nontrivial Chern numbers. These information are maintained in the $k\cdot p$ model and the response matrix mentioned above. In the following section, we will illustrate the above results by concrete lattice models.

%%%%%%%%%%%%%
% Band Rep moved to the SM
%%%%%%%%%%%%%

\section{Lattice Models: gapless quasiparticles in BdG systems} \label{Sec:ModelSection}

In this section, we present four models in two-dimensional translationally symmetric lattice systems to illustrate the mechanisms and methods proposed in the previous sections. The first three models discuss zero modes protected by an effective charge conjugation $\C$, while the fourth model is a mean field Hamiltonian of $Z_2$ QSLs in which the projective symmetry group contains symmetry elements that are non-diagonal in the particle-hole sector.

\subsection{Square lattice: $G_b=p4gm\times Z_2^{\mathcal{T}}$}\label{sec:p4gmz2T}

\begin{figure}[t]
\centering
\label{p4gmmodelhopping}
\includegraphics[width=0.4\textwidth]{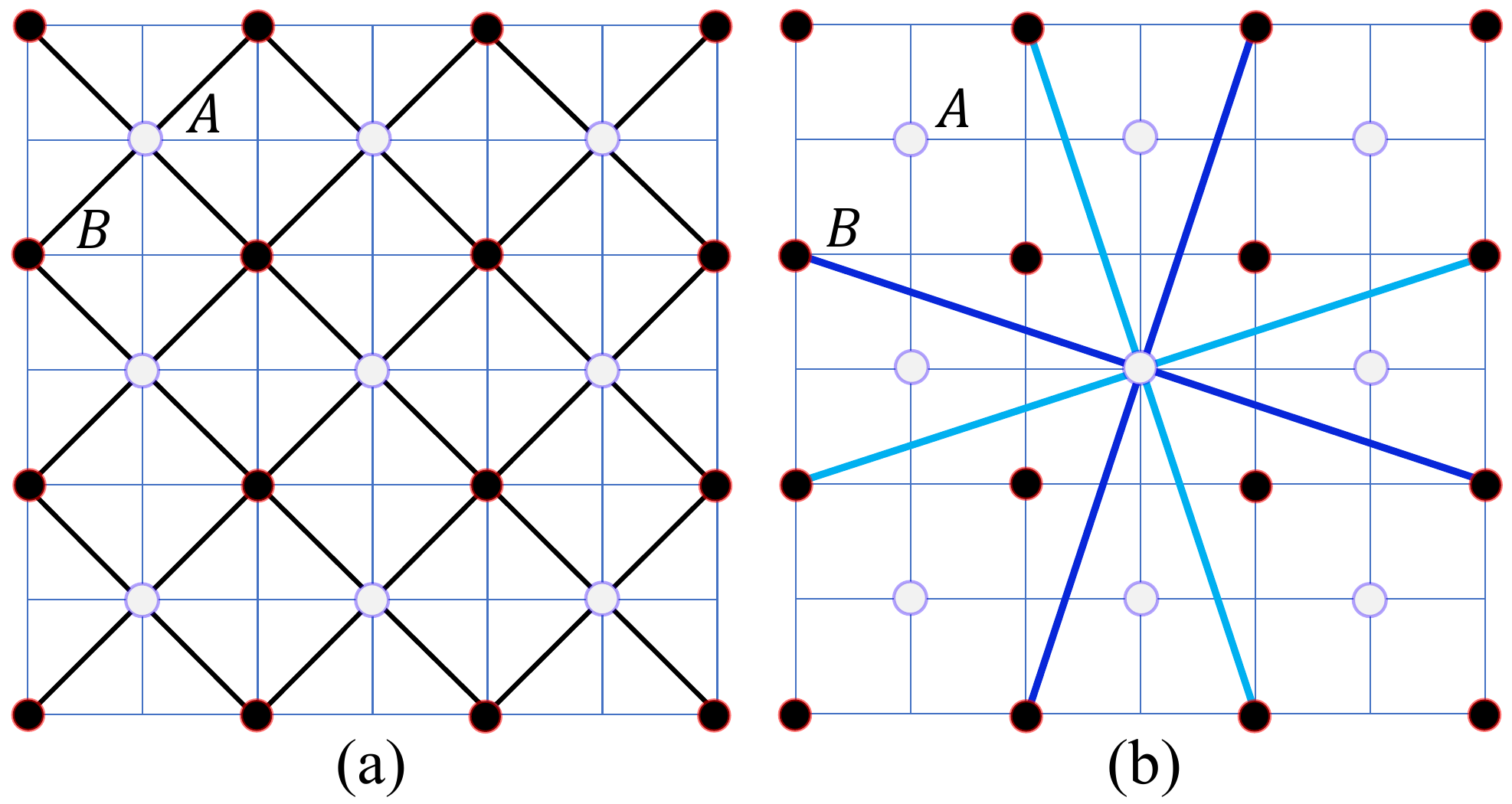}
 \caption{The model (\ref{p4gmHwosoc}) with symmetry  group $G_b=p4gm\times {Z}_2^\mathcal{T}$. (a) the hopping terms; (b) the pairing terms. The black and white circles stand for A and B sub-lattices respectively. The black bonds in (a) represent the hopping terms, and in (b), the cyan bonds stand for the phase $e^{\mi\theta_{i,j}}=+1$ while the blue bonds stand for $e^{\mi\theta_{i,j}}=-1$.}
\label{p4gmmodel}
\end{figure}

The first model is a spin-1/2 SC/SF of the CI class. We consider the square lattice with the magnetic  wallpaper group $G_b=p4gm\times Z_2^{\mathcal{T}}$. On the 2a Wyckoff positions A=(0,0) and B=(${1\over2}, {1\over2}$) with site group $\mathscr C_4\times Z_2^{\mathcal{T}}$, we place spin-1/2 fermionic orbitals $[c_\up^\dag, c_\dn^\dag]$. Assuming SOC is negligible, then the spin-1/2 fermion forms a Kramers doublet under time reversal $\T$ and is invariant under the site group rotation $\mathscr C_4$. Due to the SU(2) spin rotation symmetry, the BdG Hamiltonian can be written in the reduced Nambu bases 
$
\psi_i^\dag=[c_{{A}\up}^\dag,  c_{{B}\up}^\dag, c_{{A}\dn}, c_{{B}\dn}]_i,
$ 
where the symmetry group $G_f$ is generated by $C_4=C_4, \T'=(IK ||\T), \mathcal M_x'=\{e^{-\mi{\tau_z\over2}\pi}||\mathcal M_x|{1\over2},{1\over2}\}$ (here $e^{-\mi{\tau_z\over2}\pi}$ is a charge operation defined in (\ref{SU2cr})) and translation, with
\Beq
&&\!\!\!\!\!\!\hat C_4 \psi^\dag_i \hat C_4^{-1} \!=\! \psi^\dag_{C_4 i}, \ \ \ 
\hat \T' \psi^\dag_i \hat \T^{'-1} \!=\! \psi^\dag_i K, \\
&&\!\!\!\!\!\!\hat {\mathcal M}_x' \psi^\dag_i \hat {\mathcal M}_x^{'-1} = \\
&&\mi[c_{{B}\up, \mathcal M_x i}^\dag,  c_{{A}\up, \mathcal M_x i + (0,1)}^\dag, -c_{{B}\dn, \mathcal M_x i}, -c_{{A}\dn, \mathcal M_x i + (0,1)}].
\Eeq
Besides the particle-hole transformation
\[
\hat{\mathcal P} \psi^\dag_i \hat{\mathcal P}^{-1} = [c_{{A}\dn},  c_{{B}\dn}, -c_{{A}\up}^\dag, -c_{{B}\up}^\dag]_i K,
\]
we  further consider an effective charge conjugation $\C= P_f^B e^{-\mi{\tau_y\over2}\pi} $ with $P_f^B[c_{A}, c_B](P_f^B)^{-1} = [c_A,-c_B]$,
$$
\widehat { \C} \psi^\dag_i \widehat { \C}^{-1} = [c_{{A}\dn}, -c_{{B}\dn}, -c_{{A}\up}^\dag, c_{{B}\up}^\dag]_i,
$$ 
and ${\widehat\C}^2=P_f$. Then we obtain the complete symmetry group $F^c_f$, whose band representation can be obtained according to App.\ref{sec:BandReps}. Irreducible zero modes are found at $X=({1\over2},0), Y=(0,{1\over2}), M=({1\over2},{1\over2})$ points, where the $X$ and $Y$ points are related by $C_4$ symmetry.

\begin{figure}[t]
\centering
\subfigure[ band structure with $\C$]{\includegraphics[scale=0.17]{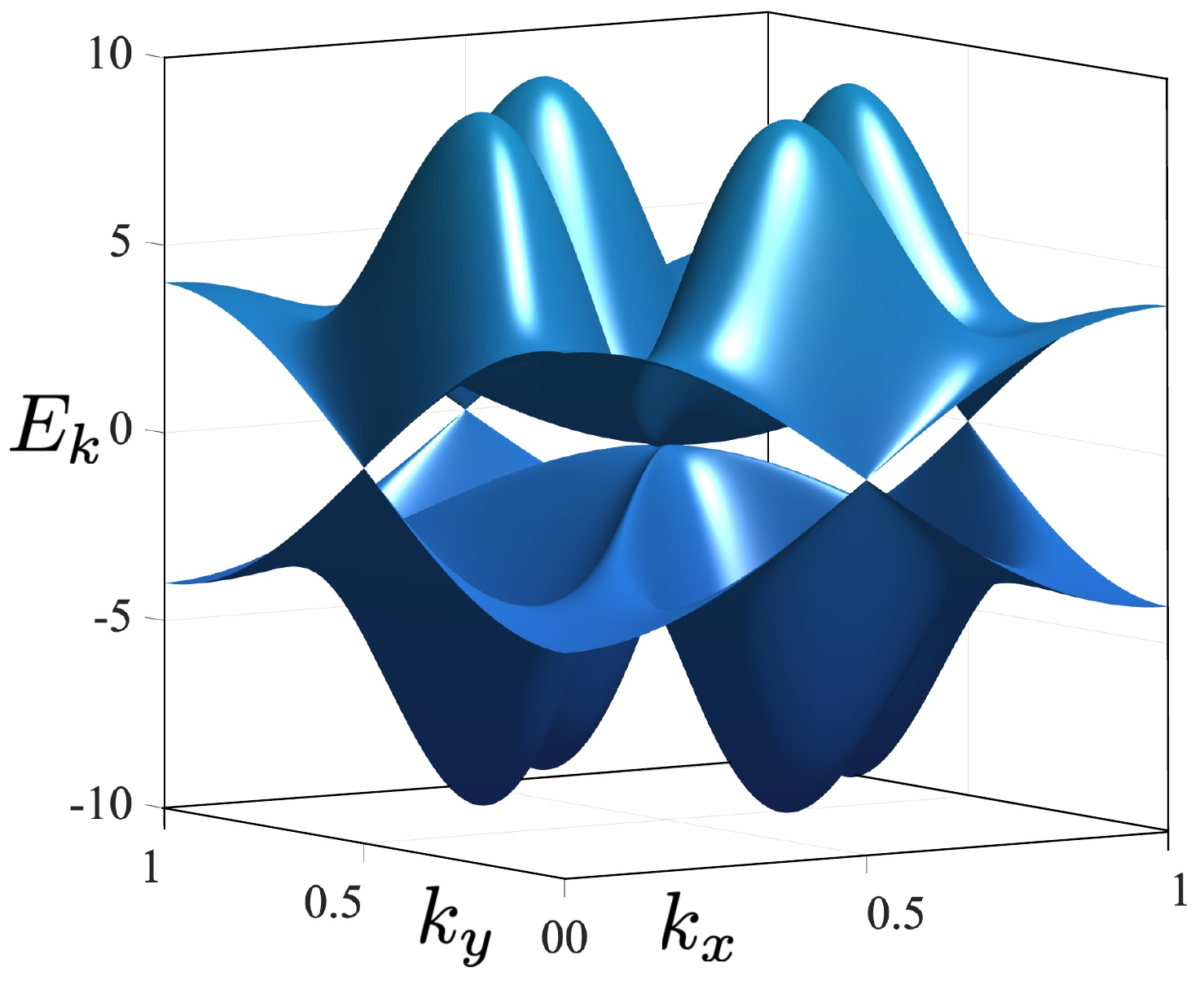}\label{p4gmBand_a}}
\subfigure[ zero modes in (a)]{\includegraphics[scale=0.19]{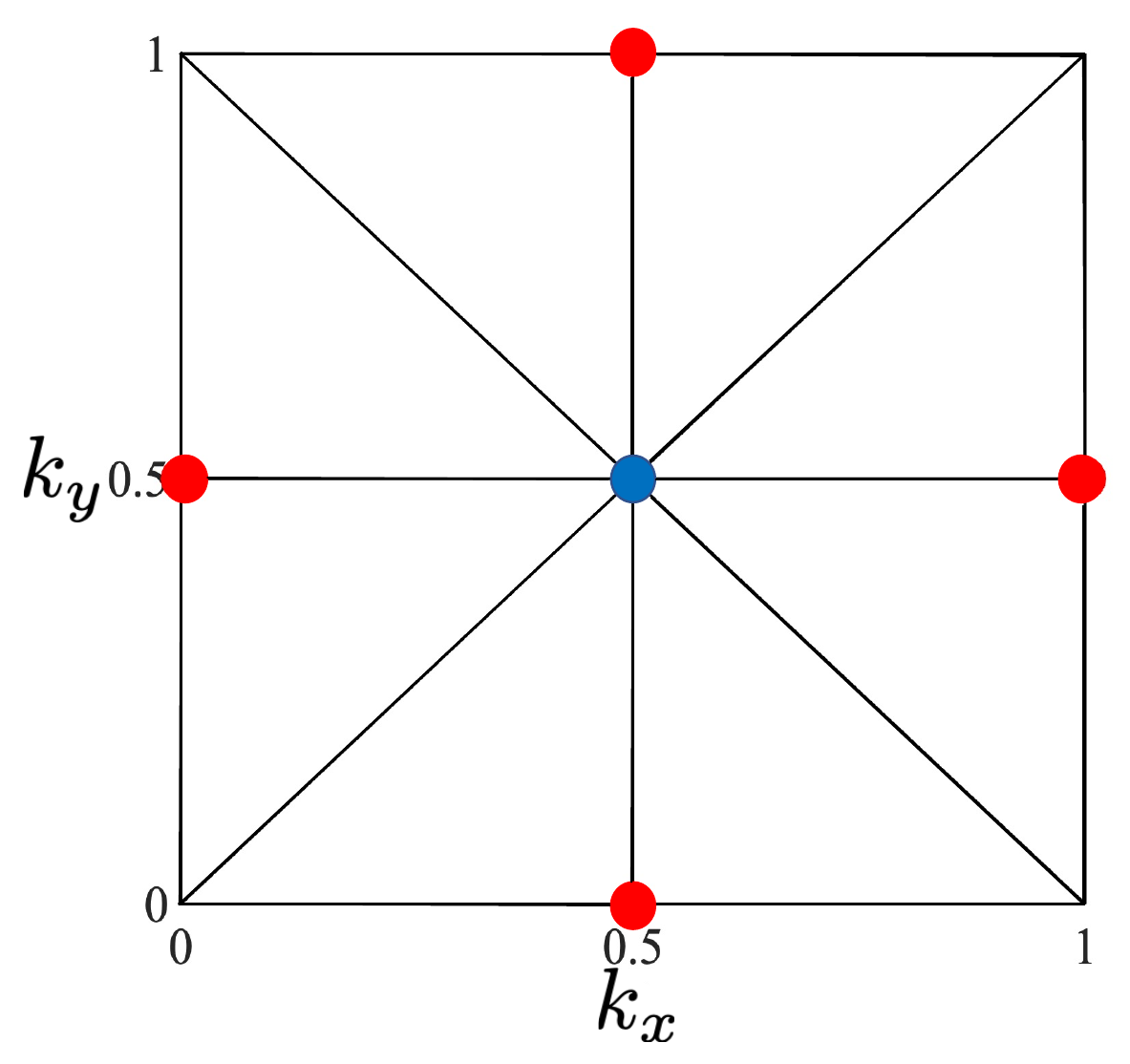}\label{p4gmBand_a1}}
 \subfigure[ band structure without $ \C$]{\includegraphics[scale=0.17]{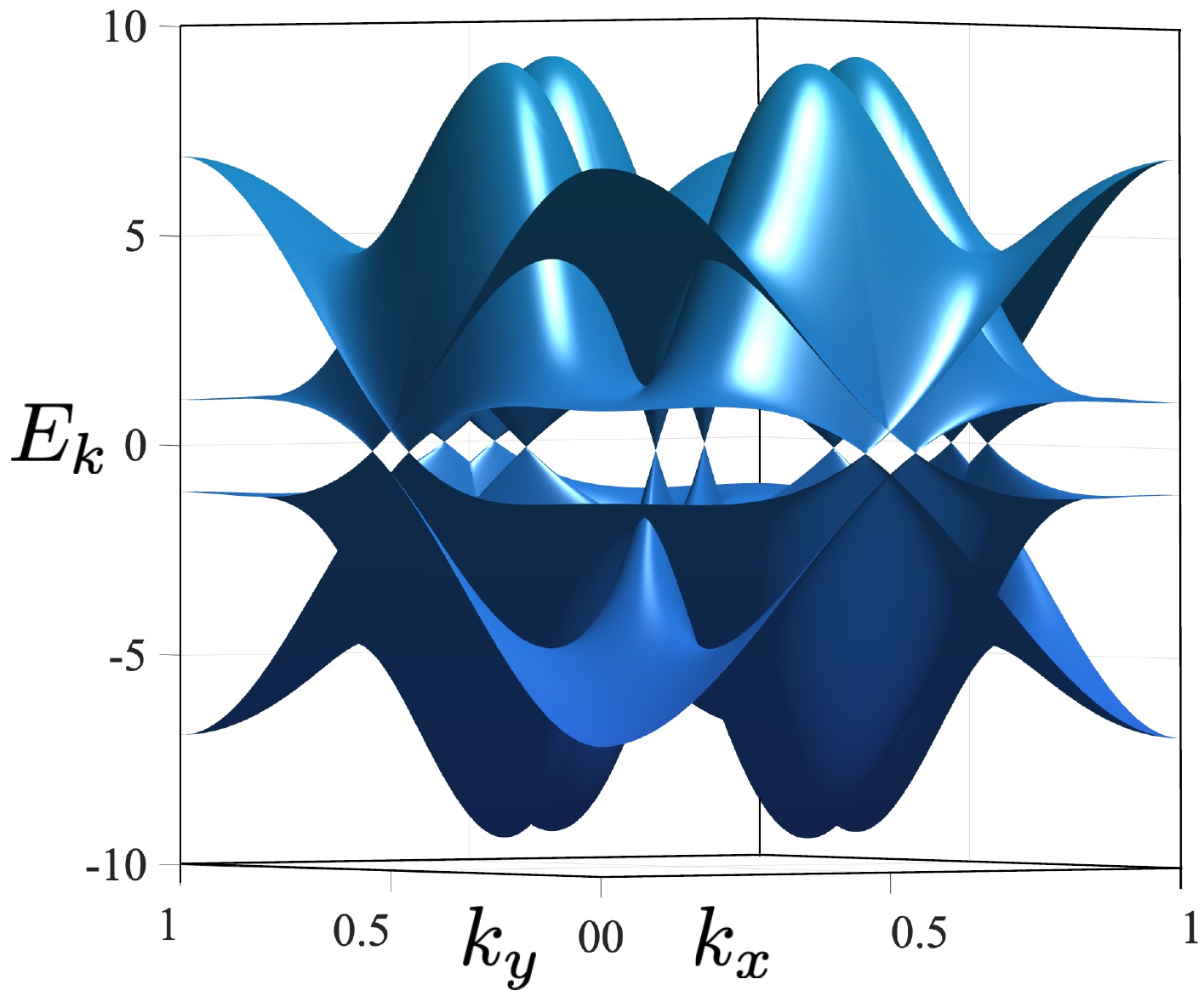}\label{p4gmBand_b}} 
 \subfigure[ zero modes in (c)]{\includegraphics[scale=0.19]{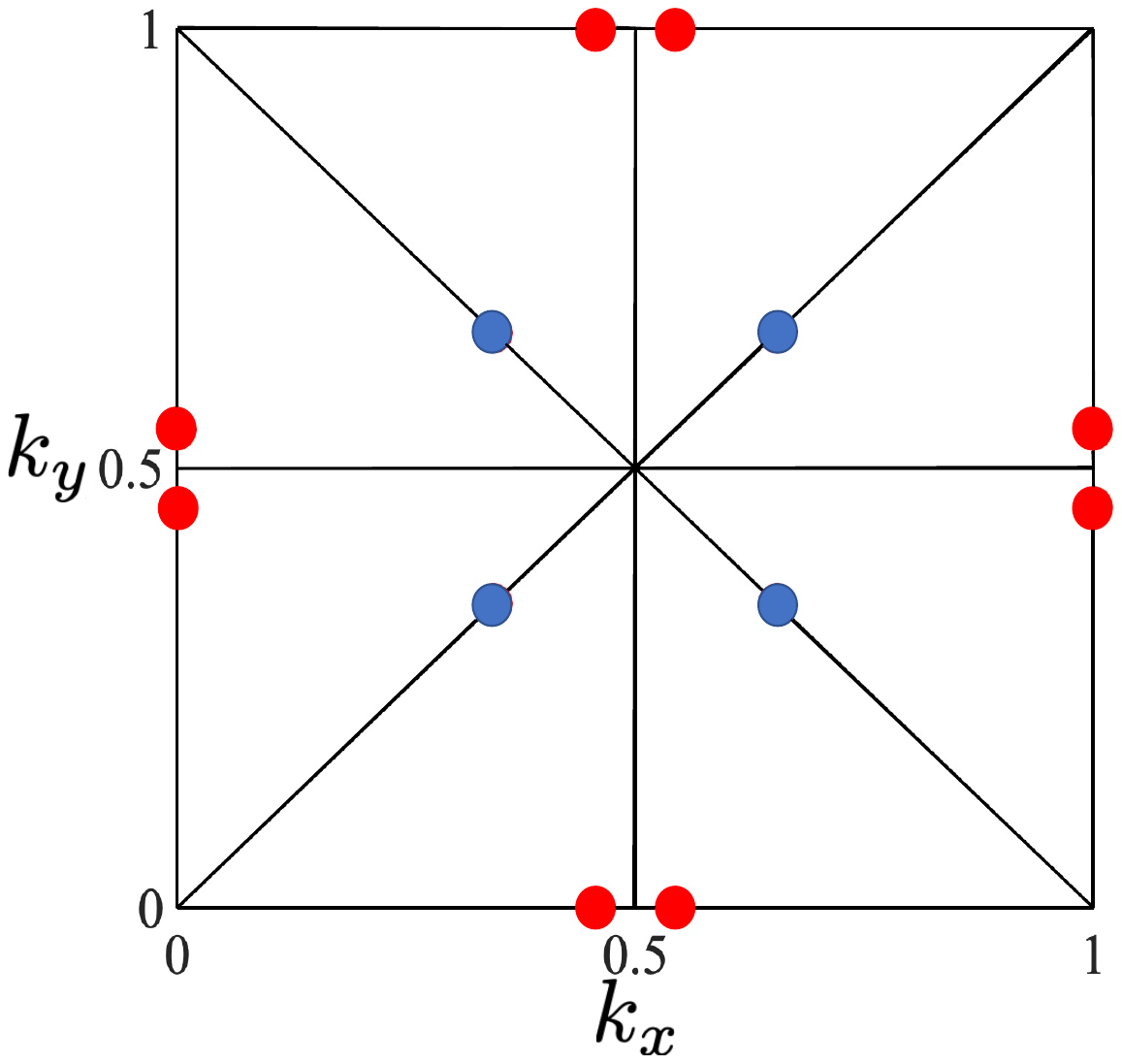}\label{p4gmBand_b1}}
\caption{(a) The band of the Hamiltonian $H_0$ in (\ref{p4gmHwosoc}) with $t=1,\Delta_2=1.5$; (b) the blue and red dots illustrate the positions of the zero modes in (a); (c)The Band structure of the Hamiltonian $H_0+H_1$ with  $t=1,\Delta_2=1.5,\lambda=0.5,t_0=0.6$; (d) the blue and red dots illustrate the positions of the zero modes in (c).} \label{p4gmBand}
\end{figure}

To illustrate the above results, we consider a lattice model with inter-sublattice hopping and pairing terms,
\beq\label{p4gmHwosoc}
H_0 &=&{t\over2}\sum_{\langle i_\zeta,j_{\bar\zeta}\rangle} c_{i_\zeta\sigma }^\dag c_{j_{\bar\zeta}\sigma}  
\notag\\
&&+\Delta_2 \!\!\sum_{\langle\langle i_\zeta ,j_{\bar\zeta}\rangle\rangle}\!\! e^{\mi\theta_{i, j}} (c_{i_\zeta\up }^\dag c_{j_{\bar\zeta}\dn}^\dag - c_{i_\zeta\dn }^\dag c_{j_{\bar\zeta}\up}^\dag )  + {\rm h.c.}
\eeq
where $\zeta=A,B$ and $\bar\zeta\neq\zeta$.
The phase $e^{\mi\theta_{i, j}} =\pm1$ of the pairing term on the bond $(ij)$ alternatively equals to $1,-1$, as shown in Fig.\ref{p4gmmodel}. After Fourier transformation, the Hamiltonian can be written in the bases $\psi_k^\dag=[c_{kA\up}^\dag, c_{kB\up}^\dag, c_{-kA\dn}, c_{-kB\dn}]$.  The band structure with parameters $t=1, \Delta_2=1.5$ is shown in Figs.\ref{p4gmBand_a}\!\! \&\!\! \ref{p4gmBand_a1}, where the 2-fold degeneracy of the bands are due to the anti-unitary symmetry $ {\C}\mathcal I \T'$ with $\mathcal I={C}_4^2$, $({ {\C}\mathcal I\T'})^2 =-1$. In the following, we analysis the protection of zero modes.

At the $X({1\over2},0)$ and $Y(0,{1\over2})$ points, the little co-group $F_f^c(k)/Z_2^f =  \mathscr C_{2v}\times Z_2^{\C} \times Z_2^\mathcal{T}\times Z_2^\mathcal{P}$ is represented as
\Beq
&&\!\!\!D(C_2)=I_2\otimes\nu_z, \ D(\mathcal M'_m)= \tau_z\otimes \nu_y,\ D(\T')K=I_4K, \\
&&\!\!\! D(\mathcal P)K=\tau_y\otimes I_2K,\ D(\C)=-\mi\tau_y\otimes\nu_z, 
\Eeq
with $m=x,y$ and $\nu_{x,y,z}$ the Pauli matrices acting on the $A,B$ sub-lattices. This rep corresponds to a set of 4-fold irreducible zero modes.  Around the $X,Y$ nodal points, the effective $k\cdot p$ Hamiltonian is given by 
$$H_{\rm eff1}= a\delta k_1 \tau_x\otimes \nu_y + b\delta k_2 \tau_z\otimes\nu_y,$$ 
where $a,b$ are constants, $ k_1= k_y, k_2=k_x$ for the X point and $k_1=k_x, k_2=k_y$ for the Y point. 
Furthermore, the response to the external fields can also be derived based on the theory discussed in Sec.\ref{sec:kdotpmodel}. For instance, the response to the temperature gradient $\pmb \nabla T$ at the $Y$ point is linear, which is given by 
\Beq
H_{\pmb \nabla T}=a'\partial_x T  \tau_z\otimes\nu_x + b'\partial_y T  \tau_x\otimes\nu_x. \notag
\Eeq

%While electric and magnetic field cannot penetrate the surface, }

At the $M({1\over2},{1\over2})$ point, the little co-group $F_f^c(M)/Z_2^f =  \mathscr C_{4v}\times Z_2^{\C} \times Z_2^\mathcal{T}\times Z_2^\mathcal{P}$ is represented as 
\Beq
&&D(C_4)=I_2\otimes \nu_z,\ D(\mathcal M'_x)= \tau_z\otimes\nu_y,\ D(\T')K= I_4K, \\
&&D(\mathcal P)K= \mi\tau_y\otimes I_2 K,\ D( \C)=-\mi\tau_y\otimes\nu_z.
\Eeq 
The dispersion is quadratic at the nodal point $M$ with the following $k\cdot p$ Hamiltonian 
$$
H_{\rm eff2}=c(\delta k_x^2-\delta k_y^2)\tau_x\otimes \nu_x + d\delta k_x \delta k_y\tau_z\otimes \nu_x.
$$ 
The lowest order response to the temperature gradient $\pmb \nabla T$ at the $M$ point is of second order, which is given by
$$
H_{\pmb\nabla T}=c'({\partial_x T}^2 - {\partial_y T}^2) \tau_x\otimes \nu_x  + d'{\partial_x T} {\partial_y} T  \tau_z\otimes \nu_x.
$$ 

No zero modes appear at the $\Gamma$ point with $F_f^c(\Gamma)/Z_2^f =  \mathscr C_{4v}\times Z_2^{\C} \times Z_2^\mathcal{T}\times Z_2^\mathcal{P}$. Compared with the M point,  $C_4$ and $\mathcal M'_x$ are represented differently as $D(C_4)=I_2\otimes I_2$ and $D(\mathcal M'_x)=\mi\tau_z\otimes\nu_x$, consequently the irrep at $\Gamma$ point correspond to irreducible nonzero modes.

Now we investigate the stability of the zero modes by adding symmetry breaking terms to the Hamiltonian. Interestingly, if one removes the mirror symmetry and preserves all the other symmetries by adding, for instance, the $\Delta_0$ perturbation term in Eq.(\ref{H''}), then the dispersion is qualitatively the same as Fig.\ref{p4gmBand_a} and the 4-fold degenerate zero modes at the X, Y and M points still remain. However, unlike the case $\Delta_0=0$ where the set of 4-fold $\C$-zero modes is irreducible, when $\Delta_0\neq0$ the 4-fold zero modes are reducible minimal $\C$-zero modes. %-- a direct sum of two sets of irreducible zero modes. % uncoupled to each other 2-fold

On the other hand, if one breaks the $ \C$ symmetry while preserves all the other symmetries by adding the intra-sublattice terms %an on-site chemical potential 
\beq\label{H'}
H_1 = \lambda\sum_{i,\zeta,\sigma}c^\dag_{i\zeta\sigma}c_{i\zeta\sigma} + t_0\sum_{\langle i,j\rangle,\zeta,\sigma} \big(c_{i\zeta\sigma}^\dag c_{j\zeta\sigma}^{} +{\rm h.c.}\big),
\eeq
the situations are quite different. From the band structure shown in Figs.\ref{p4gmBand_b}\!\! \&\!\! \ref{p4gmBand_b1} for the parameters $\lambda=0.5, t_0=0.6$, one can see that the 4-fold zero modes at the X (or Y) points split into two 2-fold zero modes which are separated along the $\hat x$ (or $\hat y$) direction. Since the zero modes are away from the high symmetry point, they are now resulting from level crossings protected by different quantum numbers of the $\mathcal M'_{y}$ (or $\mathcal M'_{x}$) symmetry. Similarly, the 4-fold zero modes at the M point split into two pairs of 2-fold zero modes, which are separated and spreading along the $(\hat x+\hat y)$- and $(\hat x-\hat y)$- directions respectively.  These zero modes are associated with level crossings protected by different quantum numbers of the $\mathcal M'_{x-y}$ or $\mathcal M'_{x+y}$ symmetry.

\begin{figure}[t]
  \centering
\subfigure[$\mathcal{I}\mathcal{T'}$ symmetric: band structure]{\includegraphics[scale=0.18]{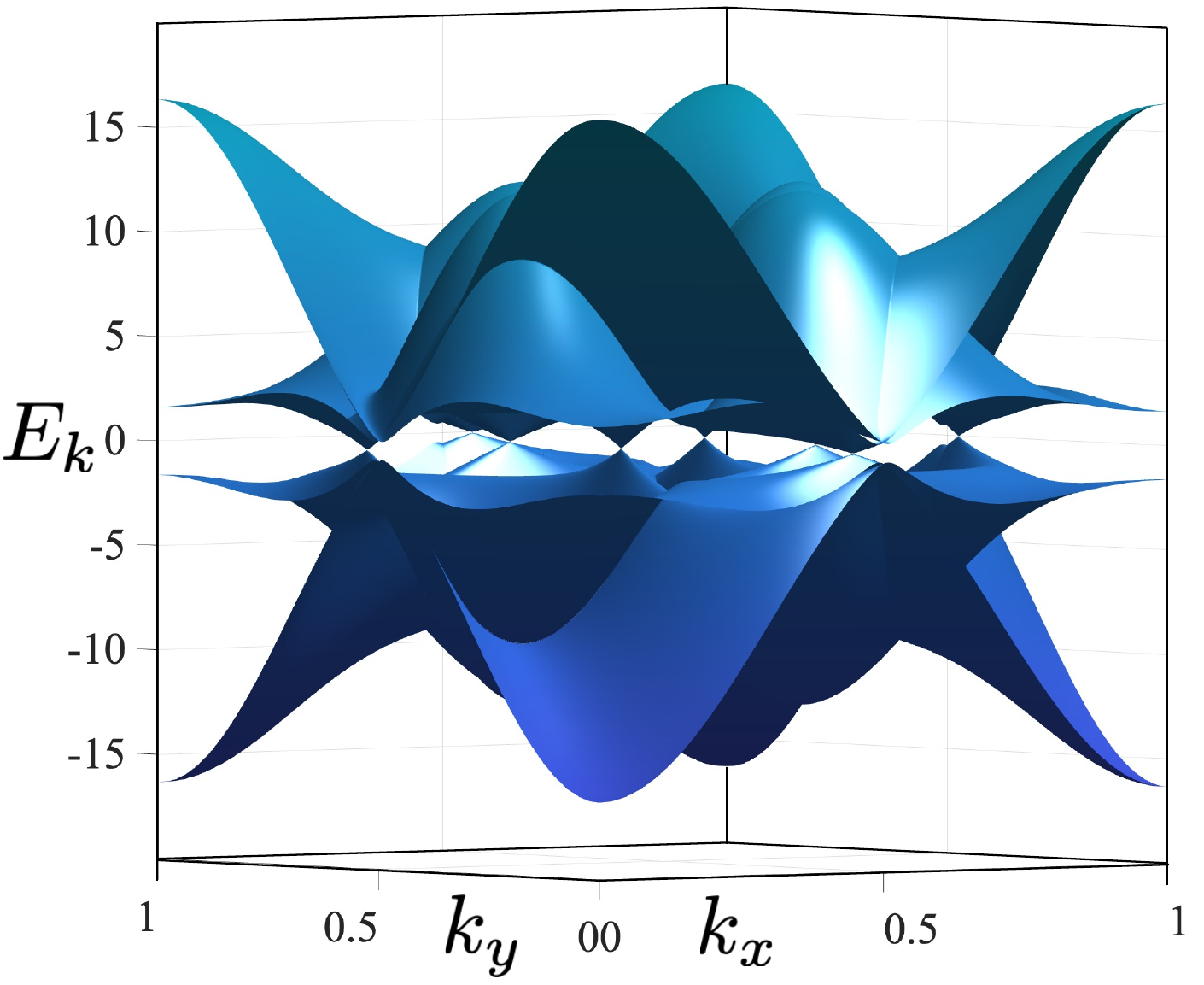}}
\subfigure[zero modes in (a)]{\includegraphics[scale=0.19]{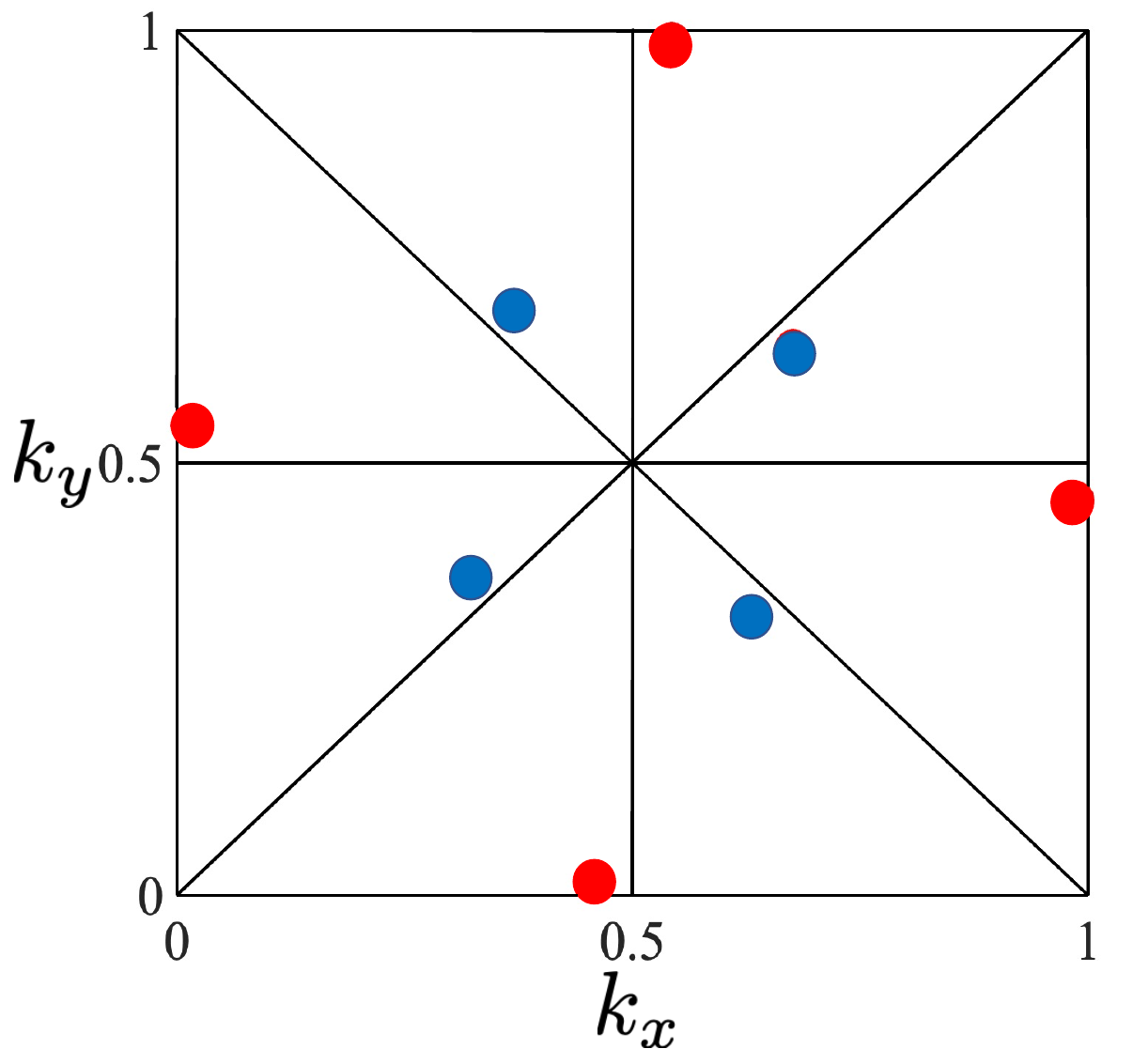}}
  \caption{(a) The Band structure of the Hamiltonian $H_0+H_1+H_2$ with $t=1,\Delta_2=1.5,\lambda=0.5,t_0=0.6,\Delta_0=2,\Delta_1=1.7$. (b) the blue and red dots illustrate the positions of zero modes in (a).}\label{p4gmDeltaBand}
\end{figure}

Zero modes still exist if one breaks both the mirror and $\C$ symmetries by further turning on the nearest neighbor inter-sublattice pairing and intra-sublattice pairing terms 
\beq\label{H''}
H_2 &=&  \Delta_0\!\!\!\sum_{\langle i_\zeta ,j_{\bar\zeta}\rangle}\!\! (c_{i_\zeta\up }^\dag c_{j_{\bar\zeta}\dn}^\dag - c_{i_\zeta\dn }^\dag c_{j_{\bar\zeta}\up}^\dag )  \notag\\
&&+\Delta_1 \!\! \sum_{\langle i,j\rangle,\zeta,\sigma} (c_{i\zeta\up}^\dag c^\dag_{j\zeta\dn}-c_{i\zeta\dn}^\dag c^\dag_{j\zeta\up}) +\ {\rm h. c.}
\eeq 
Here the $\Delta_0$ term breaks the the mirror symmetry while the $\Delta_1$ term breaks both the mirror and $\C$ symmetries, and both terms preserve the $C_4$ and $\T'$ symmetries. The Band structure of the Hamiltonian $H_0+H_1+H_2$ with $t=1,\Delta_2=1.5,\lambda=0.5,t_0=0.6,\Delta_0=2,\Delta_1=1.7$ are shown in Fig.\ref{p4gmDeltaBand} with the cones protected by the combined $\mathcal I\T'$ symmetry with $(\widehat {\mathcal I\T'})^2 = 1$. The $\mathcal I\T'$ symmetry gives rise to $\pi$-quantized Berry phase for loops in the BZ surrounding a single zero mode, which protects the cones from being solely gapped out. This mechanism makes the zero modes robust when the $\mathcal I\T'$ symmetry is preserved.

\begin{figure}[b]
  \centering
\includegraphics[scale=0.25]{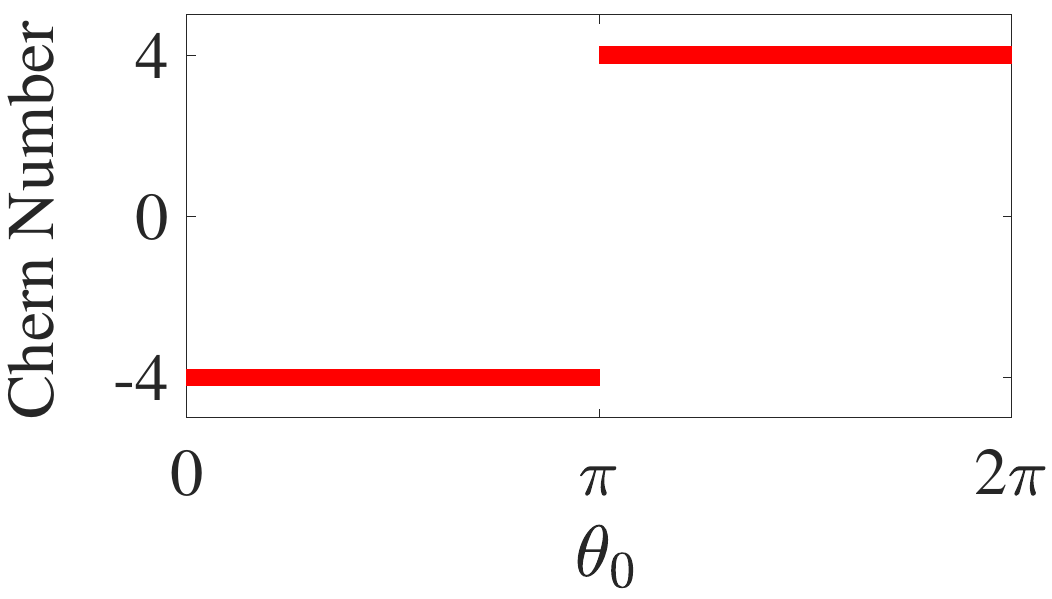}
  \caption{The Chern Number as a function of $\theta_0$, with $\theta_0$ the phase of $\tilde\Delta_0=\Delta_0 e^{\mi\theta_0}$. }\label{Chern_Number}
\end{figure}

Finally, when the pairing terms are complex, then the time reversal symmetry $\T'$ and $\mathcal I\T'$ are broken and the zero modes disappear. The fully gapped bands becomes a topological SC/SF with nonzero Chern number. Due to the unbroken $C_4$ symmetry, the Chern number is either $4$ or $-4$. As shown in Fig.\ref{Chern_Number}, if we keep $\Delta_{1,2}$ to be real but vary the phase of $\Delta_{0}$, then the Chern number can be changed from 4 to $-4$ or vice versa. The phase transition occurs at $\theta_0={\pi}$ ($\Delta_0$ is real again), where the gap closes at 8 nodal  points in the BZ. 

From the above model, we can see that the zero modes are quit robust even in the presence of symmetry breaking perturbations. When the effective charge conjugation symmetry $\C$ is preserved, the zero modes appear at high symmetry points. When the $\C$ is violated, like in most realist SCs/SFs, the zero modes do not disappear at once as long as certain protecting symmetries are still preserved. The $\C$ symmetry is of theoretical significance for BdG systems in the sense that it provides the physical origin of the zero modes when the $\C$ breaking terms are treated as perturbations. This provides a systematic way to investigate the physical properties of nodal SCs/SFs with gapless quasiparticle excitations.

\subsection{Triangular lattice (spin-${1\over2}$): $G_b=p3\times  {Z}_2^{\mathcal{T}}$} \label{LatticeModel:B}

The second example is a model for $p$-wave SC of the DIII class. We consider the magnetic  wallpaper group $G_b=p3\times Z_2^{\mathcal{T}}$ on triangular lattice, and place two orbitals $[c^\dag_{\up}, c^\dag_{\dn}]$ at the Wyckoff position $1a (0,0)$ with cite group $\mathscr C_{3}\times Z_2^{\mathcal{T}}$, as shown in Fig.~\ref{p3_wickoff_direction}. The bases $[c^\dag_{\up}, c^\dag_{\dn}]$ respectively carry angular momentum $j=\pm{3\over2}$ as a consequence of spin-orbit coupling, they reverse their sign under $C_3$ and form a Kramers doublet under $\T$.
\begin{figure}[t]
  \centering
\includegraphics[scale=0.25]{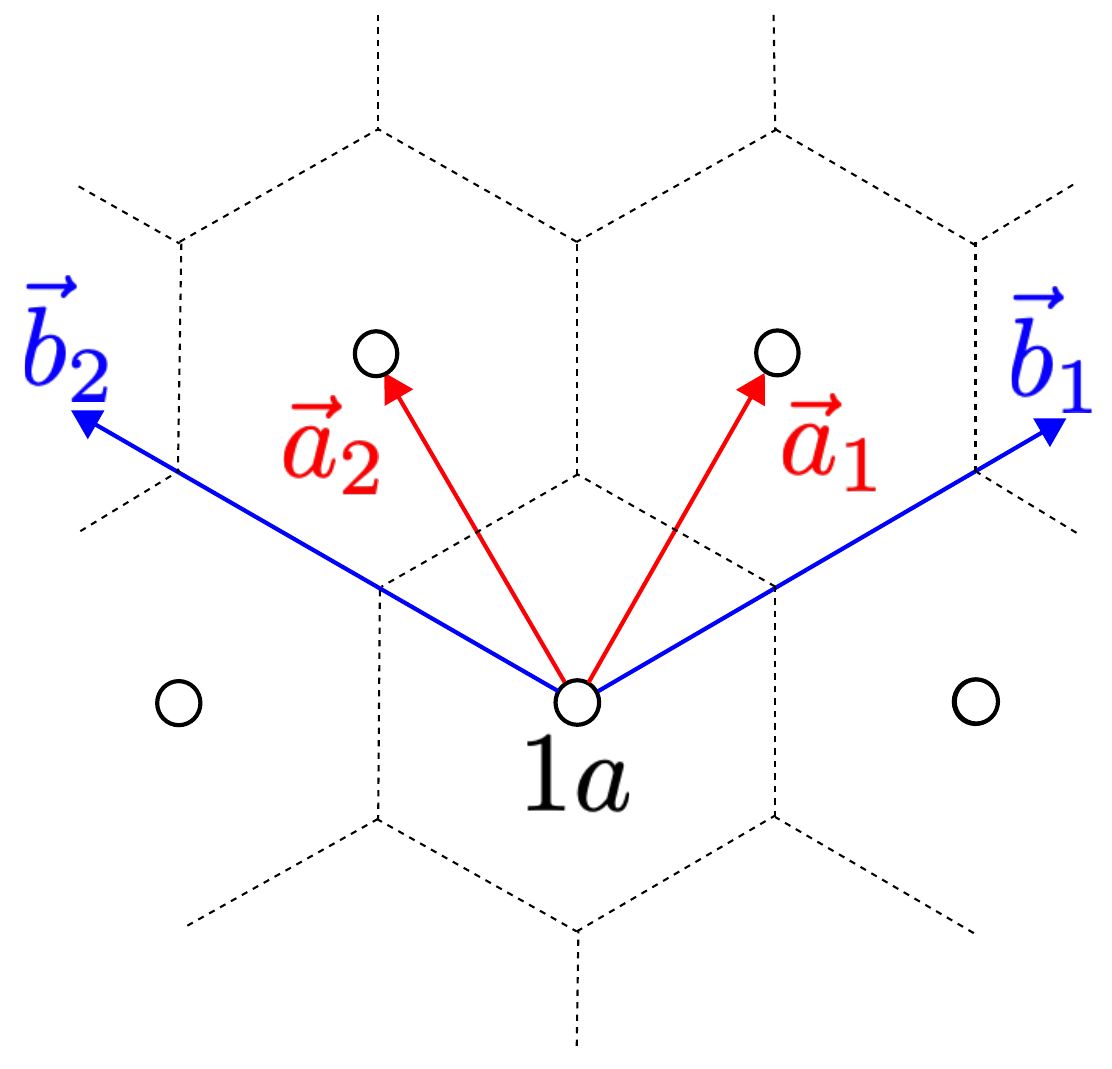}
\caption{Lattice basis vectors (red) and their dual vectors (blue) for the magnetic wallpaper group $G_b=p3\times Z_2^{\mathcal{T}}$. The hollow circles label the (maximal) $1a$ Wyckoff positions and the hexagons enclosed by the doted lines stand for unit cells.}\label{p3_wickoff_direction}
\end{figure}
In the complete Nambu bases $\Psi^\dag_i =[c_\up^\dag, c_\dn^\dag, c_\up, c_\dn]_i$, the symmorphic symmetry operations $C_3$ and $\T$ are diagonal in the particle-hole sector, 
\Beq
\hat C_{3} \Psi^\dag_i \hat C_{3}^{-1}= - \Psi^\dag_{C_3 i} I_4,\ \ \hat{\mathcal{T}}\Psi^\dag_i \hat{\mathcal{T}}^{-1}=\Psi^\dag_i I_2\otimes (-\mi\sigma_y)K.
\Eeq
With the particle-hole symmetry $\mathcal P$ and the effective charge-conjugation symmetry $\C =-e^{-\mi{\tilde\sigma_y\over2}\pi}e^{-\mi{\tilde\tau_y\over2}\pi}$ (see (\ref{SU2s0}) and (\ref{SU2c0})) which act as 
\Beq
\ \hat{\mathcal P} \Psi^\dag_i \hat{\mathcal P}^{-1}=[ c_\up, c_\dn, c^\dag_{\up},c^\dag_{\dn}]_iK,\ \ \widehat{\C} \Psi^\dag_i \widehat{\C}^{-1} = [c_\up, c_\dn, c_\up^\dag, c_\dn^\dag]_i,
\Eeq
one obtains the full symmetry group $F^c_f$. From the band representation, minimal $\mathcal{C}$-zero modes are found at the $\Gamma$ point and the three M points.  

To illustrate, we construct the following lattice model preserving the $G_b\times Z_2^{\C}$ symmetry ($Z_2^{\C}=\{E,{\C}\}$), 
\beq\label{TheHamiltonianP3T}
H_0&= &\mi t\sum_{\langle ij\rangle} \left(c_{i\up}^\dag c_{j\dn}+c_{i\dn}^\dag c_{j\up}\right) +\mi \Delta_1 \sum_{\langle ij\rangle} \left( c_{i\up}^\dag c_{j\up}^\dag - c_{i\dn}^\dag c_{j\dn}^\dag \right)  \notag\\
&&+\mi \Delta_2 \sum_{\langle\langle i'j'\rangle\rangle} \left( c_{i'\up}^\dag c_{j'\dn}^\dag +c_{i'\dn}^\dag c_{j'\up}^\dag \right) + {\rm h.c.} 
\eeq
where the $\langle ij\rangle$ represents the nearest neighbors and $\langle\langle i'j'\rangle\rangle $ represents the next nearest neighbors. The Hamiltonian can be diagonalized in the full Nambu bases $\Psi^\dag_k=[c_{k\up}^\dag, c_{k\dn}^\dag, c_{-k\up}, c_{-k\dn}]$. The band structure with parameters $t=0.7,\Delta_1=0.5,\Delta_2=0.4$ is shown in Fig.\ref{p3Z2TBand}, which verifies the existence of zero modes.

At the $\Gamma$ point $(0,0)$, the little co-group $F_f^c(\Gamma)/Z_2^f=\mathscr C_{3}\times Z_2^{\mathcal{T}}\times Z_2^{\mathcal{P}}\times Z_2^{\C}$ is represented as 
\Beq
&&D(C_{3}^z)=-I_4,\ \ \ D(\mathcal{T})K=-\mi I_2\otimes \sigma_yK,\\
&&D(\mathcal{P})K=\tau_x\otimes I_2K,\ D(\C) =\tau_x\otimes I_2,
\Eeq
in the bases $\Psi^\dag_k$. Without $\C$, the little co-group becomes $F_f(k)/Z_2^f=\mathscr C_{3}\times Z_2^{\mathcal{T}}\times Z_2^{\mathcal{P}}$. Defining $\p =\T\mathcal P$, then the above rep of $F_f(k)$ corresponds to a set of RMNZM of the case (6) in the Tab.\ref{tab:RMNZM}. When $\C$ is turned on, then the nonzero modes are turned into reducible minimal $\C$-zero modes. %reducible minimal nonzero modes
%rep reduces to two sets of irreducible zero modes which cannot couple to each other. 
Furthermore, the $k\cdot p$ theory indicates that the lowest order dispersion is cubic (see Sec.\ref{sec:k.p}), and around the $\Gamma$ point, the third-order $k\cdot p$ Hamiltonian can be written as
\begin{align}
H_{\rm eff}=h_1(\delta k_x^3-3\delta k_x\delta k_y^2)+h_2(\delta k_y^3-3\delta k_y\delta k_x^2)
\end{align} 
where $k_x, k_y$ are coordinates in the orthogonal frame, and the matrices $h_1= (aI_2\otimes\nu_z+bI_2\otimes\nu_x+c\tau_x\otimes\nu_z+d\tau_x\otimes\nu_x), h_2 =(a'I_2\otimes\nu_z+b'I_2\otimes\nu_x+c'\tau_x\otimes\nu_z+d'\tau_x\otimes\nu_x)$ contain 8 real parameters $a,b,c,d, a',b',c',d'$.

\begin{figure}[t]
  \centering
\subfigure[band structure with ${\mathcal{C}}$]{\includegraphics[scale=0.14]{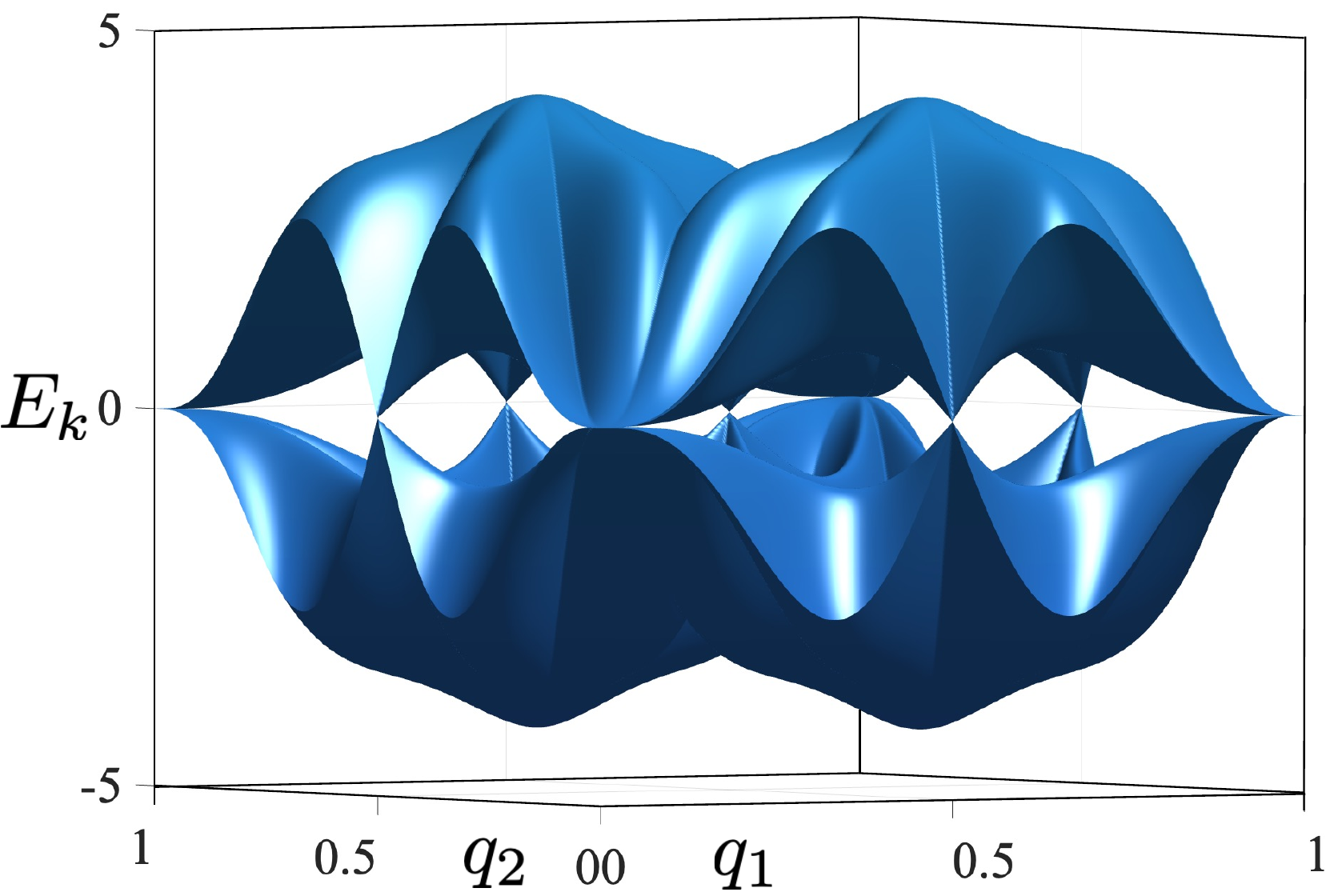}}
\subfigure[the zero modes in (a) ]{\includegraphics[scale=0.16]{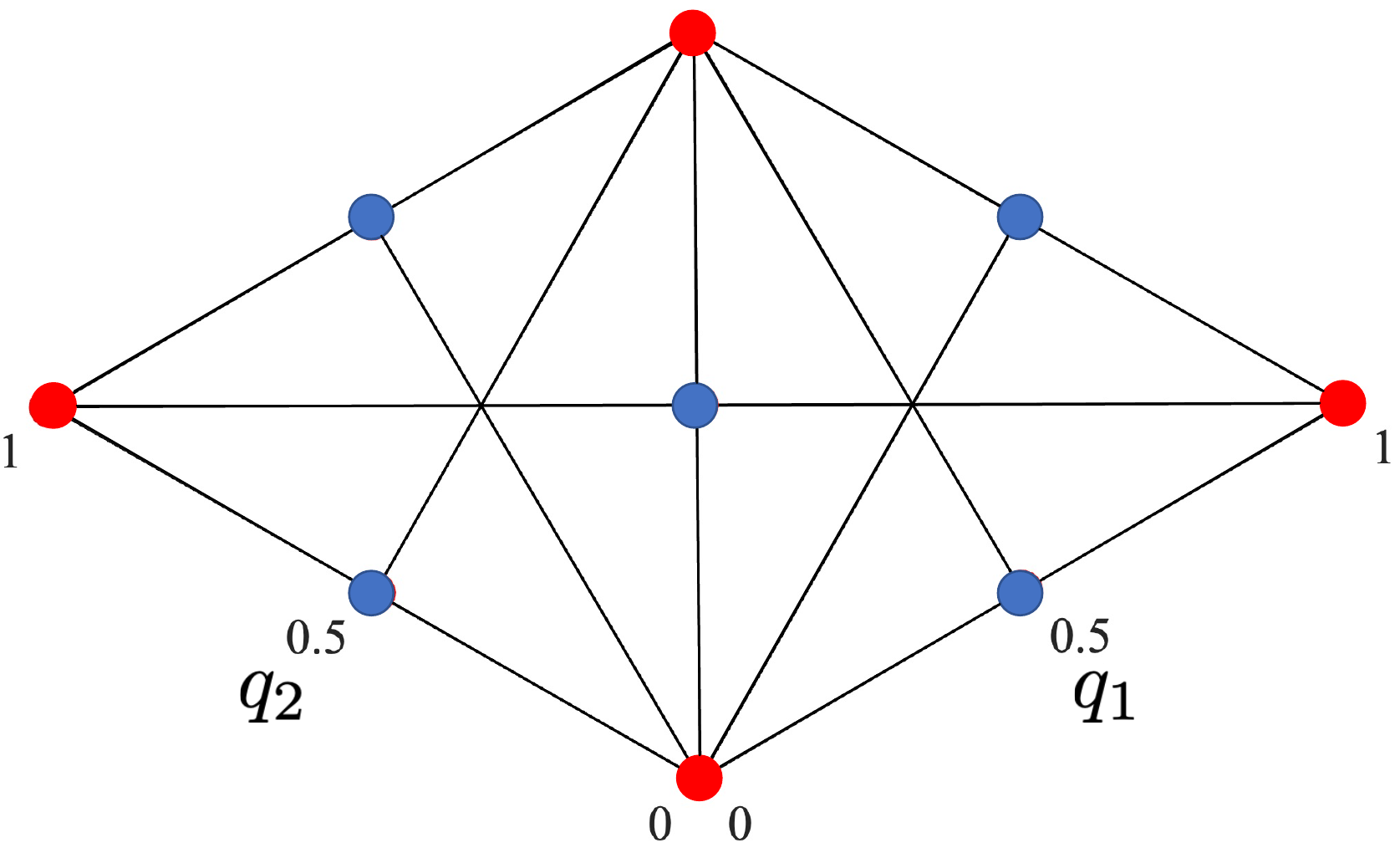}}
\caption{(a)The Band structure of the Hamiltonian (\ref{TheHamiltonianP3T}); (b) the blue and red dots illustrate the positions of the zero modes.}\label{p3Z2TBand}
\end{figure}

At the $C_3$ related three $M$ points $(0, {1\over 2}), ({1\over 2},0), ({1\over 2}, {1\over 2})$, the little co-group $F_f^c(M)/Z_2^f=Z_2^{\mathcal{T}}\times Z_2^{\mathcal{P}}\times Z_2^{\C}$ is represented as 
\Beq
&&D(\T)K=-\mi I_2\otimes \sigma_yK, \ \ D(\mathcal P)K=\tau_x\otimes I_2K,\\
&&D(\C)=\tau_x\otimes I_2.
\Eeq
For the same reason as the $\Gamma$ point, the $\C$ symmetry turns the RMNZM into reducible minimal $\C$-zero modes. 
%two sets of uncoupled zero modes. 
The $k\cdot p$ theory indicates that the energy dispersion is linear at the $M$ points (see Sec.\ref{sec:k.p}), for instance, the $k\cdot p$ Hamiltonian for the $(0,{1\over2})$ point is given by
\begin{align}
H_{\rm eff}=h_1\delta k_x+h_2\delta k_y
\end{align} 
where $h_1= (e I_2\otimes\nu_z+ f I_2\otimes\nu_x + g \tau_x\otimes\nu_z + h \tau_x\otimes\nu_x), h_2 =(e'I_2\otimes\nu_z + f' I_2\otimes\nu_x + g'\tau_x\otimes\nu_z + h'\tau_x\otimes\nu_x)$ and $e,f,g,h, e',f',g',h'$ are real parameters.

If one breaks the ${\C}$ symmetry (by preserving the time reversal symmetry $\T$) by including the real hopping terms $H_1=t_0\sum_{\langle i,j\rangle, \sigma} (c_{i\sigma}^\dag c_{j\sigma} + {\rm h.c.})$, or the real pairing terms $H_2=\Delta_0\sum_{\langle i,j\rangle, \sigma} [(c_{i\up}^\dag c_{j\dn}^\dag-c_{i\dn}^\dag c_{j\up}^\dag) + {\rm h.c.}]$, or the chemical potential term $\lambda \sum_{i\sigma}c_{i\sigma}^\dag c_{i\sigma}$, then the zero modes disappear immediately giving rise to a fully gapped SC/SF state.

\subsection{Triangular lattice (spinless): $G_b=p6m\times Z_2^{\mathcal{T}}$} \label{sec:spinless}

The third model is about a SC for spinless fermions of the BDI class. We consider a triangular lattice with the magnetic wallpaper group $G_b=p6m\times Z_2^{\mathcal{T}}$.
\begin{figure}[t]
  \centering
\includegraphics[scale=0.25]{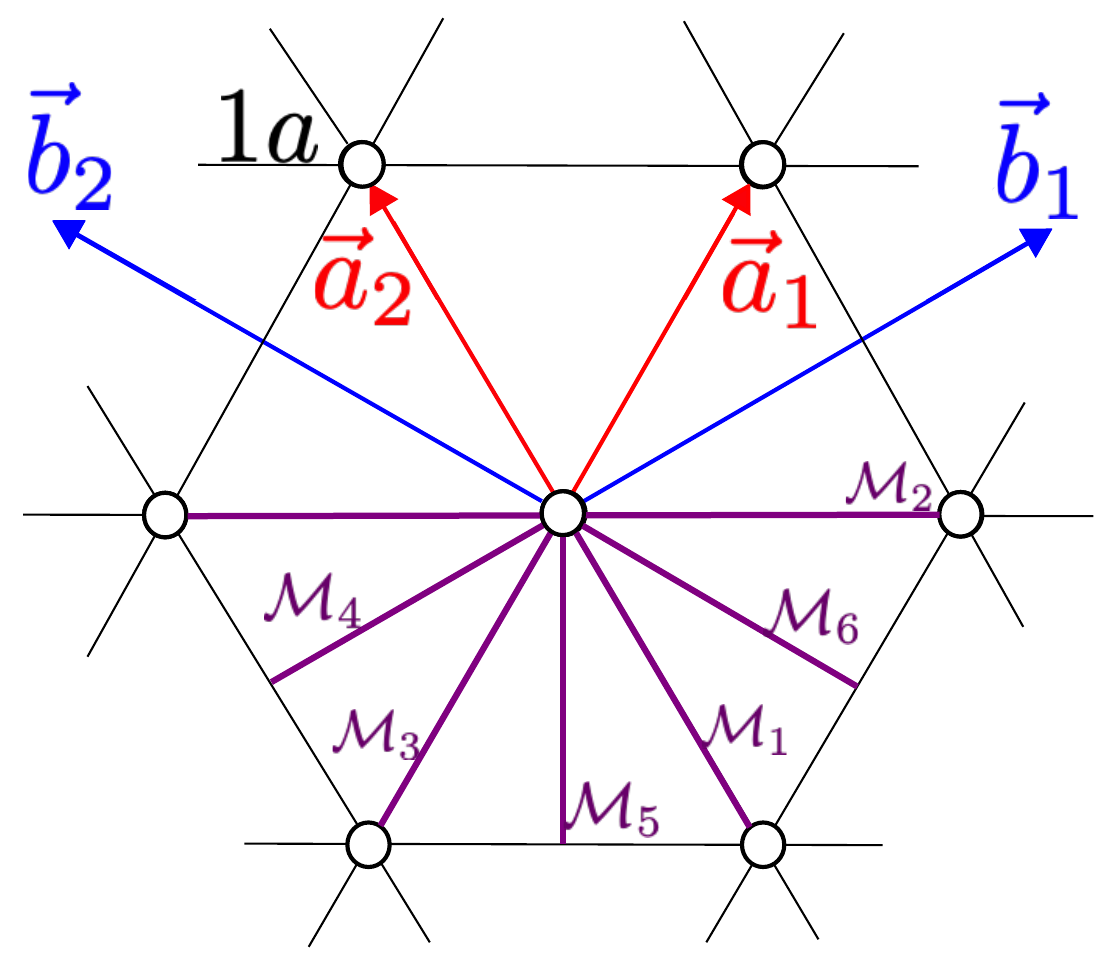}
\caption{Lattice basis vectors (red) and their dual vectors (blue) for the magnetic wallpaper group $G_b=p6m\times Z_2^{\mathcal{T}}$. The (maximal) $1a$ Wyckoff positions are indicated by hollow circles, and all mirror plans are indicated by purple lines.}\label{p6mlattice}
\end{figure}
We put spinless fermonic bases $[c^\dag, c]$ on the Wyckoff position $1a (0,0)$ which carry projective rep of the site group $C_{3v}\times \mathscr C_{\mathcal{I}}\times Z_2^{\mathcal{T}}$, see Fig.\ref{p6mlattice}. 
In the fermionic symmetry group $G_f$, the inversion operation $\mathcal I$ is associated with a charge operation, namely $\mathcal I'=(e^{\mi{{\tau}_z\over2} \pi} || \mathcal I)$. The Nambu bases $\Psi^\dag_i = [c^\dag, c]_i$ vary under the action of the symmetry group as
\Beq
&&\hat C_3 \Psi^\dag_i \hat C_3^{-1} = \Psi^\dag_{C_3 i},\ \ \hat{ \mathcal{M}}_{m} \Psi^\dag_i \hat{\mathcal{M}}_{m}^{-1} = \Psi^\dag_{\mathcal{M}_{m}i},\\ 
&&\hat{\mathcal I}' \Psi^\dag_i \hat{\mathcal I}^{'-1} = \Psi^\dag_{\mathcal I i} (\mi{\tau}_z) = [\mi c^\dag, -\mi c]_{\mathcal I i},\ \ \mathcal{T}\Psi^\dag_i\mathcal{T}^{-1}=\Psi^\dag_i,
\Eeq
with $m=4,5,6$. Besides the particle-hole operation $\mathcal P \Psi^\dag_i \mathcal P^{-1} = \Psi^\dag_{i} {\tau}_x K = [c, c^\dag]_{i}K$, we further consider the following charge conjugation $\C$ with 
$$
\hat \C \Psi^\dag_i \hat \C^{-1} =\Psi^\dag_i {\tau}_y= [\mi c, -\mi c^\dag]_i.
$$ 
The band representation of the above group $F^c_f$ indicates that the high symmetry lines on the mirror planes $\mathcal{M}_{1,2,3}$ (i.e. the lines linking $\Gamma$ and $K, K'$ points) are nodal lines of irreducible $\C$-zero modes. %2-fold

\begin{figure}[b]
  \centering
  \subfigure[band structure with $\C$ symmetry]
  {\includegraphics[scale=0.14]{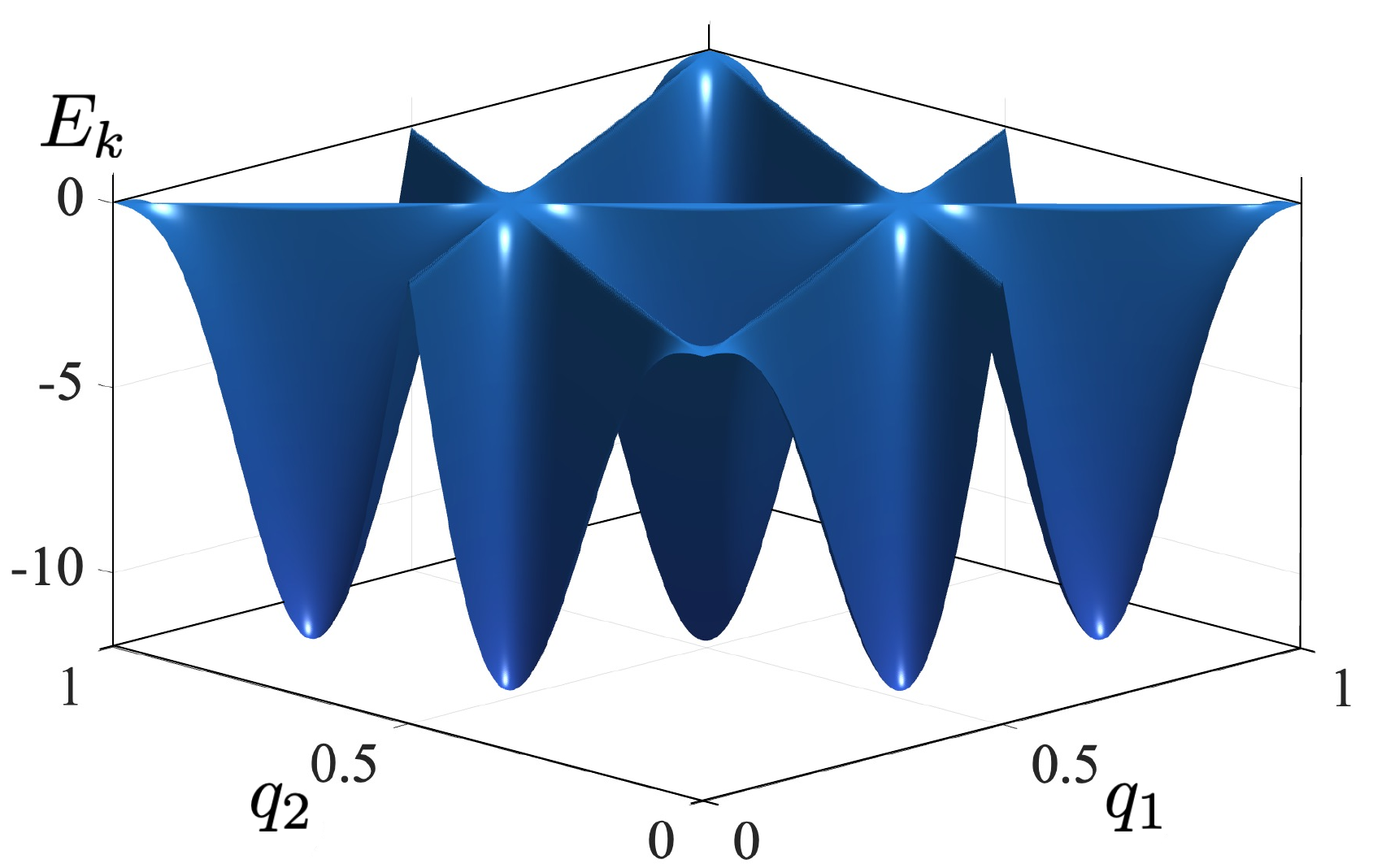}}
  \subfigure[the nodal lines in (a)]{\includegraphics[scale=0.155]{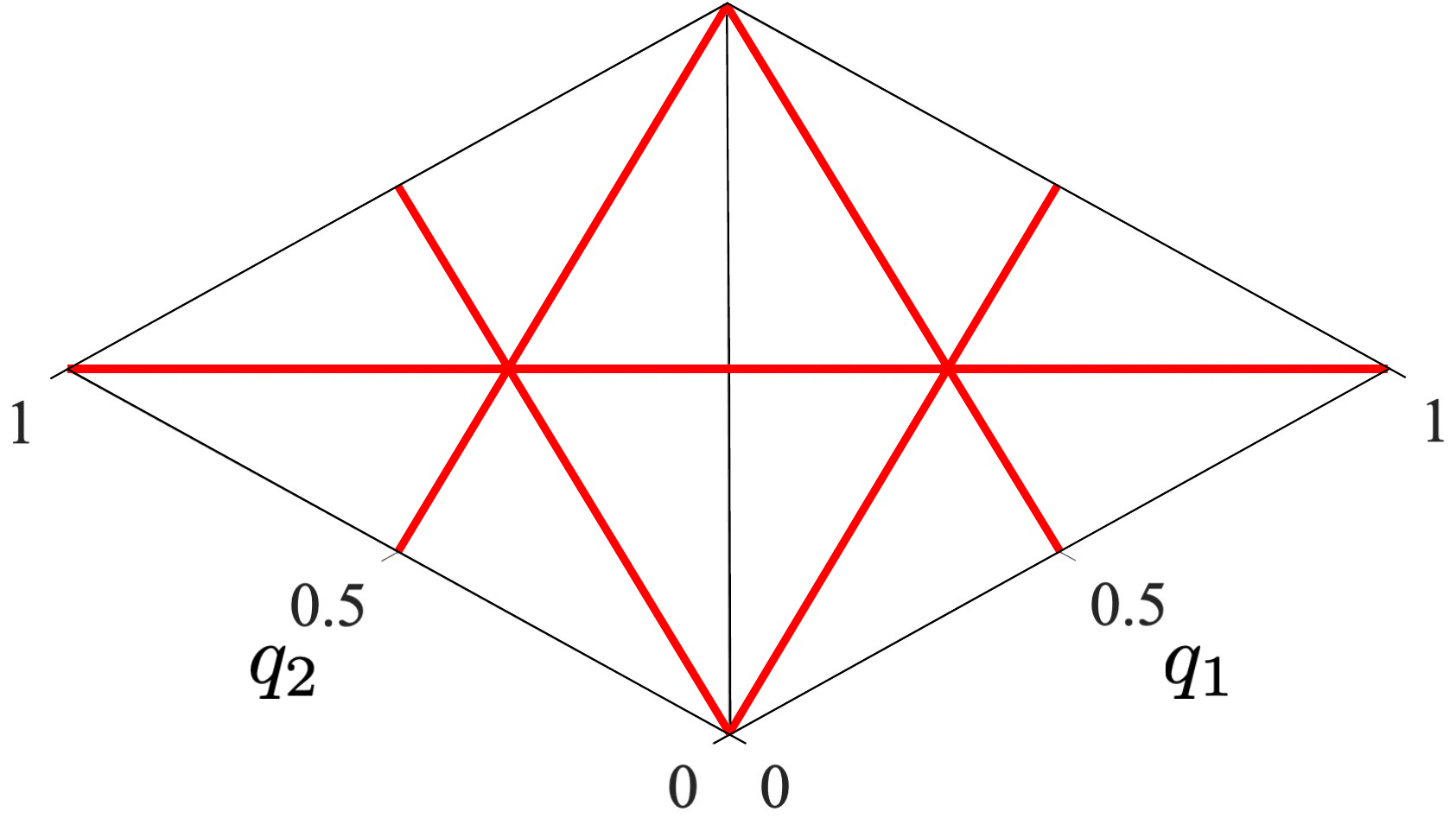}}
  \caption{The nodal-line Band structure of the Hamiltonian (\ref{Ham_p3m1wosoc}) having $\C$ symmetry.}\label{p3m1wosoc_band}
\end{figure}

The simplest lattice model preserving the above symmetry is a pure pairing model,
\begin{small}
\begin{align}\label{Ham_p3m1wosoc}
H_0=\Delta\sum_{\langle\langle ij \rangle\rangle} c_i^\dag c_j^\dag + {\rm h.c.}
\end{align}\end{small}where $\langle\langle ij \rangle\rangle$ represents next nearest neighbors (the nearest neighbor pairing terms break the $\mathcal{M}_{4,5,6}$ symmetries).

The band structure with parameter $\Delta=2$ is shown in Fig.\ref{p3m1wosoc_band}, 
where the three nodal liens (locating on the mirror planes $\mathcal{M}_{1,2,3}$) have the symmetry group $F_f^c(k)/Z_2^f=Z_2^{\C}\times Z_2^{\mathcal I'\T}\times Z_2^{\mathcal{M}\T} \times Z_2^{\T\mathcal P}$. The symmetry group $F_f^c(k)$ is represented as 
\Beq
&&D(\mathcal I'\T)K = \mi {\tau}_zK,\ D(\T\mathcal P) ={\tau}_x,\\ 
&&D(\C)={\tau}_y,\ D(\mathcal{M}_{m}\T)K=I_2K,  {\  m=4,5,6}.
\Eeq 
It can be seen that the subgroup $Z_2^{\C}\times Z_2^{\mathcal{M}\T}\times Z_2^{\mathcal P}$ can protect the nodal lines. The $k\cdot p$ Hamiltonian at the nodal lines are given as 
\Beq
H_{\rm eff 3} = a \delta k_v {\tau}_y,\ \ H_{\rm eff 2} = -a \delta k_x{\tau}_y, \ \ 
H_{\rm eff 1} = a\delta k_w{\tau}_y.
\Eeq
where $k_x,k_y$ are coordinates in the orthogonal frame and $\delta k_v={1\over2}\delta k_x+{\sqrt3\over2}\delta k_y, \delta k_w={1\over2}\delta k_x-{\sqrt3\over2}\delta k_y$.

\begin{figure}[t]
  \centering
\subfigure[band structure with $\mathcal M'_{1,2,3}$]{ \includegraphics[scale=0.14]{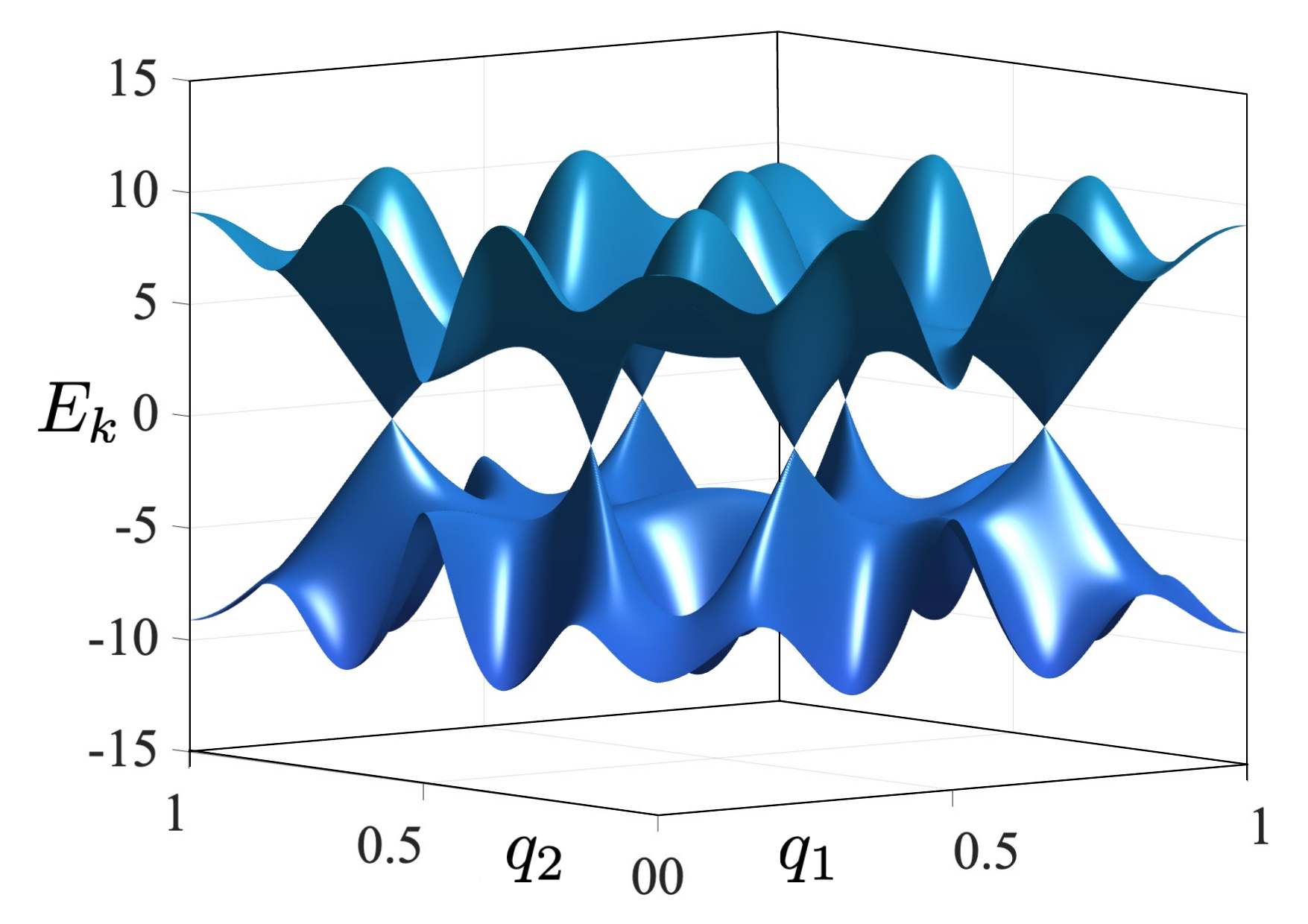}\label{Gaphop:p3m1Z2TBand_a}}
\subfigure[zero modes in (a)]{ \includegraphics[scale=0.15]{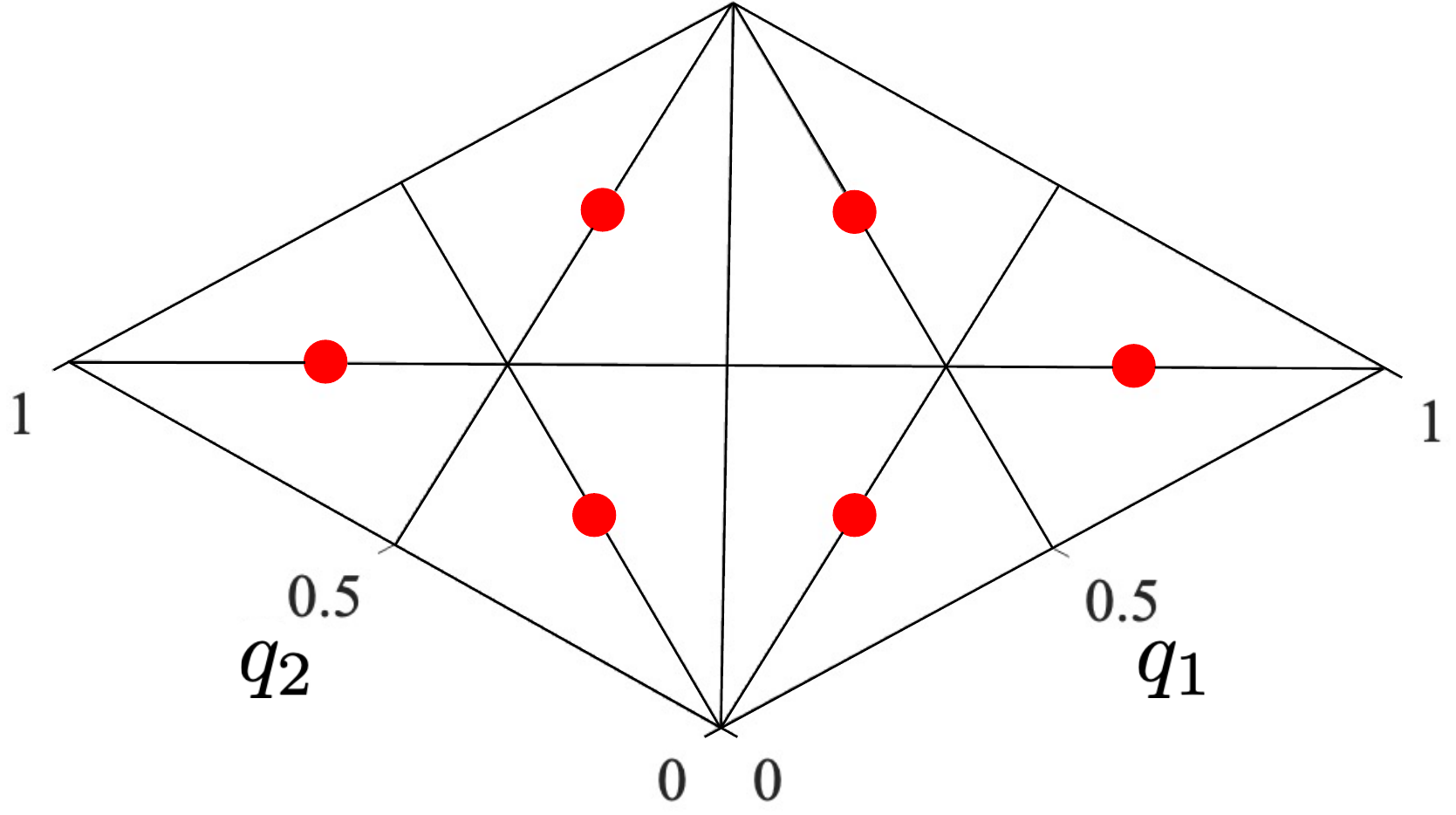}\label{Gaphop:p3m1Z2TBand_ap}}
\subfigure[band structure without $\mathcal M'_{1,2,3}$]{\includegraphics[scale=0.14]{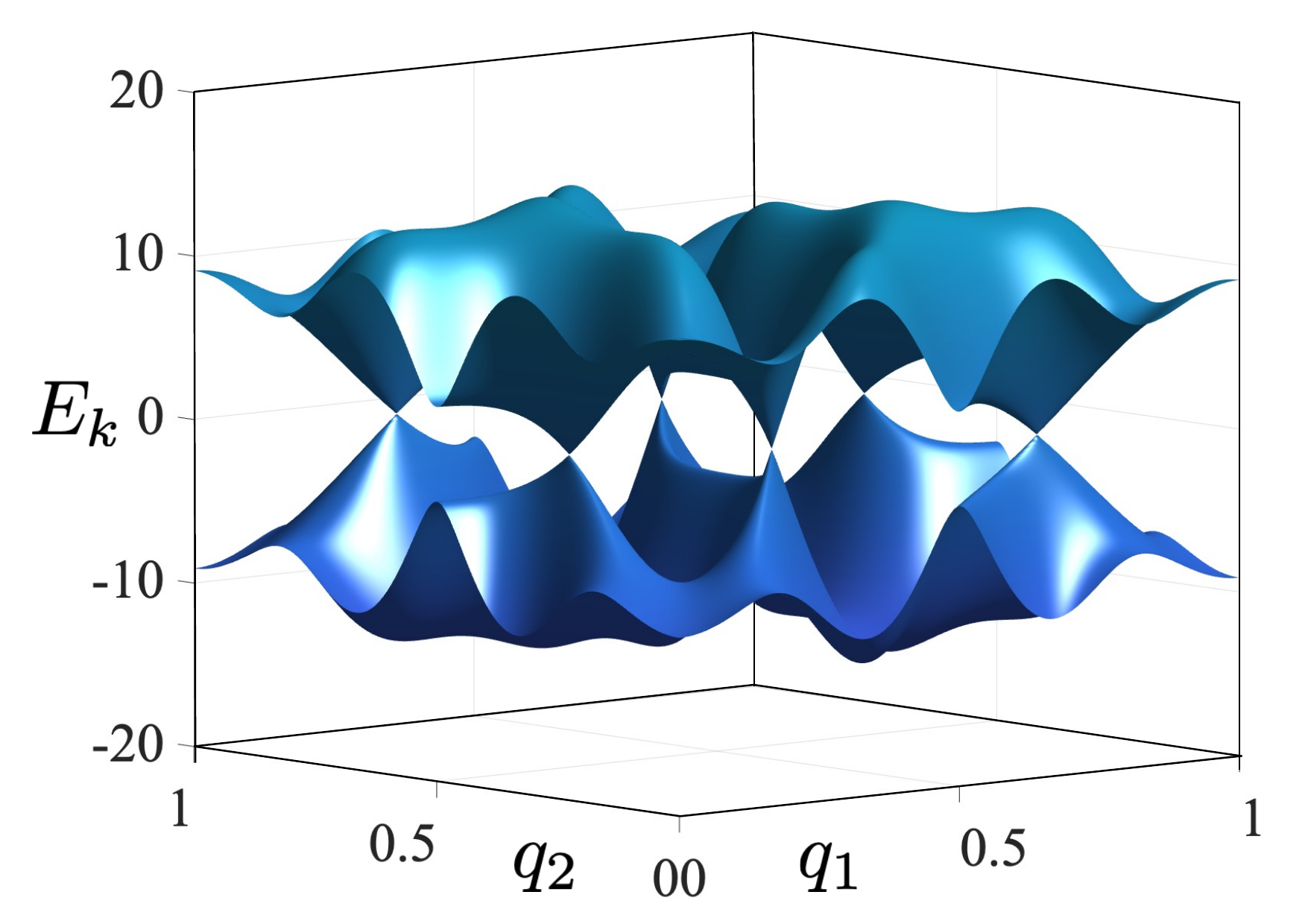} \label{Gaphop:p3m1Z2TBand_b}}
\subfigure[zero modes in (c)]{ \includegraphics[scale=0.15]{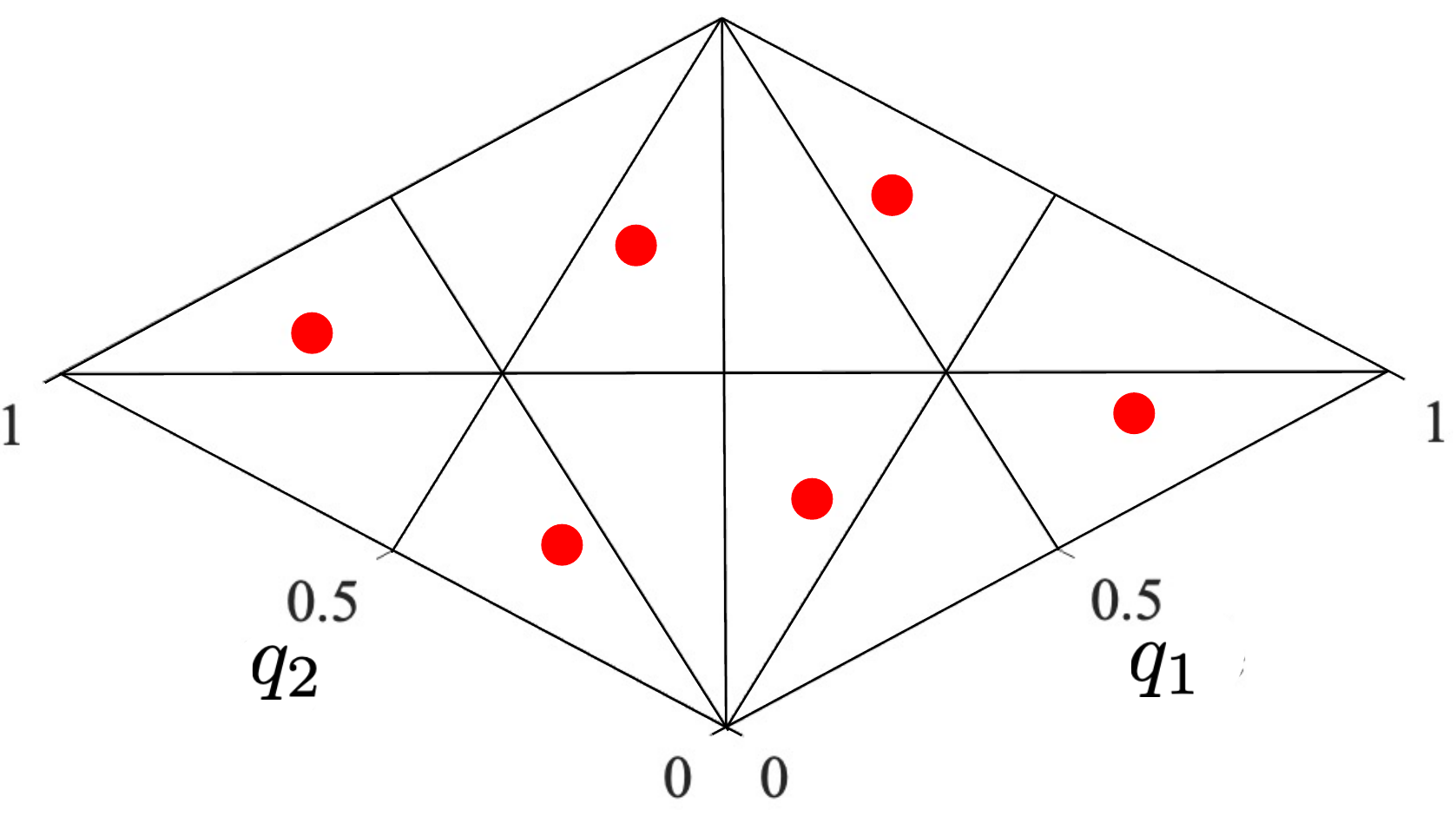}\label{Gaphop:p3m1Z2TBandBM}}
  \caption{(a)Band structure of the Hamiltonian (\ref{Ham_p3m1wosoc}) with parameters $t=1.5,\lambda=0.1,\Delta=2$. The mirror symmetries $\mathcal M'_{1,2,3}$ are preserved; (b) the red dots illustrate the positions of the zero modes in (a);  (c) Band structure for $t=1.5,\lambda=0.1, \Delta=2, \Delta_2=-2.5$. The mirror symmetries are violated but the $\mathcal{I'T}$ symmetry is preserved; (d) the red dots illustrate the positions of the zero modes in (c).}\label{Gaphop:p3m1Z2TBand}
\end{figure}

Now we break the $\C$ symmetry by adding the terms 
\Beq
H_1=\lambda\sum_i c_i^\dag c_i + \big(t\sum_{\langle i,j \rangle}c_{i}^\dag c_{j} +{\rm h.c.}\big),
\Eeq 
then the zero modes on the nodal line are lifted except for six nodal points, as shown in Figs.\ref{Gaphop:p3m1Z2TBand_a} \!\!\& \!\!\ref{Gaphop:p3m1Z2TBand_ap}. These nodal points are resulting from level crossing protected by the quantum numbers of the mirror symmetries $\mathcal{M}'_{1,2,3}=\mathcal{I}'\mathcal{M}_{4,5,6}$, and the six nodal points are related with each other by $C_3$ and $\mathcal I'$ symmetries. The $k\cdot p$ effective Hamiltonian for the three nodal points are 
\Beq
&H'_{\rm eff 3} = a\delta k_v{\tau}_y+b \delta k_{\tilde v} {\tau}_z, \ \ %\notag\\
H'_{\rm eff 2} = -a \delta k_x{\tau}_y-b\delta k_y{\tau}_z, \notag\\
&H'_{\rm eff 1} = a\delta k_w {\tau}_y+b \delta k_{\tilde w}{\tau}_z, \notag
\Eeq
with $\delta k_v={1\over2}\delta k_x+{\sqrt3\over2}\delta k_y,   \delta k_{\tilde v} = -{\sqrt{3}\over 2}\delta k_x+{1\over2}\delta k_y, 
\delta k_w={1\over2}\delta k_x-{\sqrt3\over2}\delta k_y, \delta k_{\tilde w}={\sqrt{3}\over2}\delta k_x+{1\over2}\delta k_y$.

If we further break the mirror symmetry $\mathcal{M}'_{1,2,3}$ by adding the nearest pairing term 
\beq
H_2 =\Delta_2\sum_{\langle i,j\rangle}c_{i}^\dag c_{j}^\dag   + {\rm h.c.},
\eeq 
the nodal points still exist but their positions drift away the high symmetry lines, as shown in Figs.\ref{Gaphop:p3m1Z2TBand_b}\!\! \&\!\! \ref{Gaphop:p3m1Z2TBandBM}. 
These nodal points are protected by the $\mathcal I'\T$ symmetry with $(\widehat{\mathcal I'\T})^2=1$. If we break the time-reversal symmetry by setting $\Delta_2$ as a complex number, then all the zero modes are removed and a topological SC/SF is obtained with Chern number $\pm3$.

\subsection{Multi-node $Z_2$ QSL on Honeycomb lattice: $G_b=p\bar 31m\times Z_2^{\mathcal{T}}$}\label{sec:Non-trivialPSG}

The fourth example is a $Z_2$ spin liquid model whose symmetry group is a PSG containing `off-diagonal' elements in the particle-hole sector. We adopt the PSG of the exactly solvable Kitaev spin liquid on the honeycomb lattice \cite{Kitaev06, You13} whose magnetic layer group is $G_b=p\bar 31m\times Z_2^{\mathcal{T}}$ with point group $D_{3d}\times Z_2^{\mathcal{T}}$, see Fig.\ref{p31mlattice}.  Placing spin-${1\over 2}$ Nambu bases 
$
\Psi^\dag_{i,\zeta}\equiv [c^\dag_{\uparrow}, c^\dag_{\downarrow}, c_{\uparrow}, c_{\downarrow}]_{i,\zeta},\ \zeta=A,B
$ 
respectively at the $2c$ Wyckoff positions $A=({1\over 3}, {2\over 3})$ and $B=({2\over 3}, {1\over 3})$ of the $i$th unit cell, then the PSG is obtained by replacing the generators $C_{3c}, C_{2{\alpha}}, \T, \mathcal I$ of $D_{3d}\times Z_2^T$ by the spin and charge dressed operations (see (\ref{SU2s0}) and (\ref{SU2c0})) 
\begin{align}
& C_{3c}' \!=\! (e^{-\mi{ \tilde\tau_c\over2}{2\pi\over3}}e^{-\mi{\tilde\sigma_c\over2}{2\pi\over3}}||C_{3c}),\ C_{2{\alpha}}'\!=\!(e^{-\mi{ \tilde\tau_{{\alpha}}\over2}{\pi}} e^{-\mi{ \tilde\sigma_{{\alpha}}\over2}{\pi}}||C_{2{\alpha}}),\notag\\
& \T'=( P_f^B e^{-\mi{ \tilde\tau_y\over2}{\pi}}e^{-\mi{ \tilde\sigma_y\over2}{\pi}}K||\T),\ \mathcal I'=( P_f^B ||\mathcal I),\notag
\end{align}
where $\tilde\tau_c={1\over\sqrt3}( \tilde\tau_x+ \tilde\tau_y+ \tilde\tau_z),  \tilde\tau_{{\alpha}}={1\over\sqrt2}(\tilde\tau_y- \tilde\tau_z)$ and $\tilde\sigma_c, \tilde\sigma_{{\alpha}}$ are similarly defined. The Nambu bases $[\Psi^\dag_{i,A}, \Psi^\dag_{i,B}]$ carry a 8-D projective rep of the site group $D_3\times Z_2^{\mathcal{T}}$ with  
\Beq
&&\hat{\mathcal{T}}'[\Psi^\dag_{i,A},\Psi^\dag_{i,B}] \hat{\mathcal{T}'}^{-1} = [\Psi^\dag_{i,A},\Psi^\dag_{i,B}] \nu_z\otimes {\tau}_x\otimes I_2 K,\\
&&\hat{\mathcal{I}}' [\Psi^\dag_{i,A},\Psi^\dag_{i,B}] \hat{\mathcal{I}}^{'-1}=[\Psi^\dag_{\mathcal I(i),B}, -\Psi^\dag_{\mathcal I(i),A}],
\Eeq
and
\Beq
&&\hat C_{3c}'\Psi^\dag_{i,\zeta} \hat{C_{3c}'}^{-1} = {1\over2}\Psi^\dag_{C_{3c}(\mi),\zeta}
\Bmat 1 & {-\mi} & 1 & {-\mi}\\ 1 & {\mi} & {-1} & {-\mi}\\  1 & {\mi} & 1 & {\mi}\\ {-1} & {\mi} & 1 & {-\mi}\Emat, \\
&&\hat C_{2{\alpha}}'\Psi^\dag_{i,\zeta} \hat{C_{2{\alpha}}'}^{\!\!-1} = {1\over2}\Psi^\dag_{C_{2{\alpha}}(i),\zeta} \Bmat{-1} & {-\mi} & {-1} & {-\mi}\\ {\mi} & {1} & {-\mi} & {-1}\\  {-1} & {\mi} & {-1} & {\mi}\\ {\mi} & {-1} & {-\mi} & {1}\Emat.
\Eeq
The particle-hole operation is simply represented as $\Gamma_xK$ as shown in (\ref{PHS}).

\begin{figure}[t]
  \centering
\includegraphics[scale=0.25]{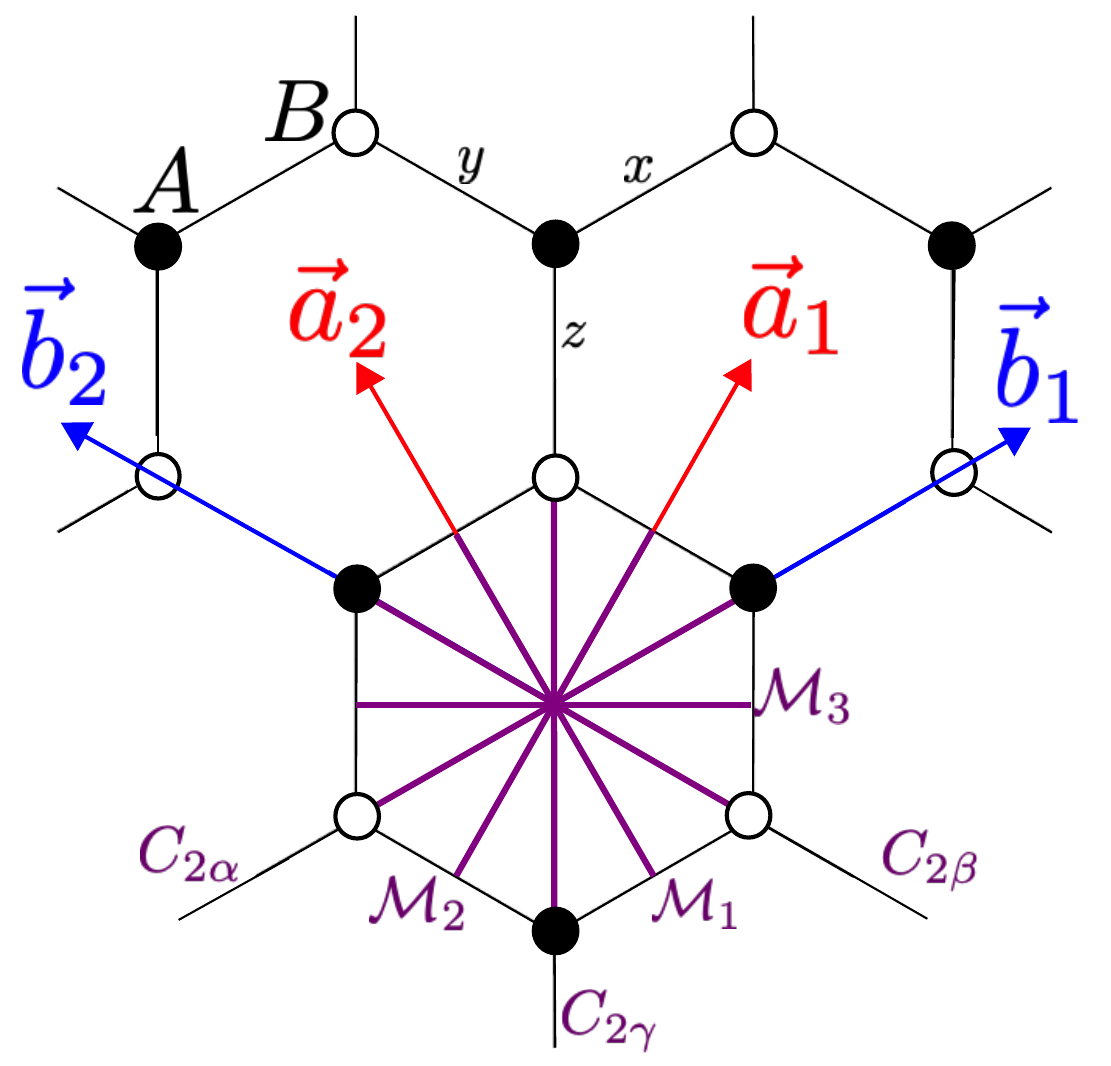}
\caption{Lattice basis vectors (red) and their dual vectors (blue) for the magnetic layer group $G_b=p\bar 31m\times Z_2^{\mathcal{T}}$. The (maximal) $2c$ Wyckoff positions $A,B$ are indicated by black dots and hollow circles respectively, and all mirror plans and 2-fold axes  are indicated by purple lines.}\label{p31mlattice}
\end{figure}

Considering nearest and next next nearest neighbor terms between different sub-lattices, the most general BdG Hamiltonian that preserves the above PSG is given by (the intra-sub-lattice terms are not allowed by the PSG)
\beq\label{Ham:GKSL}
\!\!\! H_0 &\!\!\!=\!\!\!\!& \sum_{\langle i,j\rangle\in z} \!\! \big( \Psi_{i,A}^\dag U \mathcal H_1U^\dag \Psi_{j,B} \! + {\rm h.c.} \big) + (\mbox{x-,y-bonds})  +  \notag\\
&&\!\!\!\!\! \sum_{\langle\langle\langle i,j'\rangle\rangle\rangle \in Z}\!\!\!\!\!\! \big( \Psi_{i,A}^\dag U \mathcal H_2 U^\dag \Psi_{j',B} \!+\! {\rm h.c.} \big) \!\!+\! (\mbox{X-,Y-bonds}) , 
\eeq
where $\mathcal H_1, \mathcal H_2$ are the Hamiltonian matrices in the Majorana bases
\Beq
\mathcal H_1=
\mi  \Bmat 
\!\!\eta_0&\eta_1&-\eta_1&0\\ 
\!\!\eta_1&\eta_2&\eta_4&\eta_3\\ 
\!\!-\eta_1 & \eta_4& \eta_2& \eta_3\\
\!\!0 & \eta_3 &\eta_3 & \eta_5 
\Emat\!,  
\mathcal H_2=\mi  \Bmat
\!\!\eta'_0&\eta'_1&-\eta'_1&0\\ 
\!\!\eta'_1&\eta'_2&\eta'_4&\eta'_3\\ 
\!\!-\eta'_1 & \eta'_4& \eta'_2& \eta'_3 \\
\!\!0 & \eta'_3 & \eta'_3 &  \eta'_5
\Emat\!,
\Eeq
and $U\!\!=\!\!{1\over\sqrt{2}}\Bmat1 & 0 & 0 & {\mi}\\ 0 &{\mi} &{-1} & 0\\  {1} & 0 & 0 & {-\mi}\\ 0 &{-\mi} &{-1} & 0\Emat$ is the transformation matrix mapping the complex fermion bases $\Psi^\dag_{i,\zeta}$ into the Majorana bases\cite{Kitaev06}, namely $[c, b_x, b_y, b_z]_{i,\zeta}=\Psi^\dag_{i,\zeta} U$. The parameters $\eta_{0,1,2,3,4,5}$, $\eta'_{0,1,2,3,4,5}$ are real numbers among which $\eta_0,\eta_5$ come from the solution of the pure Kitaev model. The terms on the x-,y-bonds (X-,Y-bonds) can be obtained from ones on the z-bonds (Z-bonds) by successive $C'_{3c}$ rotations.

The spectrum of the Hamiltonian (\ref{Ham:GKSL}) can support a series of zero modes in the BZ and the corresponding state is a multi-node QSL. For instance, Fig.\ref{GKSL} shows the band structure with parameters $\eta_0=2, \eta_1=1.2, \eta_2=0.8, \eta_3=0.5, \eta_4=0.2, \eta_5=1.2$ and $\eta'_0=1.6, \eta'_1=0.2, \eta'_2=0.8, \eta'_3=2.5, \eta'_4=0, \eta'_5=0$.  There are totally $20$ cones,  where each of the $2$ cones in green color contains irreducible zero modes protected by the little co-group $F_f(K)=\mathscr C'_{3v}\times Z_2^{\mathcal I'\T'}$; the $6$ cones in blue color contain zero modes protected by crossing of energy bands carrying different quantum numbers of the mirror symmetries $\mathcal M'_{1,2,3}$; the $12$ cones in red color contain zero modes protected by $\pi$-quantized Berry phase due to the $\mathcal{I'T'}$ symmetry.

\begin{figure}[t]
  \centering
\subfigure[Kitaev QSL: band structure]{\includegraphics[scale=0.15]{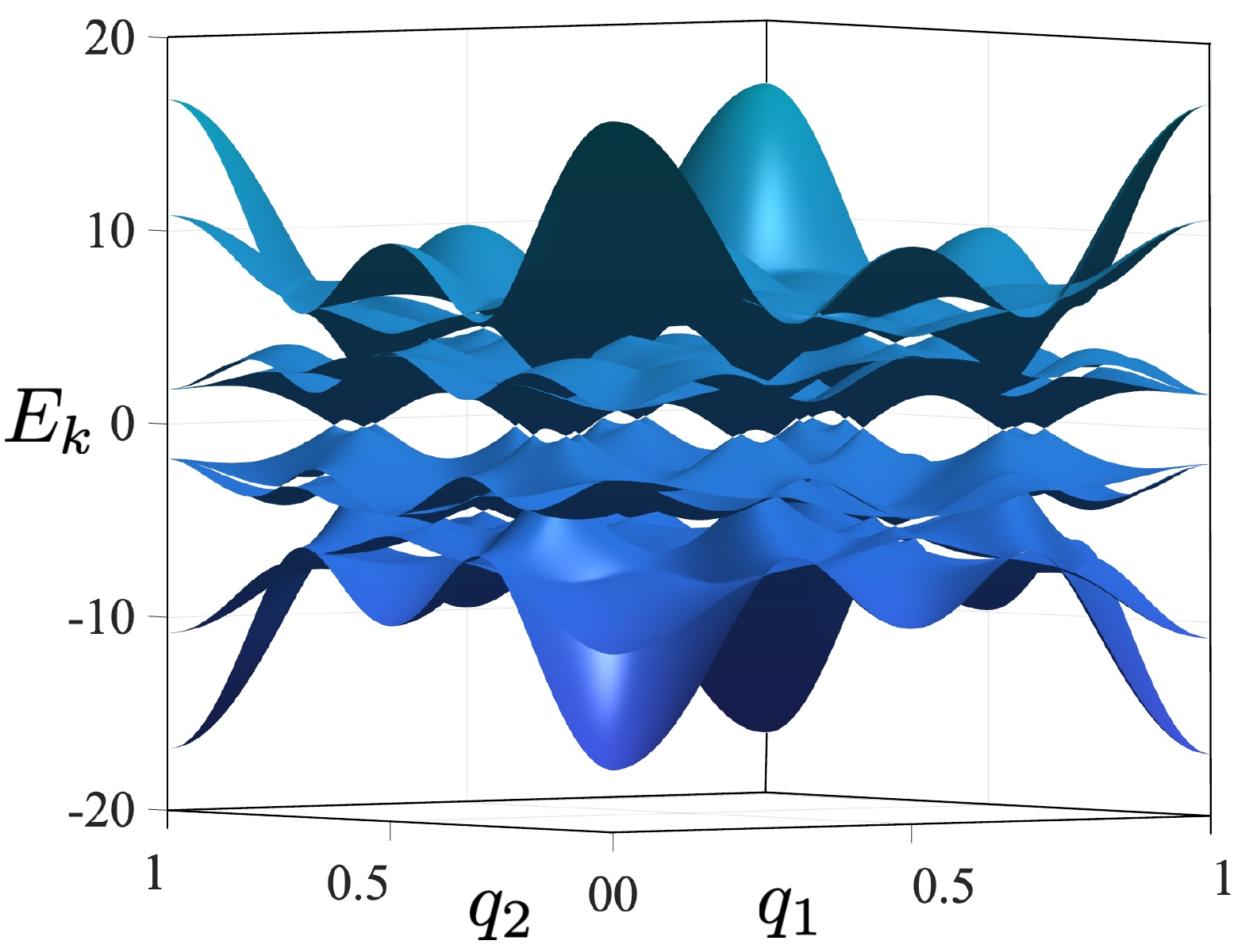}}
\subfigure[Kitaev QSL: positions of the zero modes]{\includegraphics[scale=0.16]{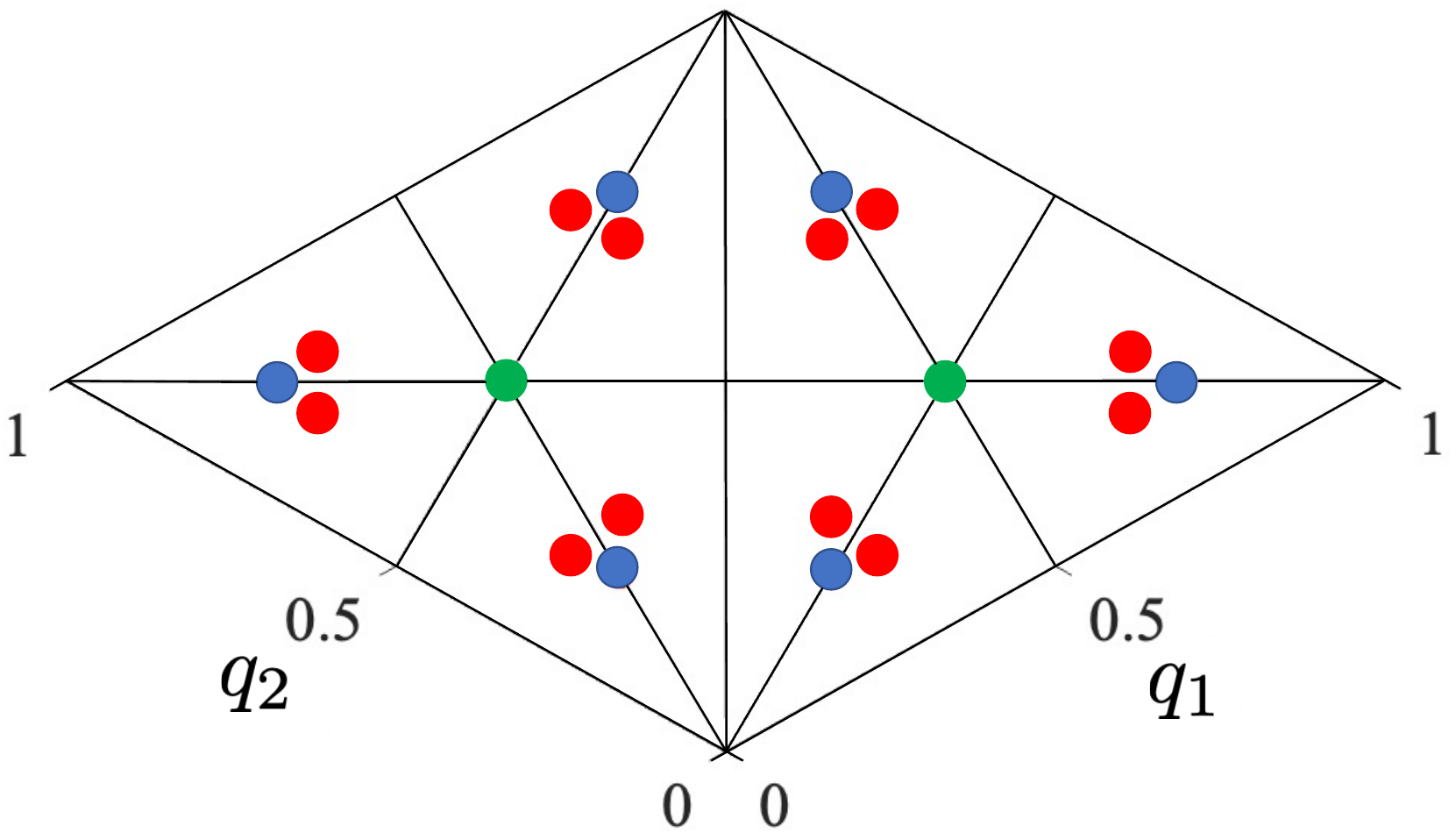}}
  \caption{(a)The Band structure of the Hamiltonian (\ref{Ham:GKSL}) with parameters given in the maintext. The number of the cones is `$12+6+2$', where the `$12$' cones in red are due to zero modes protected by the $\mathcal{I'T'}$ symmetry, the `$6$' cones in blue are protected by the mirror symmetries $\mathcal{M}'_{1,2,3}$, and each of the `$2$' green cones results from irreducible zero modes; (b) the dots illustrate the positions of the zero modes in (a). }\label{GKSL}
\end{figure}

The two high symmetry points $K=({1\over3},{2\over3})$ and $K'=({2\over3},{1\over3})$ have the little co-group $F_f(k)/Z_2^f =\mathscr C'_{3v}\times Z_{2}^{\mathcal{I'T'}}\times Z_2^{\mathcal{PT'}}$. All of the $C'_{3c}$, $\mathcal {M'}_{1,2,3}$ symmetries can be treated as effective charge conjugation, and the 2-fold irreducible zero modes are represented as %$\mathcal I'\T'$ 
\Beq
&&D(C'_{3c})={\sqrt{3}\over2}I + {\mi\over2}\mu_z, \ \ 
D(\mathcal{M}'_1) = {\sqrt{3}\mi\over2}\mu_y-{\mi\over2}\mu_x, \\
&&D(\mathcal{I'T'})=\mu_x K,\ \  D(\mathcal{T'P})=-\mu_z. 
\Eeq
According to Eq.(\ref{DispersionCriterion}) one has $a_{({\rm v\times ph})}=1$. So the dispersion around the $K$ and $K'$ points is linear and the $k\cdot p$ Hamiltonian has one free parameter $a$, yielding
\Beq	
H_{\rm eff1}=a( \delta k_x\mu_x +\delta k_y \mu_y).
\Eeq
Similarly, the $k\cdot p$ Hamiltonian for the cones at the high symmetry lines with mirror symmetries $\mathcal{M}'_{1,2,3}$ have two free parameters $b,c$. For example, around the left blue cone, the effective Hamiltonian reads
\Beq
H_{\rm eff2}=&b\delta k_x {\sqrt2\over2}(\cos\theta_2\mu_x-\sin\theta_2\mu_y) \\
&+c\delta k_y{\sqrt2\over2}(\sin\theta_2\mu_x + \cos\theta_2\mu_y)
\Eeq
with $\theta_2\approx 0.47\pi$.  
Furthermore, the $k\cdot p$ Hamiltonian for the cones at a generic point is given as
\begin{align}
H_{\rm eff3}=(d\mu_x + e\mu_y)\delta k_x+(f\mu_x + g\mu_y)\delta k_y,
\end{align}
which has four parameters $d,e,f,g$.

Notice that the point group symmetry elements are associated with charge and spin operations in the PSG. This makes the symmetries easy to break. For instance, the Zeeman coupling term $H_1=B\sum_{i,\zeta} \Psi_{i,\zeta}^\dag {\tilde\sigma_c\over2} \Psi_{i,\zeta}$
from a weak magnetic field breaks the symmetries $e^{-\mi{\tilde\sigma_{\alpha}\over 2}\pi}, e^{-\mi{\tilde\sigma_y\over2}\pi}K$ in the spin sector. Consequently the mirror symmetries $\mathcal M'_{1,2,3}$ and the time reversal symmetry $\T'$ in the PSG are broken, hence all of the cones are gapped out. The resulting state has nonzero Chern number $\pm 8$. Furthermore, a chemical potential term $H_2
= {\lambda}\sum_{i,\zeta} (\Psi^\dag_{i,\zeta} {\tilde\tau_z\over2} \Psi_{i,\zeta}) +{\rm const}$ can be introduced by doping the system with charge carriers. This term breaks the $e^{-\mi{\tilde\tau_{\alpha}\over2}\pi}, e^{-\mi{\tilde\tau_y\over2}\pi}K$ symmetries in the charge sector, consequently the mirror symmetries $\mathcal M'_{1,2,3}$ and the time reversal symmetry $\T'$ in the PSG are also broken. All of the cones are then gapped out, resulting in a topological SC state with nonzero Chern number $\pm2$\cite{You13, ZhangLiu2022} (SC with hole doping has opposite Chern number with the one with electron doping).

In Fig.\ref{GKSL}, the cones with the same color are symmetry related and form a $\{k^*\}$.  With the varying of the parameters $\eta_0\sim\eta_5$ and $\eta'_0\sim \eta'_5$, the positions of the cones, the number of the cones and even the pattern of the cones (the collection of several $\{k^*\}$) may change accordingly. The pattern of the cones determines the physical response of probe fields. Hence spin liquid states with the same pattern of cones can be considered as belonging to the same phase. Therefore, the pattern of cones is an important information of a gapless QSL in addition to the PSG.

\section{Discussion and Conclusion}\label{sec:conc}

Before concluding this work, we comment on some interesting issues.

\textit{To be Majorana or not.} We have shown that with the protection of effective charge conjugation symmetry $\C$, zero modes can appear at high-symmetry points. Since charge conjugation transforms a particle into its antiparticle, the quasiparticle corresponding to the $\C$-zero modes is its own antiparticle and can be identified as a Majorana quasiparticle.

We have also shown that when effective charge conjugation symmetry $\C$ is violated, bulk zero modes can still appear due to quantum number-protected level crossings or $\mathcal{I}'\T'$-protected quantized Berry phases. Since the Hamiltonian does not commute with $\C$, the $\C$-transformed quasiparticle is not an eigenstate of the Hamiltonian. Therefore, strictly speaking, quasiparticles associated with zero modes without $\C$ symmetry are not of Majorana type.

Although $\mathcal{P}$ always anti-commutes with the Hamiltonian, $\mathcal{P}$-transformed quasiparticles are not independent of the original quasiparticles since $\mathcal{P}$ is a redundancy of the SCs/SFs/QSLs.

\textit{Classification of nodal SCs/SFs/QSLs.} If two BdG systems have the same symmetry and their low-energy quasiparticles correspond one-to-one, we can identify the two systems as the same phase. In this sense, we propose a method to classify gapless nodal SCs/SFs/QSLs:

(i) The symmetry group, which extends the space group of the lattice by internal symmetries such as spin rotation and charge operations.

(ii) The pattern of the bulk zero modes, whose positions form several sets of $k$ stars, denoted $\{k^\star\}$, where each $k^\star$ is a set of symmetry-related equivalent momentum points. For each set $\{k^\star\}$, we determine the total number of zero-energy nodes in the system.

(iii) The degeneracy of the zero modes at each point in $\{k^\star\}$. The degeneracy of zero modes at equivalent $k$ points is the same.

(iv) The dispersion and physical properties of the quasiparticles, described by the $k \cdot p$ effective theory.

\textit{Effect of interactions.} It is expected that under weak interactions, the quasiparticles corresponding to the bulk zero modes remain gapless. However, with increasing interaction strength, the fate of the quasiparticles depends on the degeneracy and the form of dispersion. Quantitative results should be obtained using renormalization group calculations, which are beyond the scope of this work.

In summary, using projective representation theory, we systematically studied the symmetry conditions for the existence of bulk zero modes in the mean-field theory of superconductors, superfluids, and $Z_2$ QSLs, where fermions pair to form Cooper pairs. We then provided an efficient method to obtain the $k \cdot p$ effective theory of these gapless quasiparticles, from which one can determine the dispersions and their physical responses to external fields. The positions of the zero modes can be controlled by adjusting the symmetry of the Hamiltonian. Furthermore, the dispersion of the gapless quasiparticles can be linear or of higher order. These results are illustrated with concrete lattice models. Our symmetry representation-based theory complements the topological origin (theory of symmetry indicators) of gapless quasiparticles in BdG systems and aids in the experimental realization of Majorana-like gapless quasiparticles.
\\

We thank Yu-Xin Zhao, Zhi-Da Song, Chen Fang, Yuan-Ming Lu, Jiucai Wang, Houmin Du, Zhen Feng and Bruce Normand for helpful discussions. A.Z.Y thank Xiao-Yue Wang for her helpful comments. This work is supported by NSFC (Grants No.12374166, 12134020) and National Key Research and Development Program of China (Grants No.2023YFA1406500, 2022YFA1405300).

%%%%%%%%%%%%%%%%%%%%%%%
%%%%%%%%%%%%%%%%%%%%%%%
\appendix

\section{Meaning of Pauli matrices $\pmb \tau, \pmb \sigma, \pmb \mu, \pmb \nu, \pmb \omega$}
In the main text, we have used different notations of Pauli matrices. In the following we list their physical meaning.

$\tau_{x,y,z}$ generate the charge $SU(2)_c$ group and act on the particle-hole degrees of freedom. For instance the charge conjugation can be chosen as $\C=e^{-i{\tau_x\over2}\pi}$. The product of $\tau_{x,y,z}$ with identity matrix are noted as $\Gamma_{x,y,z}=\tau_{x,y,z}\otimes I$.

$\sigma_{x,y,z}$ generate the spin $SU(2)_s$ group and act on the spin degrees of freedom.

$\mu_{x,y,z}$ act on the positive/negative eigen spaces of the Hamiltonian. The eigenvalue $1$ (or $-1$) of $\mu_z$ labels the positive (or negative) eigen sapce of the total Hamiltonian, and $\mu_{x,y}$ exchange the two subspaces.

$\nu_{x,y,z}$ act on the A,B sublattices. The eigenvalue $1$ (or $-1$) of $\nu_z$ labels the positive (or negative) labels the A (or B) sublattice, and $\nu_{x,y}$ exchange the two sublattices.

For an irrep $D(G)$ of anti-unitary group $G$ of the $\mathbb C$ or $\mathbb H$ type, one can reduce the restricted rep $D(H)$ into two irreps with $H$ the maximal unitary subgroup of $G$. $\omega_{x,y,z}$ acts on the irrep spaces of $D(H)$. The eigenvalue $\pm1$ of $\omega_z$ respectively labels the two irrep spaces, and $\omega_{x,y}$ exchange the two subspaces. Furthermore, the notation $\Omega_{x,y,z}=\omega_{x,y,z}\otimes I$ has been used in the main text.

\section{ Proof of $[D(\mathcal{P})K,D(g)K_{s(g)}]=0$}\label{App.:PcommutingwithGf}

In the Nambu bases $\Psi^\dag=[c_1^\dag, ..., c_N^\dag, c_1, ..., c_N]$, suppose a general symmetry operation $g$ (mixing or non-mixing) is represented as
\Beq
\hat g\Psi^\dag \hat g^{-1}=\Psi^\dag \begin{pmatrix} A(g)&B^*(g)\\B(g) &A^*(g) \end{pmatrix}K_{s(g)} =\Psi^\dag D(g) K_{s(g)}.
\Eeq
Specially, the particle-hole symmetry is represented as $D(\mathcal{P})K=[\tau_x\otimes I_N] K=\Gamma_x K$. Due to the fact that $D(g)=[I_2\otimes \rm{Re} (A) +\mi\tau_z\otimes \rm{Im}(A)+\tau_x\otimes \rm{Re}(B)+\tau_y\otimes \rm{Im}(B)]$ and that $\tau_x K$ commutes with the four matrices $\{I_2,\mi\tau_z,\tau_x,\tau_y\}$, it is easily verified that
$
D(g)\Gamma_x = \Gamma_x [D(g)]^*,
$
Namely,  
\Beq
D(g)K_{s(g)}D(\mathcal{P}) K = D(\mathcal{P})KD(g)K_{s(g)}.
\Eeq

This conclusion can be seen more straightforwardly in the majorana representation with 
\Beq
&&\gamma_i^{l} = c_i+c_i^\dag,\ \ \gamma_i^{r} = -i(c_i-c_i^\dag),\\
&&\{\gamma_i^\alpha,\gamma_j^\beta\}=2\delta_{\alpha\beta}\delta_{ij},\ \ \alpha,\beta=l,r.
\Eeq 
With these bases, any symmetry operation $g$ is represented as a REAL matrix $\mathscr D(g)$, especially $\mathcal P$ is represented as $\mathscr D(\mathcal P)=I_{2N}K$. Hence the commutation of $\mathscr D(\mathcal P)K$ and the real rep $\mathscr D(g)K_{s(g)}$ manifests itself.

In the reduced Nambu subspaces $\psi^\dag =[C_\up^\dag, C_\dn^T]$  with $C_\up^\dag=[c_{1\up}^\dag, c_{2\up}^\dag, ... , c_{L\up}^\dag]$, $C_{\dn}^T=[c_{1\dn}, c_{2\dn}, ..., c_{L\dn}]$ for spin-${1\over 2}$ fermions having spin rotation symmetry, one has $\mathcal{P}^2=-1$ with $D(\cP)K = \mi\tau_y\otimes I_L K$. Generally, a symmetry operation $g$ acts on the lattice and charge (but no spin) degrees of freedom as 
\beq\label{sm:S1/2}
\hat g\psi^\dag \hat g^{-1}\!=\!\psi^\dag \!\begin{pmatrix} A(g)&-B^*(g)\\B(g) &A^*(g) \end{pmatrix}\!\! K_{s(g)} \!=\! \psi^\dag D(g)K_{s(g)},
\eeq
which can be decomposed as $D(g)=[I_2\otimes \rm{Re}(A) +\mi\tau_z\otimes \rm{Im}(A)+\mi\tau_x\otimes \rm{Im}(B)-\mi\tau_y\otimes \rm{Re}(B)].$ Since $\mi\tau_y K$ commutes with  the four matrices $\{I_2,\mi\tau_z,\mi\tau_x,\mi\tau_y\}$. So, $[D(\cP)K, D(g)K_{s(g)}]=0$.

Now we verify equation (\ref{sm:S1/2}). Suppose $g$ acts on $C_\up^\dag$ as
\[
\hat gC_\up^\dag \hat g^{-1} = \left(C_\up^\dag A(g) +  C_{\dn}^T B(g)\right)K_{s(g)}.
\]
Noticing that $C_{\up}^\dag$ and $ C_{\dn}^T$ are related by complex conjugation,
we have $\hat gC_\up^T \hat g^{-1} = \left(C_\up^T A^*(g) +  C_{\dn}^\dag B^*(g)\right)K_{s(g)}.$
On the other hand, noticing that $\hat C_{2y} C_{\up}^T \hat C_{2y}^{-1} = C_{\dn}^T, \hat C_{2y} C_{\dn}^\dag \hat C_{2y}^{-1} = -C_{\up}^\dag$ and that $\hat C_{2y}$ commutes with $\hat g$ (because $\hat C_{2y}$ only acts on spin while $\hat g$ does not act on spin), therefore
\Beq
\hat gC_\dn^T \hat g^{-1} &=&\hat g \hat C_{2y} C_\up^T \hat C_{2y}^{-1}\hat g^{-1} =  \hat C_{2y} \hat g C_\up^T \hat g^{-1}\hat C_{2y}^{-1}\\
 &=& \left(\hat C_{2y} C_\up^T\hat C_{2y}^{-1} A^*(g) +  \hat C_{2y} C_{\dn}^\dag \hat C_{2y}^{-1} B^*(g) \right)K_{s(g)}\\
  &=&\left(- C_{\up}^\dag B^*(g) + C_\dn^T A^*(g)K_{s(g)} \right)K_{s(g)},
\Eeq
which yields eq.(\ref{sm:S1/2}).

\section{Generalized band representation for SCs/SFs/QSLs}\label{sec:BandReps} 

When the pairing terms are switched off, the SCs/SFs become metals or insulators where a number of localized orbitals in the unit cell in real space determine the energy band structure in the Brillouin zone. The energy bands as entire entities form a special representation of the system's symmetry group $G$ called the band representation \cite{zak1982band,Bradlyn2017TopoQuanChemistry,cano2021band}. Such a band rep is called {\it elementary} if it cannot be decomposed as a direct sum of smaller band representations.  Once the elementary band reps are known, all the band reps can be obtained by the direct sum of elementary ones.

The band representations can be generalized to BdG systems by introducing `particle-hole' degrees of freedom. We start from the full `symmetry' group $F_f=G_f+\mathcal{P}G_f$. For simplicity, here we focus on the case where the translation symmetries of $ G_f $ are not fractionalized ({\it i.e.} the unit cell is not enlarged). The construction of an elementary band representation needs two ingredients. The first is a `maximal' Wyckoff position $q_1$ with a site-symmetry group $G_{q_1}$ (composed of operations in $G_f$ which stabilize the site $q_1$). The second is a set of creation and annihilation operators of localized orbitals $\phi_1(q-q_1-R_m),\phi_2(q-q_1-R_m), ... ,\phi_r(q-q_1-R_m)$ centered at $q_1$ in the $m$-th unit cell, namely $[c_{\sigma, q_1+R_m}^\dag, c_{\sigma, q_1+R_m}]$ with $\sigma =1,2,...,r$ and 
$$
c_{\sigma, q_1+R_m}^\dag|{\rm vac}\rangle \equiv \phi_\sigma(q-q_1-R_m),
$$ 
where $|{\rm vac}\rangle$ is the vacuum state. 

The operators $[c_{\sigma, q_1}^\dag, c_{\sigma, q_1}]$ carry a $2r$-dimensional rep $d^{(l)}(G_{q_1})$ of $G_{q_1}$, that is, for any $\gamma=(s_\gamma,c_\gamma|| p_\gamma|t_{\gamma}) \in G_{q_1}$, 
\Beq
\widehat{\gamma}[c_{q_1}^\dag, c_{q_1}] \widehat{\gamma}^\dag = [c_{q_1}^\dag, c_{q_1}] d^{(l)}(\gamma),
\Eeq
where the index $\sigma$ has been hidden, namely $[c_{q_1}^\dag, c_{q_1}]\equiv[c_{1,q_1}^\dag, ... ,c_{r,q_1}^\dag, c_{1,q_1}, ... ,c_{r,q_1}]$. More generally, one has
\Beq
\widehat{\gamma}[c_{q_1+R_m}^\dag, c_{q_1+R_m}] \widehat{\gamma}^\dag \!\!=\! [c_{q_1+p_\gamma R_m}^\dag, c_{q_1+p_\gamma R_m}] d^{(l)}(\gamma).
\Eeq
Performing a Fourier transformation, one obtains the bases carrying a rep of both $G_{q_1}$ and the translation group,
\Beq
c^\dag_{k,\sigma,q_1} &=& \sum_{R_m} e^{\mi k\cdot R_m} c^\dag_{\sigma ,q_1+R_m},\\
c_{-k,\sigma,q_1} &= &\sum_{R_m} e^{\mi k\cdot R_m} c_{\sigma ,q_1+R_m}.
\Eeq
Then for a group element $\gamma=(s_\gamma,c_\gamma|| p_\gamma|t_{\gamma}) \in G_{q_1}$, we have
\begin{align} \label{gammac}
 \widehat{\gamma} [c^\dag_{k, q_1}, c_{-k, q_1}] \widehat{\gamma}^{-1} =[c^\dag_{p_\gamma k, q_1} , c_{-p_\gamma k, q_1}] d^{(l)}(\gamma).
\end{align}

On the other hand, Wyckoff positions equivalent to $q_1$ form a star $\{q_1^*\}=\{q_1, q_2, ... , q_n\}$ with $q_i\equiv \{p_{\alpha_i} | t_{\alpha_i}\}q_1$ and $\{p_{\alpha_i} | t_{\alpha_i}\}$ the space operations of the representative $\alpha_i=(s_{\alpha_i},c_{\alpha_i}||p_{\alpha_i} | t_{\alpha_i})$ in one of the $n$ cosets in the left coset decomposition of $G_f$ with respect to its subgroup $G_{q_1}$. Notice that the choice of coset representative is not unique, but in later discussion, we fix the representative $\alpha_i$ in the corresponding coset and use it to label the coset. Especially,  one has
\Beq
&&\!\!\!\!\!\!\widehat{\alpha_i} \phi_{\sigma}(q-q_1-R_m)\\ 
&&\ \ \ \ =\widehat{(s_{\alpha_i},c_{\alpha_i})} \phi_{\sigma}(p_{\alpha_i}^{-1}(q-t_{\alpha_i})- q_1- R_m)\\
&&\ \ \ \ =\widehat{(s_{\alpha_i},c_{\alpha_i})} \phi_{\sigma}(p_{\alpha_i}^{-1}(q-q_i -p_{\alpha_i}R_m)).
\Eeq
Supposing $\alpha_i R_m=R_n$  and denoting 
$$
c_{\sigma,q_i+R_n}^\dag |{\rm vac}\rangle \equiv \widehat{(s_{\alpha_i},c_{\alpha_i})}\phi_{\sigma}\big(p_{\alpha_i}^{-1}(q-q_i-R_n)\big),
$$
then we have
\beq\label{alphac}
\widehat{\alpha_i}  [c^\dag_{k,\sigma,q_1}, c_{-k,\sigma,q_1}]   \widehat{\alpha_i}^{-1}  \equiv\! \big[c^\dag_{ p_{\alpha_i} k,\sigma, q_i}, c_{-p_{\alpha_i} k,\sigma, q_i}\big].
\eeq

The set of bases $$\big\{ [c^\dag_{ k,\sigma,q_i}, c_{-k,\sigma,q_i}]; 1\leq\sigma\leq r, 1\leq i\leq n, k\in {\rm BZ} \big\}$$ span a (infinite dimensional) band rep of $G_f$. To illustrate, we investigate the rep of $\alpha=(s_{\alpha},c_{\alpha}||p_\alpha | t_{\alpha}) \in G_f$. For any coset representative $\alpha_i$, one can always find another coset representative $\alpha_j$ such that
\begin{align}\label{AppC:2}
\alpha\cdot \alpha_i=(E,E||E| R_{\alpha,\alpha_i})\cdot\alpha_j\cdot\gamma
\end{align}
with $\gamma=(s_\gamma,c_\gamma|| p_\gamma|t_{\gamma}) \in G_{q_1}$ and $R_{\alpha,\alpha_i}$ is a Bravais lattice vector. The $R_{\alpha,\alpha_i}$ can be calculated by acting the (\ref{AppC:2}) on $q_1$, and we have 
$$R_{\alpha,\alpha_i}=p_{\alpha}\cdot q_i+t_{\alpha}-q_j.$$ 
Using (\ref{gammac}) and (\ref{alphac}), the induced projective band representation for $\alpha$ is given by
\beq\label{BandReps}
&&\!\!\!\!\! \widehat{\alpha}[c^\dag_{k, q_i}, c_{-k,  q_i}]\widehat{\alpha}^{-1}\notag \\
&=&\!\!\! [c^\dag_{p_{\alpha} k, q_j}, c_{-p_{\alpha} k, q_j}] d^{(l)}(\gamma)e^{-\mi (p_{\alpha} k)\cdot (p_{\alpha}\cdot q_i+t_{\alpha}-q_j)}, 
\eeq
where the index $\sigma$ has been hidden.  Equivalently, the entries of the band rep of $\alpha$ in the equation (\ref{BandReps}) are given as:
\begin{align}\label{BandRepsEntry}
D^{(l)}_{(k,\Sigma,q_i), (p_{\alpha} k,\Sigma', q_j)}\!(\alpha) \!=\! d^{(l)}_{\Sigma,\Sigma'}(\gamma) e^{-\mi (p_{\alpha} k)\cdot (p_{\alpha}\cdot q_i+t_{\alpha}-q_j)}
\end{align}
with $\Sigma=1, ..., 2r$.

Especially, if $p_{\alpha}\cdot k = k + b_m$ with $b_m$ a reciprocal lattice vector, namely, if $\{(p_{\alpha} | t_\alpha)\}$ belongs to the little group of the momentum $k$, for the $R_{\alpha,\alpha_i}$ is a Bravais lattice vector, then (\ref{BandRepsEntry}) reduces to
\begin{align}\label{LittleBandRepsEntry}
D^{(l)k}_{(\Sigma,q_i), (\Sigma', q_j)}(\alpha)  = d^{(l)}_{\Sigma,\Sigma'}(\gamma) e^{-\mi k\cdot (p_{\alpha}\cdot q_i+t_{\alpha}-q_j)}.
\end{align}
Moreover, if one defines the representation matrix for the little co-group element $g_\alpha=(s_\alpha, c_\alpha ||p_\alpha) \in G_f(k)$ with 
\Beq
D^{(l)k}_{(\Sigma,q_i), (\Sigma', q_j)}(g_\alpha) &=& D^{(l)k}_{(\Sigma,q_i), (\Sigma', q_j)}(\alpha)e^{\mi k\cdot t_\alpha}\\
&= &d^{(l)}_{\Sigma,\Sigma'}(\gamma) e^{-\mi k\cdot (p_{\alpha}\cdot q_i -q_j)},
\Eeq
%\Beq
%D^{(l)k}_{(\Sigma,q_i), (\Sigma', q_j)}(\alpha) &=& D^{(l)k}_{(\Sigma,q_i), (\Sigma', q_j)}(\alpha)e^{\mi k\cdot t_\alpha}\\
%&= &d^{(l)}_{\Sigma,\Sigma'}(\gamma) e^{-\mi k\cdot (p_{\alpha}\cdot q_i -q_j)},
%\Eeq
then $D^{(l)k}_{(\Sigma,q_i), (\Sigma', q_j)}(G_f(k))$ form a projective rep of the little co-group $G_f(k)$ with the factor system 
$$
\omega_2(g_\alpha, g_\beta) = e^{-i K_{\alpha} \cdot t_{\beta}} %\alpha,\beta
$$ 
where $g_\alpha=(s_\alpha, c_\alpha ||p_\alpha), g_\beta=(s_\beta, c_\beta ||p_\beta)$, $K_{\alpha} = p_\alpha^{-1}\cdot k-k$ with $\alpha,\beta\in G_f$ and $g_\alpha, g_\beta \in G_f(k)$.

Above we illustrated the band rep of unitary elements. The procedure can be generalized to anti-unitary group elements, which will not be repeated here.

Now we extend the band representation to the full symmetry group $F_f(k)$. Supposing $\tp =  g_0\cP \in F_f(k)$ with $g_0\notin F_f(k)$ and $g_0$ mapping $\big[c_{k,\{\sigma,q_i\}}^\dag,c_{-k\{\sigma,q_i\}}\big]$ to $\big[c_{-k,\{\sigma',q_j\}}^\dag, c_{k,\{\sigma',q_j\}}\big]$, then
the rep of $g_0$ can be obtained from (\ref{BandReps}) or (\ref{BandRepsEntry}) and the rep of $\mathcal{P}$ has two possibilities. For the class D or DIII, the Nambu bases read $\Psi_k^\dag=[c_{k,\{\sigma,q_i\}}^\dag, c_{-k,\{\sigma,q_i\}}]$ and the $\mathcal{P}$ is represented as $\tau_x\otimes I K$; while for the class C or CI, the reduced Nambu bases read $\psi_k^\dag= [C^\dag_{k\up}, C_{-k\dn}]=[ c_{k\up \{q_i\}}^\dag, c_{-k\dn \{q_i\}}]$ in which the rep $d^{(l)}(\gamma)$ is $r$-dimensional and the $\mathcal{P}$ is represented as $\mi\tau_y\otimes I K$. Then the rep of $\p$ can be obtain from the reps of $g_0$ and $\mathcal{P}$.

\section{The restriction from irreps of $F_f(k)$ to its normal subgroup $G_f(k)$}\label{App.:AA}

We start from a theorem:\\ %from irrep $D(F_f(k))$

\begin{theorem}\label{thrm:1}
If a group $G$ has normal subgroup $H$ with $G = H + r H, r^2\in H$ and $G/H\cong Z_2$, supposing a Hilbert space $\mathcal L$ carries a projective irrep $D(H)$ of $H$, then the restricted rep $D(H)$ contains at most two irreps of $H$. There are two possibilities:
\bit
\item[(I)] $D(H)$ is an irrep of $H$; 
\item[(II)] $D(H)$ can be transformed into a direct sum of two irreps, namely $\mathcal{L}=\mathcal{L}_{1}\oplus \mathcal{L}_{2}$ where $\mathcal{L}_{1}, \mathcal{L}_{2}$ each carries an irrep of $H$, and $r$ permutes $\mathcal{L}_{1},\mathcal{L}_{2}$.  
%\item[II,] $\mathcal{L}$ is reducible for $G_f(k)$, namely $\mathcal{L}=\mathcal{L}_{\varepsilon}\oplus \mathcal{L}_{-\varepsilon}$ where each of the two subspaces $\mathcal{L}_{\varepsilon}$ and $\mathcal{L}_{-\varepsilon}$ carries an irrep of $G_f(k)$. The effective particle-hole operation $\tp$ permutes the two subspaces $\mathcal{L}_{\varepsilon},\mathcal{L}_{-\varepsilon}$.  
\eit
\end{theorem}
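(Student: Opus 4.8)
This is a projective-representation analogue of Clifford's theorem for an index-two normal subgroup, so the plan is to adapt the classical argument. (I read the hypothesis the way it is used in the main text: $\mathcal L$ carries a projective irrep $D$ of $G$ with some factor system $\omega$, and the assertion concerns the restriction $D|_H$, still denoted $D(H)$; here ``irrep'' means irreducible projective co-representation if $G$ contains anti-unitary elements.) First I would record the conjugation structure: for $x\in H$ one has $D(r)D(x)D(r)^{-1}=c_r(x)\,D(rxr^{-1})$ with $rxr^{-1}\in H$ (since $H\trianglelefteq G$) and a scalar $c_r(x)$ depending only on $\omega$. Hence if $W\subseteq\mathcal L$ is an $H$-invariant subspace carrying an irrep $\rho$ of $H$, then $D(r)W$ is again $H$-invariant and carries the $r$-conjugate irrep $\rho^r(x):=c_r(x)\,\rho(rxr^{-1})$, which has the same restricted factor system up to the explicit coboundary $c_r$.

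Next I would decompose $\mathcal L=\bigoplus_i\mathcal L_i$ into its $H$-isotypic components. Because $D(r)$ is a bijection of $\mathcal L$ sending $H$-subrepresentations to $H$-subrepresentations, it permutes the $\mathcal L_i$; the span of any $\langle D(r)\rangle$-orbit is simultaneously $H$- and $D(r)$-invariant, hence $G$-invariant, hence $0$ or $\mathcal L$ by irreducibility of $D$. So $D(r)$ permutes the isotypic components transitively, and since $D(r)^2=\omega(r,r)\,D(r^2)$ with $r^2\in H$ preserves each component, the orbit has length $1$ or $2$. Thus either $\mathcal L=\mathcal L_1$ is a single $H$-isotype $\rho$ with some multiplicity $e$, or $\mathcal L=\mathcal L_1\oplus\mathcal L_2$ with isotypes $\rho$ and $\rho^r\not\cong\rho$, each of multiplicity $e$ and interchanged by $D(r)$.

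The crux is then to show $e=1$. In the two-component case, pick a single copy $V\subseteq\mathcal L_1$ of $\rho$; from $D(r)^2V=\omega(r,r)\,D(r^2)V=\omega(r,r)\,V$ (as $V$ is $H$-invariant) together with the conjugation relation above one checks that $V\oplus D(r)V$ is both $H$- and $D(r)$-invariant, hence all of $\mathcal L$, forcing $e=1$; this is case (II), with $\mathcal L_1,\mathcal L_2$ the two irreps, swapped by $r$. In the one-component case, write $\mathcal L=\mathbb C^{e}\otimes V_\rho$ with $D(x)=I_e\otimes\rho(x)$ and $D(r)=B\otimes T$, where $T$ is a fixed intertwiner $\rho\to\rho^r\cong\rho$ and $B\in GL(\mathbb C^{e})$; the identity $D(r)^2=\omega(r,r)\,D(r^2)$ then forces $B^2$ to be scalar, so $B$ is diagonalizable and any $B$-eigenspace tensored with $V_\rho$ is $G$-invariant. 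Irreducibility of $D$ forces $e=1$ (if $e$ were $2$, the two eigenlines of $B$ would split $D$ — precisely the familiar reducibility of $\mathrm{Ind}_H^G\rho$ when $\rho^r\cong\rho$). This is case (I), and collecting the two outcomes gives the theorem.

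I expect the main obstacle to be this last step in the anti-unitary setting: when $H$ contains anti-unitary elements, $\mathrm{End}_H(\rho)$ may be $\mathbb R$, $\mathbb C$, or $\mathbb H$ rather than just $\mathbb C$, so the ``$B^2$ scalar'' manipulation must be performed over the appropriate division algebra (Schur's lemma for irreducible co-representations), and the scalar ambiguities coming from $c_r$ and $\omega(r,r)$ have to be tracked carefully; all of these, however, get absorbed once one works inside the commutant of $D(H)$ and uses that $D(r)^2\in\mathbb C\cdot D(H)$.
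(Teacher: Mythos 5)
Your overall strategy --- Clifford theory for an index-two normal subgroup --- is the right one, and its first steps are sound: the conjugation relation, the fact that $D(r)$ permutes the $H$-isotypic components transitively so the orbit has length $1$ or $2$, and the two-component case, where $V\oplus D(r)V$ is $G$-invariant because $D(r)^2V=\omega(r,r)D(r^2)V=V$, forcing $e=1$. The gap is in the single-isotype case, where you conclude $e=1$ and hence case (I). That conclusion is false in exactly the setting this paper needs. When $r$ is anti-unitary (with $H$ unitary and $\rho$ absolutely irreducible), writing $D(r)=(B\otimes T)K$ gives $D(r)^2=B\bar{B}\otimes T\bar{T}$, so Schur's lemma yields $B\bar{B}=\lambda I_e$, not $B^2=\lambda I_e$. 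For $\lambda<0$ --- the quaternionic class $\mathbb{H}$, e.g. $B=\mi\sigma_y$ with $e=2$ --- the antilinear map $v\mapsto B\bar{v}$ has no one-dimensional invariant subspace (Kramers degeneracy), so $D(G)$ is irreducible while $D(H)$ is a direct sum of two \emph{equivalent} irreps. This situation still satisfies case (II) of the theorem, since $D(r)$ swaps the subspaces $e_1\otimes V_\rho$ and $e_2\otimes V_\rho$; but your dichotomy assigns every single-isotype situation to case (I). Your closing remark that the anti-unitary subtleties ``get absorbed'' by working in the commutant is precisely where this breaks: carrying the computation out shows $e=1$ must be replaced by ``$e\le 2$, and $e=2$ belongs to case (II).'' Since the $\mathbb{H}$ class (restriction to the maximal unitary subgroup equal to two equivalent irreps) is one of the three Wigner classes the paper relies on throughout Table I and Appendices E--G, this is not a removable technicality.

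For comparison, the paper's proof sidesteps isotypic components and multiplicities entirely: it picks one irreducible $H$-invariant subspace $\mathcal{L}_1$, sets $\mathcal{L}_2=r\mathcal{L}_1$, and uses irreducibility of $D(G)$ plus irreducibility of $\mathcal{L}_1,\mathcal{L}_2$ under $H$ to conclude either $\mathcal{L}_1=\mathcal{L}_2=\mathcal{L}$ (case I) or $\mathcal{L}=\mathcal{L}_1\oplus\mathcal{L}_2$ (case II). Because it never attempts to decide whether the two $H$-irreps are equivalent, it is automatically valid for anti-unitary $r$ and anti-unitary $H$, and the quaternionic case simply lands in case (II). If you want to keep your isotypic route, the fix is to replace the ``$e=1$'' claim in the single-isotype case by an analysis of the antilinear operator induced by $D(r)$ on the multiplicity space over the relevant division algebra, which recovers $e\in\{1,2\}$ and the correct assignment to cases (I) and (II).
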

\begin{proof}
\rm 
Since $\mathcal{L}$  carries a representation of $H$, it must contain an invariant subspace $\mathcal{L}_{1}\subseteq \mathcal L$ of $H$. Namely, for any $u\in H$, one has
\[
u \mathcal{L}_{1} = \mathcal{L}_{1}.
\]
Meanwhile, $\mathcal{L}$ carries an irrep of $G$, so $r$ transforms $\mathcal{L}_{1}$ into another linear subspace $\mathcal{L}_{2}\equiv r(\mathcal{L}_{1})\subseteq \mathcal L$,
%\Beq
%h_k \left(\tp \mathcal{L}_{\varepsilon} \right) &=&
%\tp \left(\tp^{-1} h_k  \tp  \right)  \mathcal{L}_{\varepsilon}\\
%&=& -\tp h_k \mathcal{L}_{\varepsilon} = - \veps \left(\tp  \mathcal{L}_{\varepsilon}\right). 
%\Eeq
then it is easily seen that $\mathcal{L}_{2}$ is also invariant under the action of $u\in H$:
\Beq
u\mathcal{L}_{2} = u r \mathcal{L}_{1} &=&r \left(r^{-1} u r\right) \mathcal{L}_{1} \\&=& 
r u' \mathcal{L}_{1} = r \mathcal{L}_{1} = \mathcal{L}_{2},
\Eeq
where we have used the fact that $u'= (r^{-1} u r) \in H$. Therefore, $\mathcal{L}_{2}$ also carries an irrep of $H$. 

Since both $\mathcal{L}_{1}$ and $\mathcal{L}_{2}$ are irreducible representation spaces of  $H$, there are two possibilities: \\
(I) $\mathcal{L}_{1} = \mathcal{L}_{2}=\mathcal{L}$. In this case, the restricted rep $D(H)$ from $D(G)$ is irreducible, we call such $D(G)$ a {\it simple irrep};\\
(II) $\mathcal{L}_{1}\cap\mathcal{L}_{2}=\emptyset$. In this case, $\mathcal L' = \mathcal{L}_{1} \cup \mathcal{L}_{2}$ is an irrep space of $G$. Noticing that $\mathcal L$ is an irrep space of $G$, and that irreducible space of $G$ containing $\mathcal L_1$ as a subspace is unique, we have $\mathcal L'=\mathcal L$, namely $\mathcal L= \mathcal{L}_{1} \cup \mathcal{L}_{2}$. In this case, the restricted rep $D(H)$ from $D(G)$ is reducible, we call such $D(G)$ a {\it composite irrep}.

The above proof is valid no matter if $H$ is an unitary group or an anti-unitary, and no matter if the element $r$ is unitary or anti-unitary. \\

The structure of the simple irreps in case (I) is relatively simple since $D(H)$ is already irreducible. Here we discuss the structure of the composite irreps in case (II).

Fixing a set of bases $[\Phi] = [\phi_1, ..., \phi_n]$ in $\mathcal{L}_{1}$ (assuming the dimensionality of $\mathcal{L}_{1}$ is $n$), and denote $d(H)$ as the $d$-dimensional representation of $H$ on $\mathcal{L}_{1}$. In the other subspace $\mathcal{L}_{2}$, we simply choose the bases as  $r[\Phi] = [r\phi_1, ..., r\phi_n]$. Then in the space $\mathcal L$ the particle-hole operator $r$ is represented as
\Beq
D(r)K_{s(r)}=\Bmat 0& \omega_2(r,r) d(r^2) \\ I_n & 0 \Emat K_{s(r)},
\Eeq
where $r^2\in H$. For any $u\in H$, we have
\Beq
D(u)K_{s(u)}=\begin{pmatrix} d(u)& 0\\ 0 & d'(u)\end{pmatrix}K_{s(u)}.
\Eeq
with
\Beq
d'(u)={\omega_2(u,r)\over\omega_2(r,r^{-1}gr)} K_{s(r)} d(r^{-1}u r)K_{s(r)}.\\
\Eeq
%The Hamiltonian matrix $h_k$ commuting with $D(G_f(k)$ and anti-commuting with $D(\tp)K_{s(\tp)}$ is simply given as $h_k\equiv \begin{pmatrix} \varepsilon_k I_n & 0\\ 0 & -\varepsilon_k I_n \end{pmatrix}$.
%\\
\end{proof}

Now we apply the above theorem to the little co-group $F_f(k)$ at momentum $\bm k$ that is invariant under $\tp$, with  $F_f(k)=G_f(k)+\tp G_f(k)$ and  
\[
F_f(k)/G_f(k) \cong Z_2.
\]
Denoting $D(F_f(k))$ as a projective irrep of $F_f(k)$, according to theorem \ref{thrm:1} the restricted rep $D(G_f(k))$ is either irreducible or a direct sum of two irreps. In the following we discuss three different situations of $F_f(k)$.

(1) $F_f(k)$ is an unitary group. The simple irrep of $F_f(k)$ either contribute a set of irreducible zero modes (if there are effective charge conjugation symmetry) or couple with another simple irrep to form a set of RMNZM; the composite irrep of $F_f(k)$ gives rise to a set of irreducible non-zero modes, where the energy eigenspaces $\mathcal L_1$ and $\mathcal L_2$ have opposite energies $\pm\veps_k$. 
%The irrep of $F_f(k)$ falls into two classes, extension from the subgroup $G_f(k)$ and induction from the subgroup $G_f(k)$. The first class corresponds to the case {\it I} in the theorem\ref{thrm:1}(irreducible zero modes) and the second class corresponds to the case {\it II} (irreducible nonzero modes). 

(2) $F_f(k)$ is an anti-unitary group, $G_f(k)$ is unitary and $\tp$ is anti-unitary.  The irrep of $F_f(k)$ falls into three classes, namely the $\mathbb R, \mathbb C,\mathbb H$ classes.  The irrep in the $\mathbb R$ class is a simple irrep; the $\mathbb C$ and $\mathbb H$ classes are composite irreps (irreducible nonzero modes), the two classes are distinguished by whether the representations carried by $\mathcal{L}_{1}$ is equivalent to that of $\mathcal{L}_{2}$ or not)\cite{CMP}. 

(3) $F_f(k)$ is anti-unitary, $G_f(k)$ is anti-unitary and $\tp$ is unitary. Similar to the case where $F_f(k)$ is unitary, the irrep of $F_f(k)$ can be simple or composite. 

The case where both $G_f(k)$ and $\tp$ are anti-unitary can be categorized into the case (2) by redefining $\tp$ to be a unitary coset representative.

\section{Reducible minimal nonzero modes}\label{app:RMNZMs}

A set of reducible minimal nonzero modes (RMNZM) contains two simple irreps of $F_f(k)$. To ensure that the two irreps can  couple with each other to form nonzero modes, the restricted reps of the normal subgroup $G_f(k)$ should be equivalent. Now we start with two irreps $D(F_f(k))$ and $D'(F_f(k))$ satisfying the relation
\Beq
D(G_f(k))=D'(G_f(k)).
\Eeq
Defining a matrix 
$$
X\equiv D(\tp)D'^{-1}(\tp) ,
$$ then we have the following lemma:\\
\begin{lemma}\label{lemma2.1}
If two simple irreps $D(F_f(k))$ and $D'(F_f(k))$ satisfies the relation $D(G_f(k))=D'(G_f(k))$, then the unitary matrix $X= D(\tp)D'^{-1}(\tp)$ commutes with the restricted rep $D(G_f(k))$. \\
\end{lemma}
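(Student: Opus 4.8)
The plan is to cast everything in terms of the operators $\tilde D(a):=D(a)K_{s(a)}$ and $\tilde D'(a):=D'(a)K_{s(a)}$ representing $a\in F_f(k)$; these obey the projective composition law $\tilde D(a)\tilde D(b)=\omega_2(a,b)\,\tilde D(ab)$ with \emph{one and the same} factor system $\omega_2$, which is precisely what it means for $D'$ to be a partner of $D$. By the normalization~(\ref{DQpg}) one also has $\tilde D(g)=\tilde D'(g)$ for every $g\in G_f(k)$. Working at the operator level, rather than with bare matrices carrying hidden conjugations, is what will keep the anti-unitary bookkeeping under control.

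First I would fix $g\in G_f(k)$ and put $h:=\tp^{-1}g\tp$, which lies in $G_f(k)$ since $G_f(k)$ is a normal index-two subgroup of $F_f(k)$. The group identity $g\tp=\tp h$, fed through the composition law on both sides, yields $\tilde D(g)\tilde D(\tp)=\lambda\,\tilde D(\tp)\tilde D(h)$ with $\lambda=\omega_2(g,\tp)/\omega_2(\tp,h)$, and the identical relation holds with $D'$ in place of $D$ and with the \emph{same} scalar $\lambda$ (it depends only on $\omega_2$ and the group elements). Now substitute $\tilde D(g)=\tilde D'(g)$ and $\tilde D(h)=\tilde D'(h)$: the two relations become $\tilde D(g)\tilde D(\tp)=\lambda\,\tilde D(\tp)\tilde D(h)$ and $\tilde D(g)\tilde D'(\tp)=\lambda\,\tilde D'(\tp)\tilde D(h)$. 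Solving the second for $\tilde D(h)$ and inserting it into the first makes $\lambda$ cancel and gives $\tilde D(g)\,\tilde D(\tp)[\tilde D'(\tp)]^{-1}=\tilde D(\tp)[\tilde D'(\tp)]^{-1}\,\tilde D(g)$, i.e. $[\tilde D(g),Z]=0$ with $Z:=\tilde D(\tp)[\tilde D'(\tp)]^{-1}$. Finally the conjugation operators telescope, $Z=D(\tp)K_{s(\tp)}K_{s(\tp)}^{-1}[D'(\tp)]^{-1}=D(\tp)[D'(\tp)]^{-1}=X$, and $X$ is unitary because $\tilde D(\tp)$ and $\tilde D'(\tp)$ are unitary operators; hence $X$ commutes with $\tilde D(g)$ for all $g\in G_f(k)$, which is exactly the claim that $X$ lies in the centralizer of the restricted representation $D(G_f(k))$.

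The only delicate point is the anti-unitarity: both $\tp$ and some $g\in G_f(k)$ can be anti-unitary, so one must (i) keep each $K_{s(\cdot)}$ attached to its matrix throughout and invoke the composition law for the $\tilde D$'s rather than manipulating bare matrix products (where conjugations would intrude), and (ii) note that $\omega_2$ and the parity $s(\cdot)$ are common to $D$ and $D'$---the latter automatically, since $s$ depends only on the group element. I do \emph{not} need Schur's lemma or the irreducibility of $D,D'$ for this lemma; irreducibility enters only afterwards, in Tab.~\ref{tab:RMNZM}, where it pins $X$ down to the explicit forms $\eta_{\tp}I_n$, $e^{\mi\theta\Omega_z}$, $e^{\mi\theta\Omega_{\pmb n}}$, and so on.
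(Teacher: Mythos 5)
Your proof is correct and follows essentially the same route as the paper's: both exploit the shared factor system together with $D|_{G_f(k)}=D'|_{G_f(k)}$ to compare the two implementations of conjugation by $\tp$, and conclude that $X$ commutes with $D(g)K_{s(g)}$ for every $g\in G_f(k)$. Your arrangement is slightly more economical --- a single identity $g\tp=\tp(\tp^{-1}g\tp)$ with elimination of $\tilde D(h)$, in place of the paper's two auxiliary relations and the coset product $D(\tp h_1)D(h_2\tp^{-1})$ --- but the mechanism and the anti-unitary bookkeeping are the same.
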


\begin{proof}
\rm
For any $h\in G_f(k)$, one has
\Beq
D(\tp)K_{s(\tp)}D(h) K_{s(h)}&=&\omega_2(\tp,h)D(\tilde Ph)K_{s(\tp h)} \\ 
&=&D(\tilde P)K_{s(\tp)}D'(h)K_{s(h)} \\
&=& X D'(\tp)D'(h)K_{s(\tp h)}  \\
&=&\omega_2(\tp,h) X D'(\tp h)K_{s(\tp h)},
\Eeq
we have
\beq\label{relation1}
D(\tp h)K_{s(\tp h)}=X D'(\tp h)K_{s(\tp h)}.
\eeq

On the other hand, 
\Beq
&&\omega_2(h,\tp^{-1})D(h\tp^{-1})K_{s(\tp h)}\\
&=&D(h)K_{s(h)}D(\tp^{-1})K_{s(\tp)}\\
&=&D(h)K_{s(h)}\omega_2(\tp^{-1},\tp)K_{s(\tp)}D^{-1}(\tp)\\
&=&D'(h)K_{s(h)}\omega_2(\tp^{-1},\tp)K_{s(\tp)}D^{-1}(\tp)\\
&=&D'(h)K_{s(h)}\omega_2(\tp^{-1},\tp)K_{s(\tp)}\left[D'^{-1}(\tp)D'(\tp)\right]D^{-1}(\tp)\\
&=&D'(h)K_{s(h)}D'(\tp^{-1})K_{s(\tp)}D'(\tp)D^{-1}(\tp)\\
&=&\omega_2(h,\tp^{-1})D'(h\tp^{-1})K_{s(\tp h)}X^{-1},
\Eeq
so we have
\beq\label{relation2}
D(h\tp^{-1})K_{s(\tp h)}=D'(h\tp^{-1})K_{s(\tp h)}X^{-1}.
\eeq

Using the above two relations (\ref{relation1}) and (\ref{relation2}), for any $h_1,h_2\in G_f(k)$, we have 
\Beq
&&D(\tp h_1)K_{s(\tp h_1)}D(h_2\tp^{-1})K_{s(\tp h_2)}\\
&=&\omega_2(\tp h_1,h_2\tp^{-1})D(\tp h_1h_2\tp^{-1})K_{h_1h_2}\\
&=&\omega_2(\tp h_1,h_2\tp^{-1})D'(\tp h_1h_2\tp^{-1})K_{h_1h_2}.
\Eeq
On the other hand, 
\Beq
&&D(\tp h_1)K_{s(\tp h_1)}D(h_2\tp^{-1})K_{s(\tp h_2)}\\
&=&X D'(\tp h_1)K_{s(\tp h_1)}D'(h_2\tp^{-1})K_{s(\tp h_2)}X^{-1}\\
&=&\omega_2(\tp h_1,h_2\tp^{-1})X D'(\tp h_1h_2\tp^{-1})K_{h_1h_2}X^{-1}.
\Eeq
Denoting $\tp h_1h_2\tp^{-1} = g \in G_f(k)$,  then we have $D'(g)K_{s(g)}=X D'(g)K_{s(g)}X^{-1}$, namely
\Beq
D(g)K_{s(g)}=X D(g)K_{s(g)}X^{-1}.
\Eeq
Since $h_1,h_2$ are arbitrary elements in $G_f(k)$, we conclude that for any $g\in G_f(k)$ the unitary matrix $X$ commutes with the restricted rep $D(G_f(k))$.\\ 
\end{proof}

From the above lemma\ref{lemma2.1}, if two %sets of irreducible zero modes 
simple irreps of $F_f(k)$ only differ by the reps of $\tp$, then the difference $X=D(\tp)D'^{-1}(\tp)$ must belong to the centralizer of $D(G_f(k))$. This is a necessary condition under which two %sets of irreducible zero modes 
simple irreps can couple with each other to yield a set of reducible nonzero modes. In the following, we will list all possible cases that may lead to a set of RMNZM.

\subsection{When the $\tp$ is unitary (chiral-like, such as $\tp=\mathcal{TP}$)}

From $D(\tp)=XD'(\tp)$ one has $$[D(\p)]^2=X D'(\p)X D'(\p).$$
On the other hand since $\tp^2\in G_f(k)$, we have $\omega_2(\p,\p)D(\p^2)=\omega_2(\p,\p)\m(\p^2),$ which yields
\Beq
[D(\p)]^2=[D'(\p)]^2. 
\Eeq
From the above two relations, we have $D'(\p)=X D'(\p)X$, or equivalently 
\beq\label{XDX}
D(\p)=X D(\p)X.
\eeq

\subsubsection{\bf when the $G_f(k)$ is unitary}\label{Appsubsubsection:Gfuni}

If $G_f(k)$ is unitary, the unitary elements in the centralizer of $D(G_f(k))$ form a $U(1)$ group $\{ e^{\mi\theta}I;\theta\in[0,2\pi)\}$ according to Schur's lemma.  Since $X$ belongs to the unitary centralizer, we have $X=\eta_{\tp} I$ with $|\eta_{\p}|=1$, namely $D(\p)=\eta_{\tp} D'(\p)$.

From (\ref{XDX}) we have $\eta_{\tp}^2=1$, namely $\eta_{\tp}=\pm 1$. So there are only two possible partners: 
$$
D'(\tp)=\pm D(\p).
$$

(1) If $\eta_{\tp}=1$, the $D'(F_f(k))$ is a copy of $D(F_f(k))$. Hence the direct sum $\tilde{D}(F_f(k))$ of two $D(F_f(k))$ is written as 
\Beq
\tilde{D}(g)= I_2\otimes D(g),\ \ 
\tilde{D}(\tp)= I_2\otimes D(\p),
\Eeq
with $g\in G_f(k)$. Now we solve the allowed Hamiltonian which falls in the centralizer of $\tilde{D}(G_f(k))$, namely 
$\begin{footnotesize}\left\{\begin{pmatrix}
x_1 & x_2\\
x_3 & x_4
\end{pmatrix}\otimes I,x_i\in \mathbb{C}\right\}\end{footnotesize}$. Moreover, the anti-commutation relation between the Hamiltonian and $\tilde{D}(\p)$ further yields
\begin{align}
\begin{pmatrix}
x_1 & x_2\\
x_3 & x_4
\end{pmatrix}=-\begin{pmatrix}
x_1 & x_2\\
x_3 & x_4
\end{pmatrix}\notag
\end{align}
So, $x_1=x_2=x_3=x_4=0$. The solutions of the Hamiltonian can only be the zero matrix $h_k=0$. Namely, this case cannot yield reducible nonzero modes.

(2) If $\eta_{\tp}=-1$, the reps $D(\tp)$ and $-D(\tp)$ together can obtain a nonzero energy: the direct sum $\tilde{D}(F_f(k))$ of two reps is 
\Beq
\tilde{D}(g)= I_2\otimes D(g),\ \ 
\tilde{D}(\tp)= \mu_z \otimes D(\p),
\Eeq
so there exists a nonzero hermitian Hamiltonian $h_k=\veps_k \mu_x\otimes I$ commuting with $\tilde{D}(G_f(k))$ and anti-commuting with $\tilde{D}(\tp)$. Namely, this case corresponds to RMNZM.

\subsubsection{when the $G_f(k)$ is anti-unitary}\label{Appsubsubsec:Gfanti}

If $G_f(k)$ is anti-unitary, the centralizer of $D(G_f(k))$ can be isomorphic to the algebra of the real numbers $\mathbb{R}$, complex numbers $\mathbb{C}$ or quaternions $\mathbb{H}$. Let $H_f(k)$ denote its maximum unitary subgroup. This will lead to three different situations:

(b1) When $D(G_f(k))$ is of the real class $\mathbb{R}$, the restricted rep $D(H_f(k))$ is still irreducible. Moreover, the centralizer of $D(G_f(k))$ only contains two elements $\{\pm I\}$. Therefore, similar to the case when $G_f(k)$ is unitary, one has $\eta_{\tp} = \pm1$ and accordingly $D'(\tp)=\pm D(\tp)$.

For $\eta_{\tp}=1$, the direct sum of two $D(F_f(k))$ reps is 
$$
\tilde{D}(g)K_{s(g)}=I_2\otimes D(g)K_{s(g)},\ \ 
\tilde{D}(\tp)=I_2\otimes D(\p),
$$ 
where $g\in G_f(k)$, so the centralizer of the direct sum of two $D(Q_{k})$ is 
$\left\{\begin{pmatrix}
x_1 & x_2\\
x_3 & x_4
\end{pmatrix}\otimes I,x_i\in \mathbb{R}\right\}$ and the condition of anti-commuting with $\tilde D(\tp)$ also makes the solution be zero. This case does not yield reducible nonzero modes.

For $\eta_{\tp}=-1$,  the direct sum $\tilde{D}(F_f(k))$ of two reps is 
$$
\tilde{D}(g)K_{s(g)}=I_2\otimes D(g)K_{s(g)},\ \ \tilde{D}(\tp)=\mu_z\otimes D(\tp),
$$ 
so there also exists a nonzero Hamiltonian $h_k=\veps_k\mu_x\otimes I$ commuting with $\tilde{D}(G_f(k))$ and anticommuting with $\tilde{D}(\tp)$. This gives rise to a set of RMNZM.

(b2) When $D(G_f(k))$ is of the complex class $\mathbb{C}$, the $D(H_f(k))$ is a direct sum of two non-equivalent irreps. The centralizer of $D(G_f(k))$ takes the form $\left\{\begin{pmatrix}aI&0\\ 0& a^*I\end{pmatrix},a\in\mathbb{C}\right\}$. Hence $X$ belongs to the $U(1)$ group $X\in \left\{e^{\mi \theta\Omega_z},\theta\in[0,2\pi)\right\}$ formed by unitary elements in the centralizer. 

Letting $X=e^{i\theta_0\Omega_z}$,  since $X$ falls in the centralizer of $D(G_f(k)$, for any $g\in G_f(k)$, one has
\begin{align}\label{C-class:QpgCommuteGamma}
e^{\mi\theta_0\Omega_z} D(g)K_{s(g)} = D(g)K_{s(g)} e^{\mi\theta_0\Omega_z}.
\end{align}
On the other hand, equation (\ref{XDX}) yields
\begin{align}\label{C-calss:GammaP}
e^{-\mi\theta_0\Omega_z}D(\tp)=D(\tp)e^{\mi\theta_0\Omega_z}.
\end{align}

Using $e^{\mi\theta_0\Omega_z}=\cos\theta_0+\mi\sin\theta_0\Omega_z$, the equation (\ref{C-calss:GammaP}) can be written as
\beq\label{anticommuP}
-\sin\theta_0(\mi\Omega_z)D(\tp)=\sin\theta_0D(\tp)(\mi\Omega_z).
\eeq
and the equation (\ref{C-class:QpgCommuteGamma}) is equivalent to
\beq\label{center}
(\mi\Omega_z) D(g)K_{s(g)} = D(g)K_{s(g)} (\mi\Omega_z).
\eeq

1) If  the following relation holds, 
$$
D(\p) \Omega_z = -\Omega_z D(\tp),
$$ 
then any value $\theta_0\in [0,2\pi)$ satisfies (\ref{anticommuP}). Actually, $D'(\tp)=e^{-\mi\Omega_z{\theta_0}}D(\tp)$ is equivalent to $D(\tp)$ for any $\theta_0\in[0,2\pi)$ because 
\beq\label{thetatrans}
e^{-\mi\theta_0\Omega_z}D(\tp) = e^{-\mi\Omega_z{\theta_0\over 2}}D(\tp)e^{\mi\Omega_z{\theta_0\over 2}}.
\eeq   
Namely, under the transformation $e^{-\mi\Omega_z{\theta_0\over 2}}$, $D(\tp)$ is transformed into $D'(\tp)=e^{-\mi\theta_0\Omega_z}D(\tp)$ without affecting the reps of elements in $G_f(k)$. In the case $\theta_0=0$, the nonzero hermitian hamiltonian can be chosen as $h_k=\veps_k\begin{pmatrix}0 & -\mi\Omega_z\\ \mi\Omega_z & 0\end{pmatrix} = \veps_k \mu_y\otimes \Omega_z$. The Hamiltonian with nonzero $\theta_0$ can be obtained with the transformation (\ref{thetatrans}) of the second rep. This gives rise to a set of RMNZM.

2) If $$D(\p) \Omega_z \not= -\Omega_z D(\tp),$$ 
then (\ref{anticommuP}) requires that $\theta_0$ is either $0$ or $\pi$. Namely, there are only two possibilities: $D'(\p)=\pm D(\p)$. 

Firstly we consider the case $D'(\p)=D(\p)$. In later discussion, we define 
$$
A\equiv\{\mi\Omega_z,D(\tp)\}=(\mi\Omega_z)D(\p)+D(\p)(\mi\Omega_z)\not=0.
$$  

Supposing that $h_k=\begin{pmatrix}0 &H_0\\H_0^\dag & 0\end{pmatrix}$ is a possible Hamiltonian for the rep 
$$
D(g)K_{s(g)}=I_2\otimes D(g)K_{s(g)},\ \ 
D(\tp)= I_2\otimes D(\tp).$$   
It is easy to see that $H_0$ falls in the centralizer of $D(G_f(k))$, hence 
$$
H_0=r e^{\mi\alpha\Omega_z} = r(\cos\alpha + i\alpha\Omega_z)
$$ 
with $r\in\mathbb{R}, \alpha\in[0,2\pi)$. Meanwhile, $H_0$ anti-commutes with the $D(\p)$, $i.e.$
\begin{align}\label{C-calss:H0general}
D(\p)(\cos\alpha \!+\! \mi\sin\alpha\Omega_z)\!\!=\!\!- \left( \cos\alpha \!+\! \mi\sin\alpha\Omega_z \right)D(\p).
\end{align}
We will show that if $r\not=0$,  there exist a contradiction. For $r\not=0$, the above equation (\ref{C-calss:H0general}) is equivalent to 
\Beq
2(\cos\alpha) D(\p)=-\sin\alpha\left[(\mi\Omega_z)D(\p)+D(\p)(\mi\Omega_z)\right],\Eeq
namely, 
$$
2(\cos\alpha) D(\p)=-(\sin\alpha) A.
$$
If there exists a solution, then $\sin\alpha\not=0$ (because $\tp$ is invertible) and
\begin{equation}\label{C-class:The solution}
2\cot\alpha=-A[D(\p)]^{-1}.
\end{equation}
Notice that 
\begin{align}
AD(\p)=&(\mi\Omega_z)D^2(\p)+D(\p)(\mi\Omega_z)D(\p)\notag\\
=&D^2(\p)(\mi\Omega_z)+D(\p)(\mi\Omega_z)D(\p)\notag\\
=&D(\p)[D(\p)(i\Omega_z)+(i\Omega_z)D(\p)]=D(\p)A.\notag
\end{align}
So, $[A,D(\p)]=0\Rightarrow AD^{-1}(\p)=D^{-1}(\p)A$. Denoting $B\equiv AD^{-1}(\p)=D^{-1}(\p)A$, $i.e.$ 
\beq\label{DefB}
B=(\mi\Omega_z)+D(\p)(\mi\Omega_z)D^{-1}(\p),
\eeq
then we have $B\not=0$ (since $A\not=0$) and $\operatorname{Tr\ } B=0$. 

Actually, $B$ is also a centre of $D(G_f(k))$. For any $g\in G_f(k)$, 
\begin{align}
&\left[D(g)K_{s(g)}\right] B \left[ D(g)K_{s(g)}\right]^{-1}\notag\\
=& \left[D(g)K_{s(g)}\right]\left[ (\mi\Omega_z)+D(\p)(\mi\Omega_z)D^{-1}(\p)  \right]   \left[ D(g)K_{s(g)}\right]^{-1}\notag\\
=& \left[D(g)K_{s(g)}\right] (\mi\Omega_z) \left[D(g)K_{s(g)}\right]^{-1} \notag\\
&+ \left[D(g)K_{s(g)}D(\p)\right](\mi\Omega_z) \left[D(g)K_{s(g)}D(\p)\right]^{-1}\notag\\
=&(\mi\Omega_z) + \left[D(\p)D(g')K_{s(g')}\right](\mi\Omega_z) \left[D(\p)D(g')K_{s(g')}\right]^{-1}\notag\\
=& (\mi\Omega_z) + D(\p)(\mi\Omega_z)D^{-1}(\p) \notag\\
=& B,
\end{align}
where $g'=\p^{-1}g\p\in G_f(k)$. Thus, $B=\lambda e^{\mi\Phi\Omega_z},\lambda\not=0$.  From (\ref{C-class:The solution}), we learn that 
\begin{align}
2\cot\alpha =- B=-\lambda e^{\mi\Phi\Omega_z},\ \ \lambda\not=0.
\end{align}
Therefore, $\Phi$ is either $0$ or $\pi$. However, in both cases, $\operatorname{Tr\ } B\propto \lambda\not=0 $, which is contradicting with (\ref{DefB}).  So, the parameter $r$ in (\ref{C-calss:H0general}) must be zero, which means $h_k=0$.  This case does not yield reducible nonzero modes.

For the second case, $D'(\tp)=-D(\tp)$, namely for the rep 
$$
\tilde D(g)K_{s(g)} =I_2\otimes D(g)K_{s(g)},\ \ 
\tilde D(\tp) = \mu_z\otimes D(\p), 
$$
there exists a nonzero hermitian Hamiltonian $h_k=\veps_k\mu_x\otimes I$. This gives rise to a set of RMNZM.

(b3) When $D(G_f(k))$ is of the quaternionic class $\mathbb{H}$, the rep $D(H_f(k))$ is a direct sum of two equivalent irreps.   Generally, the centralizer of $D(G_f(k))$ is $\left\{\begin{pmatrix} aI & bI \\ -b^*I & a^*I \end{pmatrix},a,b\in\mathbb{C} \right\}$, whose unitary elements form an $SU(2)$ group $\left\{\begin{pmatrix} aI & bI \\ -b^*I & a^*I \end{pmatrix},aa^*+bb^*=1,a,b\in\mathbb{C} \right\}$.  Since the unitary matrix $X=\left[ D(\p)\m^{-1}(\p)\right]$ falls in the centralizer of $D(G_f(k))$, we can write
\Beq
X=\cos\theta_0+\sin\theta_0\left[ n_x(\mi\Omega_x)+n_y(\mi\Omega_y)+n_z(\mi\Omega_z) \right]
\Eeq
 with $\theta_0\in[0,2\pi), n_x^2+n_y^2+n_z^2=1$. Denoting $\Omega_n\equiv \pmb n\cdot\pmb\Omega=n_x\Omega_x+n_y\Omega_y+n_z\Omega_z$, then
 \begin{equation}\label{H-class:DDpGamman}
X=e^{\mi\theta_0\Omega_n}=\cos\Phi I+\mi\sin\theta_0\Omega_n.
 \end{equation}

From (\ref{XDX}), we have $e^{-\mi\theta_0\Omega_n}D(\tp)=D(\tp)e^{\mi\theta_0\Omega_n}$, 
which is equivalent to 
\beq\label{H-calss:GammaP}
-\sin\theta_0(\mi\Omega_n)D(\p)=\sin\theta_0 D(\p)(\mi\Omega_n).
\eeq

1) When the following relation holds 
$$
D(\p) \Omega_{n_0} = -\Omega_{n_0} D(\tp)
$$ 
for certain $\pmb n_0\in \bm{S}^2$, then any value $\theta_0\in [0,2\pi)$ in the above $\pmb n_0$ direction satisfies (\ref{H-calss:GammaP}). The discussions in the $\mathbb{C}$-class also applies here.  In the case $\theta_0=0$, one possible Hamiltonian is $h_k=\veps_k\begin{pmatrix}0 & \mi\Omega_z\\-\mi\Omega_z & 0\end{pmatrix}=\veps_k\mu_y\otimes \Omega_z$. This gives rise to a set of RMNZM.

2) When $$D(\p) \Omega_{n} \not= -\Omega_{n} D(\tp)$$ is true for all $\pmb n\in S^2$, then (\ref{H-calss:GammaP}) requires that $\theta_0$ is either $0$ or $\pi$, namely, there are only two possibilities $D'(\tp)=\pm D(\tp)$. 

Similar to the discussion in the $\mathbb{C}$-class, the case $D'(\tp)=D(\tp)$ corresponds to reducible zero modes, and the case $D'(\tp)=-D(\tp)$ corresponds to reducible nonzero modes.

\subsection{when the $\tp$ is anti-unitary (particle-hole-like, such as $\tp=\mathcal{IP}$)}\label{Appsubsec:tpanti}

If $G_f(k)$ is anti-unitary, we can multiply $\tp$ with an anti-unitary element $T_0\in G_f(k)$  to obtain a unitary chiral-like symmetry $\tp' = \tp T_0$, and then the discuss for the case with unitary $\tp$ and with anti-unitary $G_f(k)$ can be applied. Therefore, we only need to consider the case in which $G_f(k)$ is unitary.

When $G_f(k)$ is unitary, then the unitary elements in the centralizer of $D(G_f(k))$ form a $U(1)$ group $\{e^{\mi\theta}I; \theta\in[0,2\pi)\}$.  Lemma\ref{lemma2.1} indicates that $X$ belongs to the above $U(1)$ group. Letting $X=e^{\mi\theta_0}$, then $D'(\tp)K=e^{-\mi\theta_0}D(\tp)K$ can be transformed into $D(\tp)K$ by adjusting the phase of the bases. Hence $D'(F_f(k))$ is always equivalent to $D(F_f(k))$. Then there always exists a Hamiltonian  $h_k=\veps_k(\sin{\theta_0\over2}\sigma_x+\cos{\theta_0\over2}\sigma_y)\otimes I_n$   which commutes with $\tilde D(h)=D(h)\oplus D(h)$ and anti-commutes with $\tilde D(\tp)K = D(\tp)\oplus e^{-\mi\theta_0}D(\tp)K$. This gives rise to a set of RMNZM.

\section{Proof of theorem \ref{NoCNoZM} and consequences of $\C$ symmetry} \label{App.:thrm2}

\subsection{\bf Proof of the theorem \ref{NoCNoZM}} 

The theorem states that without `non-diagonal' (effective) charge conjugation symmetry $\C$, no zero modes are protected at any given momentum $k$.

At momentum $k$, the electron creation operators $C^\dag_{k}=[c_{k1}^\dag,c_{k2}^\dag,...c_{kN}^\dag]$ carry a representation of the point group $G_f(k)$ denoted as $M(G_f(k))$. Generally, $M(G_f(k))$ can be reduced into direct sum of irreps 
$$
M(G_f(k)) = d_1(G_f(k)) \oplus d_2(G_f(k)) \oplus ...
$$ 
under the bases $[b_{k1}^\dag, b_{k2}^\dag, ..., b_{kN}^\dag] = [c_{k1}^\dag, c_{k2}^\dag, ..., c_{kN}^\dag] U$. Since $\p$ transforms creation operators into annihilation operators, we denote $[\tilde b_{k1}, \tilde b_{k2}, ..., \tilde b_{kN}] = \widehat{\p} [b_{k1}^\dag, b_{k2}^\dag, ..., b_{kN}^\dag] \widehat{\p} ^{-1}$. Since $\p G_f(k) = G_f(k) \p$, $[\tilde b_{k1}, \tilde b_{k2}, ..., \tilde b_{kN}]$ carry a rep of $G_f(k)$,
\[
\tilde M(G_f(k)) = \tilde d_1(G_f(k)) \oplus \tilde d_2(G_f(k)) \oplus ...
\]
with $\tilde d_a(g) = {\omega_2(g,\tp)\over\omega_2(\tp,\tp^{-1}g\tp) } K_{s(\p)} d_a(\tp^{-1}g\tp) K_{s(\p)}$ for $a=1,2,...$ and $g\in G_f(k)$.

In the direct sum space of $d_a(G_f(k)) \oplus \tilde d_a(G_f(k))$, the group $F_f(k)=G_f(k) + \p G_f(k)$ is presented as
\beq
&&\!\!\!\!\!\!\!\!\!\! D_a(g)K_{s(g)}=\begin{pmatrix} d_a(g) & 0\\ 0 & \tilde d_a(g)\end{pmatrix}K_{s(g)},\ \ \ g\in G_f(k), \label{Rep_g}\\
&&\!\!\!\!\!\!\!\!\!\! D_a(\tp)K_{s(\tp)}=\Bmat 0& \omega_2(\p,\p)d_a(\p^2)\\ I_{n_a} & 0\Emat K_{s(\p)},\label{Rep_P}
\eeq
with $n_a$ the dimensionality of the rep $d_a(G_f(k))$. In this space, there exists a non-degenerate Hamiltonian 
$$h_k= \varepsilon_k \mu_z\otimes I_{n_a}$$ with $\varepsilon_k\neq 0$ 
lifting all the zero modes. Notice that the above rep (\ref{Rep_g}) and (\ref{Rep_P}) of $F_f(k)$ may be reducible if $d_a(g) =\tilde d_a(g)$ for all $g\in G_f(k)$. Therefore, in electron bases the nonzero energy modes have two possible resources:\

\indent (I) a composite irrep $D_a(F_f(k))$, in which the restricted rep $D_a(G_f(k))=d_a(G_f(k))\oplus\tilde d_a(G_f(k))$ is reducible and $d_a\neq\tilde d_a$, give rise to a set of {\it irreducible nonzero modes}; \\
\indent (II) a direct sum of two simple irreps $D_a(F_f(k)) = \mathcal D_+(F_f(k))\oplus \mathcal D_-(F_f(k))$, in which $\mathcal D_+(G_f(k))=d_a(G_f(k)) =\mathcal D_-(G_f(k))=\tilde d_a(G_f(k))$, couple with each other to form a set of {\it reducible minimal nonzero modes}.

The above theorem implies that any simple irrep
%any set of irreducible zero modes contained in representations 
of $F_f(k)$ at arbitrary $\pmb k$ can find its coupling partner to form nonzero energy modes. Hence we conclude that without additional `non-diagonal' symmetries like the (effective) charge conjugation symmetry, zero energy modes are unstable at any given $\pmb k$. 
%{\it there only exist minimal nonzero modes}.\\

\subsection{Extension and induction: from the rep of a normal subgroup $H$ to the full group $G\rhd H$}

Now we consider the case in which the system has a (effective) charge conjugation symmetry $\mathcal{C}$ at momentum $\bm k$. Denoting 
$$
F_f^c(k) = F_f(k) +  \C F_f(k) 
$$ 
as the complete little co-group at $\pmb k$, then we have $F_f^c(k)/F_f(k)\cong {Z}_2$.  Similarly, we can define $G_{f}^c(k)$ with $G_{f}^c(k)=G_{f}(k) + \mathcal C G_{f}(k)$ and $G^c_f(k)/G_f(k)\cong {Z}_2$. In the presence of $\mathcal{C}$, a set of minimal nonzero modes can be turned into zero modes (called $\C$-zero modes) in the following two ways:  \\
(1)  Two simple irreps contained in RMNZM which are previously coupled become uncoupleable and form reducible minimal $\C$-zero modes;  \\
(2) A set of irreducible nonzero modes or a set of RMNZM are turned into a set of  irreducible $\C$-zero modes.% which has no coupling partner.

As will be seen, the above ways correspond to two kinds of operations on irreps, namely, the extension from an irrep of $F_f(k)$ to that of $F_f^{c}(k)$ and the induction from an irrep of $F_f(k)$ to that of $F_f^{c}(k)$.

{\bf Extension} Supposing $H$ is a normal subgroup of $G$, namely $H\lhd G$, and $d_1(H)$ is a $n_d$-dimensional projective irrep of $H$, if there exists an irrep $D_1(G)$ of $G$ having factor system $\omega_2(g_1,g_2), g_{1,2}\in G$ and the same dimensionality $n_d$ such that the restricted rep $D_1(H)$ is irreducible and is identical to $d_1(H)$, namely,
\Beq
D^{(\nu)}(h)=d^{(\nu)}(h),\ \ \mbox{for\ all\ } h\in H,
\Eeq
then we dub $D^{(\nu)}(G)$ as an {\it extension} of irrep $d^{(\nu)}(H)$, and claim $d^{(\nu)}(H)$ to be extendible to $G$ with the factor system $\omega_2$. If such extension $D^{(\nu)}(G)$ exists, then the following two conditions should be satisfied for $d^{(\nu)}(H)$:\\
\indent (A) the irrep $\omega_2^{-1}(g^{-1},g)\omega_2(g^{-1},h)\omega_2(g^{-1}h,g)d_1(g^{-1}hg)$ is equivalent to $K_{s(g)}d_1(h)K_{s(g)}$ for any $g\in G$. Namely,  for a given unitary element $u\in G$ there exist a matrix $X$ such that for any $h\in H$
\[
{\omega_2(u^{-1},h)\omega_2(u^{-1}h,u) \over \omega_2(u^{-1},u)} d_1(u^{-1}hu) = Xd_1(h)X^{-1},
\]
and for a given anti-unitary element $a\in G$ there exist a matrix $Y$ such that for any $h\in H$
\[
{\omega_2(a^{-1},h)\omega_2(a^{-1}h,a) \over \omega_2(a^{-1},a)} d_1(a^{-1}ha) = Yd_1^*(h)Y^{-1}.
\]
\indent (B) $D(g_1)D(g_2)=\omega_2(g_1,g_2)D(g_1g_2)$ for any $g_1,g_2\in G$.

{\bf Induction} Otherwise, if such irrep $D^{(\nu)}(G)$ does not exist, but there exists an irrep $\mathscr D^{(\tilde\nu)}(G)$ having factor system $\omega_2$ and with dimensionality multiple of $n_d$ such that $d^{(\nu)}(H)$ is contained in the restricted rep $\mathscr D^{(\tilde\nu)}(H)$ after it being reduced into direct sum of irreps, then we call $\mathscr D^{(\tilde\nu)}(G)$ an {\it induction} of $d^{(\nu)}(H)$ to $G$ with the factor system $\omega_2$.

For instance, if $G$ is an anti-unitary group with $H$ its maximal unitary subgroup, then the irreps of $G$ with a given factor system $\omega_2$ have three classes, the real class $\mathbb R$, complex class $\mathbb C$ and quaternionic class $\mathbb H$. In the $\mathbb{R}$ class, the restrict rep to $H$ is irreducible. Hence, $\mathbb R$ class is an extension of irrep from $H$ to $G$. On the other hand, the $\mathbb C$ and $\mathbb H$ classes are induction of certain irreps from $H$ to $G$.

\subsection{Consequences of $\C$ symmetry} \label{App:consequenceofC}

Now, we discuss the effect of the (effective) charge-conjugation symmetries $\C$. For simplicity we consider the case in which there is only one charge conjugation operation $\C$.

%$\mathcal L_{\veps_k}\oplus \mathcal L_{-\veps_k}$
Firstly, we consider a special case in which $\C$ preserves the supporting space of every minimal nonzero modes $\mathcal L_{\pm\veps_k}$, namely we first assume that $\C$ does not cause additional degeneracy for the energy eigenstates with $\varepsilon_k\not=0$. Such charge-conjugation symmetries all have the following property: {\it for any given minimal nonzero mode, there exists an extension from $F_f(k)$ to $F^c_f(k)$.} 
Recall that a set of minimal nonzero modes is either \\
\indent (I) a set of irreducible nonzero modes which carries a composite irrep of $F_f(k)$ denoted as $D_a(F_f(k))$ with the restricted rep $D_a(G_f(k))=d_a(G_f(k))\oplus\tilde d_a(G_f(k))$, or \\
\indent (II) a set of RMNZM which form a direct sum of two simple irreps of $F_f(k)$ denoted as $D_a(F_f(k))=\mathcal D_+(F_f(k))\oplus \mathcal D_-(F_f(k))$). 

Then the presence of $\C$ can affect the set of minimal nonzero modes in different ways. Here we discuss them separately. From now on, we will omit the subscript $_a$ in the labels of reps $D_a$ and $d_a, \tilde d_a$. \\

(I) For the irreducible nonzero modes $D(F_f(k))$. 

1) If the irrep $D(G_f^c(k))$ is an induction from $d(G_f(k))$ (or from $\tilde d(G_f(k))$), namely, 
there does not exist any extension from $d(G_f(k))$ or $\tilde d(G_f(k))$ to $G_f^{c}(k)$, then $D(F^c_f(k))$ corresponds a set of {\it irreducible zero modes} of $F_f^c(k)$.

2) If there exists an extension from $d(G_f(k))$ (or $\tilde d(G_f(k)))$ to $d(G^c_f(k))$ (or $\tilde d(G^c_f(k))$), then $D(F^c_f(k))$ remains to be an {\it irreducible nonzero mode} of $F_f^{c}(k)$.

(II) For the RMNZM $D(F_f(k)) = \mathcal D_+(F_f(k))\oplus \mathcal D_-(F_f(k))$. 

1) If there exist an extension from $\mathcal D_{\pm}(F_f(k))$ to $\mathcal D_{\pm}(F_f^c(k))$ (hence there must exist an extension from $d(G_f(k))$ to $d(G_f^c(k))$), then $\mathcal D_{\pm}(F^c_f(k))$ %will form two sets of irreducible zero modes 
are two simple irreps of $F_f^c(k)$.  If $\mathcal D_{\pm}(F^c_f(k))$ satisfy the conditions listed  in Tab.\ref{tab:RMNZM}, namely if they can couple to each other, then $\mathcal D_{+}(F^c_f(k))\oplus \mathcal D_{-}(F^c_f(k))$ remains to be a set of {\it RMNZM}; on the other hand, if they cannot couple to each other, (for example, $\C$ provides different quantum numbers in $\mathcal D_{+}(F^c_f(k))$ and $\mathcal D_{-}(F^c_f(k))$ to prevent them from coupling to each other), then $\mathcal D_{+}(F^c_f(k))\oplus \mathcal D_{-}(F^c_f(k))$ will form  {\it reducible minimal zero modes} of $F^c_f(k)$.

2) If $\C$ couples $\mathcal D_{+}(F_f(k))$ with $\mathcal D_{-}(F_f(k))$ such that the direct sum space $\mathcal D_{+}(F_f(k))\oplus \mathcal D_{-}(F_f(k))$ forms a simple irrep $D(F_f^c(k))$ for $F_f^c(k)$ (namely $D(F_f^c(k))$ and
%an induction form $\mathcal D_{+}(F_f(k))$ to $F_f^c(k)$ and 
%there is no extension from $d_a(G_f(k))$ to $G_f^c(k)$ (namely if 
$D(G_f^c(k))$ are both irreducible), then $D(F_f^c(k))$ forms {\it irreducible zero modes} of $F_f^c(k)$. 

3) If $\C$ couples $\mathcal D_{+}(F_f(k))$ with $\mathcal D_{-}(F_f(k))$ such that the direct sum space $\mathcal D_{+}(F_f(k))\oplus \mathcal D_{-}(F_f(k))$ forms a composite irrep $D(F_f^c(k))$ for $F_f^c(k)$ (namely $D(F_f^c(k))$ is irreducible but $D(G_f^c(k))$ is reducible), then $D(F_f^c(k))$ will form {\it irreducible nonzero modes} of $F_f^c(k)$. \\ 

Secondly, we consider a different case in which $\C$ `joins' a set of minimal nonzero modes $\mathcal L_{\pm\veps'_k}$ with another set of minimal nonzero modes $\mathcal L_{\pm\veps''_k}$. Actually, since $G_f^c(k)/G_f(k) = F_f^c(k)/F_f(k) \cong {Z}_2$, as illustrated in theorem \ref{thrm:1} the charge conjugation $\C$ joins at most two irreps of $G_f(k)$ to form an irreducible rep of $G_f^c(k)$, namely $\C$ involves at most two sets of minimal nonzero modes. The direct sum space $\mathcal L = \mathcal L_{\pm\veps'_k}\oplus \mathcal L_{\pm\veps''_k}$ form a representation space of the group $F_f^c(k)$ as well as $G_f^c(k)$. If $\mathcal L$ contains two irreps of $G_f^c(k)$, namely if $\mathcal L$ carries a composite irreps of $F_f^c(k)$ or a direct sum of two simple irreps (which are couplable partner of each other) of $F_f^c(k)$, then $\mathcal L$ forms an enlarged set of minimal nonzero modes of $F_f^c(k)$ with doubled degeneracy; otherwise, if $\mathcal L$ contains four irreps of $G^c_f(k)$, then the degeneracy of the nonzero modes will not be enlarged. The $\C$ enlarged degeneracy occurs in Sec.\ref{sec:p4gmz2T}, where the $\C$ symmetry causes the 2-fold degeneracy of the bands. Such kind of charge conjugation $\C$ does not yield zero modes in general. 
%, otherwise $\mathcal L$ carries one irrep of $G_f^c(k)$ but four irreps of $G_f(k)$, which contradicts with theorem \ref{thrm:1}. }

In the above discussion, we listed the cases in which zero modes can appear due to the presence of charge conjugation symmetry $\C$.  We call these zero modes as {\it `minimal $\mathcal{C}$-zero modes'}, including the {\it irreducible $\C$-zero modes} and the {\it reducible minimal $\mathcal{C}$-zero modes}.  A natural question is,  given a set of minimal nonzero modes at $k$, is it always possible to construct a $\mathcal{C}$ to turn them into minimal $\C$-zero modes? The answer is positive, as stated in the theorem \ref{Thm:Czeromode}. The proof of this theorem is given in the following App. \ref{App.:thrm3}.

\section{The proof of the theorem \ref{Thm:Czeromode}}\label{App.:thrm3}

We prove theorem \ref{Thm:Czeromode} by illustrating the existence of $\C$ to turn any given minimal nonzero modes of $F_f(k)$ at momentum $\bm k$ into minimal $\C$-zero modes. We assume that $\C$ preserves the space of minimal nonzero modes of $F_f(k)$.

\subsection{When the $\tp$ is unitary}

When $\tp$ is unitary, the simplest method to obtaining zero modes is to set $\mathcal{C} = \tp$ (or $\mathcal{C} = \tp g$ where $g\in G_f$ is unitary). In this case all minimal nonzero modes at momentum $\bm k$ are turned into zero modes. In the following we will discuss other solutions of $\mathcal C$ which turn certain nonzero modes into zero modes.

\subsubsection{When $G_f(k)$ is unitary}\label{Gfunitary}

{\it Extension: minimal $\C$-zero modes from RMNZM $\mathcal D_{+}(F_f(k))\oplus \mathcal D_{-}(F_f(k))$.}  In this case, we have $D_{+}(\tp)\not\cong D_{-}(\tp)$ and $\mathcal D_{+}(G_f(k))\cong \mathcal D_{-}(G_f(k))$.  When $\C$ is unitary then the induced rep from $\mathcal D_+(G_f(k))$ or $\mathcal D_-(G_f(k))$ to $G^c_f(k)$ is reducible. Namely, it is not possible to do an induction from $\mathcal D_+(G_f(k))$ or $\mathcal D_-(G_f(k))$ such that the induced irrep of $G^c_f(k)$ is equivalent to $\mathcal D_{+}(G_f(k))\oplus \mathcal D_-(G_f(k))$ when restricted to $G_f(k)$. 
Hence, minimal $\C$-zero modes can only be obtained by extensions from $\mathcal D_{\pm}(F_f(k))$ to $\mathcal D_{\pm}(F_f^c(k))$.

There are different ways to do the $Z_2$ extensions, we only consider the simplest case by assuming that the effective charge-conjugation $\C$ commutes with $F_f(k)$, namely $[\C, F_f(k)]=0$. 
In the direct sum space $\mathcal D_{+}(F_f(k))\oplus \mathcal D_{-}(F_f(k))$, $\C$ is reducible. Consequently one obtains two non-equivalent extension of $\mathcal D_{+}(G_f(k)) = \mathcal D_{-}(G_f(k))$ by the two 1D reps of ${Z}_2$, resulting in $\mathcal D_{+}(G_f^c(k))\not\cong \mathcal D_{-}(G_f^c(k))$ (here $\C$ provides two different quantum numbers in $\mathcal D_{+}(G_f^c(k))$ and $\mathcal D_{-}(G_f^c(k))$ respectively).  Considering $\mathcal D_{+}(\tp)\not\cong \mathcal D_{-}(\tp)$, now one obtains a set of {\it reducible minimal $\C$-zero modes} for the group $F^c_f(k)$. This construction is applied in several lattice models in Sec.\ref{Sec:ModelSection}.

{\it Induction: minimal $\C$-zero modes from irreducible nonzero modes $D(F_f(k))$.} In this case, the restricted rep $D(G_f(k))$ from the irrep $D(F_f(k))$ can be reduced into a direct sum of two nonequivalent irreps, namely $D(G_f(k)) = d_1(G_f(k))\oplus d_2(G_f(k))$. It is not possible to do extension on subgroup $G_f(k)$ to obtain a set of minimal zero modes. But a ${Z}_2$ induction is straightforward if $\C$ is closed in the space $D(G_f(k))$. Then $D(\C)$ is irreducible in the space $D(G_f(k))$ since $d_1(G_f(k))$ and $d_2(G_f(k))$ are non-equivalent. Thus obtained $D(G_f^c(k))$ forms a set of {\it irreducible $\C$-zero modes. }

\subsubsection{When $G_f(k)$ is anti-unitary} 

We denote the maximum unitary subgroup of $G_f(k)$ as $H_f(k)$.

{\it Induction from irreducible nonzero modes $D(F_f(k))$.} Since the restricted rep $D(G_f(k))$ is reducible, assume that $D(G_f(k))=d_{1}(G_f(k))\oplus d_{2}(G_f(k))$. No matter $d_1(G_f(k))$ and $d_2(G_f(k))$ are equivalent or not, and no matter which class $\mathbb{R},\mathbb{H},\mathbb{C}$ they belong to, a ${Z}_2$ induction to is always possible. 

We first present the constraints of the charge-conjugation $\mathcal{C}$ if the induction is applicable.  Since the charge-conjugation $\mathcal{C}$ permutes the two subspace $d_1(G_f(k))$ and $d_2(G_f(k))$, we assume that in the space $D(G_f(k))$ the charge conjugation $\C$ is represented as
\beq
&&\!\!\! D(\mathcal{C})=\begin{pmatrix} 
0& \omega_2(\mathcal{C},\mathcal{C})d_1(\mathcal{C}^2)[L(\mathcal{C})]^{-1}\\ L(\mathcal{C}) & 0
\end{pmatrix},\label{App.D:theinducedCC}
\eeq
where $L(\mathcal{C})$ is an invertible matrix. Then from the theory of induced rep, for any $g\in G_f(k)$ we have
\beq
&&\!\!\!\!\!\!\!\!\!\!\!\! D(g)K_{s(g)}=\notag\\
&&\!\!\!\!\!\!\!\!\!\!\!\!\!\!\!\begin{pmatrix}
d_1(g)\!K_{s(g)}& 0\\
0 & {\omega_2(g,\mathcal{C})\over\omega_2(\mathcal{C},\mathcal{C}^{-1}g\mathcal{C}) }\!  L(\mathcal{C})d_1\!(\mathcal{C}^{-1}\!g\mathcal{C})K_{s(g)}\!L^{-1}(\mathcal{C}) 
\end{pmatrix}.\label{App:F2}
\eeq
Since $D(G_f(k))=d_{1}(G_f(k))\oplus d_{2}(G_f(k))$, for any $g\in G_f(k)$ we have,
\beq\label{d1d2}
{\omega_2(g,\mathcal{C})\over\omega_2(\mathcal{C},\mathcal{C}^{-1}\!g\mathcal{C})}\!L(\mathcal{C})d_1(\mathcal{C}^{-1}\!g\mathcal{C}) \!K_{s(g)}\! L\!^{-1}\!(\mathcal{C})\! = \!d_2(g)\!K_{s(g)}.
\eeq
Especially, for unitary elements $h\in H_f(k)$, one has
\Beq\label{App.D:case1Cond1}
\operatorname{Tr}\left[ {\omega_2(h,\mathcal{C})\over\omega_2(\mathcal{C},\mathcal{C}^{-1}h\mathcal{C}) }d_1(\mathcal{C}^{-1}h\mathcal{C})\right]=\operatorname{Tr}\left[ d_2(h)\right].
\Eeq

Supposing that $D(G_f^c(k))$ form an irrep of $G_f^c(k)$, then there is only one hermitian centralizer for $D(G_f^c(k))$, namely the identity matrix. If $d_2(G_f(k))\not\cong d_1(G_f(k))$, then the induced rep $D(G_f^c(k))$ is always irreducible; if $d_2(G_f(k))=d_1(G_f(k))$, to make $D(G_f^c(k))$ irreducible the following condition should be satisfied,
\beq\label{App.D:case1Cond2}
L(\mathcal{C})\not=  \omega_2(\mathcal{C},\mathcal{C})d_1(\mathcal{C}^2)L^{-1}(\mathcal{C}).
\eeq
The simplest example for an induction with $d_2(G_f(k))=d_1(G_f(k))$ is the group $G_f^c(k)=G_f(k)+{\C}G_f(k)$ with $\C$ commuting with $G_f(k)$, $\omega_2(\mathcal{C},\mathcal{C})=1$ and $L(\mathcal{C}) =\mi I$, namely $D(\C)=\Gamma_y=\tau_y\otimes I$. Since $D(\C)$ cannot be block diagonalized via real matrix, $D(\C)$ and the rep matrices of antiunitary elements in $G_f(k)$ cannot be simultaneously block diagonalized, meaning that $D(G_f^c(k))$ is irreducible. Hence $D(F_f^c(k))$ correspond to a set of {\it irreducible $\C$-zero modes}. This kind of induction is applied in the construction of lattice models in Sec.\ref{Sec:ModelSection}. 

{\it Remark: }We comment that for unitary $G_f(k)$ (without using Frobenius reciprocity) with $d_{1}(G_f(k))= d_{2}(G_f(k))$, the induction will be impossible.  To see this, notice that the condition (\ref{App.D:case1Cond2}) becomes the following one for unitary groups,
\beq\label{app:unitary}
L(\mathcal{C})\not\propto  \omega_2(\mathcal{C},\mathcal{C})d_1(\mathcal{C}^2)[L(\mathcal{C})]^{-1}.
\eeq
However, from (\ref{d1d2}) one can also show that $\left[d_1(\mathcal{C}^2) L^{-2}(\mathcal{C})\right]$ commutes with the irreducible representation $d_1(G_f(k))$. According to Schur's lemma, $\left[d_1(\mathcal{C}^2) L^{-2}(\mathcal{C})\right]$ are proportional to identity matrix, which is contradict with the above condition (\ref{app:unitary}).\\

{\it Minimal $\C$-zero modes from the RMNZM $ \mathcal D_{+}(F_f(k))\oplus \mathcal D_{-}(F_f(k))$.} In this case, $\mathcal D_{+}(G_f(k))$ and $\mathcal D_{-}(G_f(k))$ are equivalent. One can always do  inductions following the previous construction for the irreducible nonzero modes with $d_1(G_f(k))=d_2(G_f(k))$. 

{\it Remark:} We comment that, for $\mathcal D_{\pm}(G_f(k))$ belong to the $\mathbb{C}$ class and $\mathcal D_{+}(F_f(k))\cong \mathcal D_{-}(F_f(k))$ [corresponding to the case (4) in the Tab.\ref{tab:RMNZM}], one can just do a ${Z}_2$ induction on each irreducible blocks for the subgroup $H_{f}(k)$. Then the new irrep of the group $G_f^c(k)$ is of the $\mathbb{R}$ class.  This also leads to a set of {\it reducible minimal $\mathcal{C}$-zero modes} according to the case (2) in the Table.\ref{tab:RMNZM}.
Moreover,  one can also do extensions to obtain a set of {\it reducible minimal $\C$-zero modes} following the discussions in the subsection \ref{Gfunitary}.

\subsection{When the $\tp$ is anti-unitary}

In this case, we only need to consider the case in which $G_f(k)$ is unitary (otherwise it reduces to the case with unitary $\tp$). A set of RMNZM are formed by two equivalent representations, namely $\mathcal D_+^{\mathbb{R}}(F_f(k))\oplus \mathcal D_-^{\mathbb{R}}(F_f(k))$ both of which belong to the $\mathbb{R}$ class, and a set of irreducible nonzero modes are all irreps of $F_f(k)$ which belong to the $\mathbb{H}$ or the $\mathbb{C}$ class. We denote them by $D^{\mathbb{H}}(F_f(k))$ and $D^{\mathbb{C}}(F_f(k))$ respectively.

{\it For the RMNZM $ \mathcal D_+^{\mathbb{R}}(F_f(k))\oplus \mathcal D_-^{\mathbb{R}}(F_f(k))$.} Similar to the case where $\tp$ is unitary, a ${Z}_2$ induction is impossible, but a ${Z}_2$ extensions is possible to turn the set of RMNZM into a set of {\it reducible minimal $\mathcal{C}$-zero modes. 
}

{\it For the irreducible nonzero modes $D^{\mathbb{C}}(F_f(k))$.} Similar to the case where $\tp$ is unitary, a $Z_2$ induction is possible to turn the set of irreducible nonzero modes into a set of {\it irreducible $\mathcal{C}$-zero modes.} 

{\it For the irreducible nonzero modes $D^{\mathbb{H}}(F_f(k))$.} In this case, a single charge operation $\C$ fails to turn the set of irreducible nonzero modes $D^{\mathbb{H}}(F_f(k))$ into $\mathcal{C}$-zero modes. 

But we can adopt two charge conjugation operators $\mathcal{C}_1,\mathcal{C}_2$ to obtain a set of $\mathcal{C}$-minimal zero modes. In this case we have $F_f^{c_1\times c_2}(k)/F_f(k)\cong {Z}_{2}\times {Z}_{2}$, and the representations of the two charge operators $\mathcal{C}_1$ and $\mathcal{C}_2$ should anti-commute with each other $ D(\mathcal{C}_1)D(\mathcal{C}_2)=-D(\mathcal{C}_2)D(\mathcal{C}_1) $ from the Clifford theory\cite{PRofFinie}.

The process can be understood as an extension followed by an induction. We first do a ${Z}_2$ extension by the $\C_1$ to make the two irreps contained in $D^{\mathbb{H}}(H_f(k))$ to be non-equivalent in $D^{\mathbb{H}}(H^{c_1}_f(k))$. Hence $\C_1$ is represented as $D(\C_1)=\begin{pmatrix}\mi I &0\\0 & -\mi I \end{pmatrix}$. Now the irreducible nonzero mode $D^{\mathbb{H}}(F^{c_1}_f(k))$ is changed to the $\mathbb{C}$ class. Then we do a ${Z}_2$ induction by $\C_2$, where $\C_2$ is represented as $D(\C_2)=\begin{pmatrix}0& \mi I \\ \mi I &0\end{pmatrix}$. After the two steps, a set of {\it irreducible $\mathcal{C}$-zero modes} for the group $F_f^{c_1\times c_2}(k)$ are obtained.

\bibliography{kdotp_Ref}

\end{document}